\documentclass[journal,10pt,draftclsnofoot,onecolumn]{IEEEtran} %draftclsnofoot
\pdfoutput=1 %for arxiv

\newtheorem{theorem}{Theorem}[section]
\newtheorem{lemma}[theorem]{Lemma}
\newtheorem{definition}[theorem]{Definition}
\newtheorem{corollary}[theorem]{Corollary}
\newtheorem{remark}{Remark}
\newtheorem{example}{Example}
\newtheorem{assumption}{Assumption}
\newtheorem{proposition}{Proposition}

\newcommand{\uX}{\underline{X}}
\newcommand{\ux}{\underline{x}}

\newcommand{\wuthetan}{\widehat{\underline{\theta}}_n}

\newcommand{\utheta}{\underline{\theta}}
\newcommand{\wcalI}{\widehat{\calI}}

% useful abbreviations

\newcommand{\bitm}{\begin{itemize}}
\newcommand{\eitm}{\end{itemize}}
\newcommand{\beqa}{\begin{eqnarray}}
\newcommand{\eeqa}{\end{eqnarray}}
\newcommand{\beqas}{\begin{eqnarray*}}
\newcommand{\eeqas}{\end{eqnarray*}}
\newcommand{\baln}{\begin{align}}
\newcommand{\ealn}{\end{align}}
\newcommand{\balns}{\begin{align*}}
\newcommand{\ealns}{\end{align*}}

\newcommand{\probSimplex}[1] {\mathcal{P}\parenth{#1}}

\newcommand{\kldist}[2] {\mathrm{D}\! \parenth{#1\|#2}}
\newcommand{\E}{\mathbb{E}}
\renewcommand*{\P} {\mathbb{P}}

\newcommand{\prob}[1] {\P\parenth{#1}}

\newcommand{\alphabet}[1] { {\mathsf #1}}

\newcommand{\R}{\mathbb{R}}

\newcommand{\reals}{\R}

\newcommand{\parenth}[1] {\left(#1\right)}

\newcommand{\brackets}[1] {\left[#1\right]}

\newcommand{\abs}[1] {\left|#1\right|}

\newcommand{\pmf}[2] {P_{#1}\!\parenth{#2}}

\newcommand{\calI}{\mathcal{I}}

\newcommand{\I}{\mathrm{I}}

\renewcommand*{\H}{\text{H}}

\newcommand{\X}{\bX}
\newcommand{\x}{\mathbf{x}}

\newcommand{\Y}{\mathbf{Y}}
\newcommand{\Z}{\mathbf{Z}}
\newcommand{\N}{\mathbf{N}}

%\newcommand{\mI}{\mathcal{I}}

%from Julian Wolfson http://students.washington.edu/julianw/index.php?page=latextricks

%
%
%\def\argmin{\mathop{\arg\,\min}\limits}%
%\def\argmax{\mathop{\arg\,\max}\limits}%
\def\argmin{\mathop{\arg\,\!\min}\limits}%
\def\argmax{\mathop{\arg\,\!\max}\limits}%

\newcommand{\calX}{\alphabet{X}}

\newcommand{\bW}{\mathbf{W}}

\newcommand{\bX}{\mathbf{X}}
\newcommand{\bY}{\mathbf{Y}}
\newcommand{\bZ}{\mathbf{Z}}

\newcommand{\bx}{\mathbf{x}}

\newcommand{\bz}{\mathbf{z}}
\newcommand{\allX}{\underline{\bX}}

\newcommand{\allx}{\underline{\bx}}
\newcommand{\buz}{\underline{\bz}}

\newcommand{\buZ}{\underline{\mathbf{Z}}}

\newcommand{\uV}{\underline{\mathrm{V}}}

\newcommand{\A}{\mathrm{A}}

%\newcommand{\A}{\textbf{A}}

%%http://jblevins.org/log/latex#independence

\newcommand{\calS}{\mathcal{S}}
\newcommand{\calB}{\mathcal{B}}

\newcommand{\calP}{\mathcal{P}}

\newcommand{\setmi}[2]{  [#1]\backslash\{#2\}}

\newcommand{\Phat}{\widehat{P}}
\newcommand{\Ahat}{\widehat{A}}
\newcommand{\calPhat}{\widehat{\mathcal{P}}}
\newcommand{\uAhat}{\widehat{A}}
\newcommand{\Phatrob}{\Phat_{\mathrm{rob}}}
\newcommand{\calIhat}{\widehat{\mathcal{I}}}

\newcommand{\std}{\mathrm{std}} 

\newcommand{\ph}{\phantom{0}}
\usepackage{cite} %this was commented out.  It groups together as [4]-[9] otherwise it [4],[5],...
\usepackage{mathtools}
\usepackage{verbatim}
\usepackage{amsmath,amssymb,amsfonts}
\usepackage{algorithmic}
\usepackage{array}
\usepackage[tight,footnotesize]{subfigure}

\usepackage{graphicx}
\usepackage{overpic}
\usepackage{subfigure}
\usepackage{bm}
\usepackage{bbm}
\usepackage{url}

%to keep figures sized when switch between single and double column
\makeatletter
\newlength \figwidth
\if@twocolumn
  \setlength \figwidth {0.9\columnwidth}
\else
  \setlength \figwidth {0.45\textwidth}
\fi
\makeatother

\begin{document}
\title{Directed Information Graphs}

\author{Christopher~J.~Quinn, %,~\IEEEmembership{Student~Member,~IEEE,}
        Negar~Kiyavash,~\IEEEmembership{Senior~Member,~IEEE,}\\
        and~Todd~P.~Coleman,~\IEEEmembership{Senior~Member,~IEEE}% <-this % stops a space
\thanks{This material was presented in part at Int. Symp. Inf. Theory 2011 \cite{quinn2011equivalence}, NetSciCom 2011 \cite{quinn2011generalized}, Conf. Dec. and Control 2011, and Int. Symp. Inf. Theory 2013. 
C.~J.~Quinn was supported by the Department of Energy Computational Science Graduate Fellowship, which is provided under Grant DE-FG02-97ER25308.  This work was supported by AFOSR Grant FA9550-11-1-0016, MURI under AFOSR Grant FA9550-10-1-0573, NSF Grant CCF-1065352, and NSF Grant CCF-0939370.
}% <-this % stops a space
\thanks{C.~J.~Quinn performed this work at the Department of Electrical and Computer Engineering, Coordinated Science Laboratory, University of Illinois, Urbana, Illinois 61801.  He is now with the  School of Industrial Engineering at Purdue University, West Lafayette, Indiana 47907 
        {\tt\small cjquinn@purdue.edu}}%
\thanks{N. Kiyavash is with the Department of Industrial and Enterprise Systems Engineering, Coordinated Science Laboratory, University of Illinois, Urbana, Illinois 61801
        {\tt\small kiyavash@illinois.edu}}%
\thanks{T.~P.~Coleman is with the Department of Bioengineering, University of California, San Diego, La Jolla, CA 92093
        {\tt\small tpcoleman@ucsd.edu}}%
}

\maketitle

\begin{abstract}

We propose a graphical model for representing networks of stochastic processes, the minimal generative model graph.  It is based on reduced factorizations of the joint distribution over time.  We show that under appropriate conditions, it is unique and consistent with another type of graphical model, the directed information graph, which is based on a generalization of Granger causality.  We demonstrate how directed information quantifies Granger causality in a particular sequential prediction setting.  We also develop efficient methods to estimate the topological structure from data that obviate estimating the joint statistics.  One algorithm assumes upper-bounds on the degrees and uses the minimal dimension statistics necessary. In the event that the upper-bounds are not valid, the resulting graph is nonetheless an optimal approximation.  Another algorithm uses near-minimal dimension statistics when no bounds are known but the distribution satisfies a certain criterion. Analogous to how structure learning algorithms for undirected graphical models use mutual information estimates, these algorithms use directed information estimates.  We characterize the sample-complexity of two plug-in directed information estimators and obtain confidence intervals.  For the setting when point estimates are unreliable, we propose an algorithm that uses confidence intervals to identify the best approximation that is robust to estimation error.   Lastly, we demonstrate the effectiveness of the proposed algorithms through analysis of both synthetic data and real data from the Twitter network.  In the latter case, we identify which news sources influence users in the network by merely analyzing tweet times.
\end{abstract}

% IEEEtran.cls defaults to using nonbold math in the Abstract.
% This preserves the distinction between vectors and scalars. However,
% if the journal you are submitting to favors bold math in the abstract,
% then you can use LaTeX's standard command \boldmath at the very start
% of the abstract to achieve this. Many IEEE journals frown on math
% in the abstract anyway.

% Note that keywords are not normally used for peerreview papers.
\begin{IEEEkeywords}
Graphical models, network inference, causality, generative models, directed information.
\end{IEEEkeywords}

% For peer review papers, you can put extra information on the cover
% page as needed:
% \ifCLASSOPTIONpeerreview
% \begin{center} \bfseries EDICS Category: 3-BBND \end{center}
% \fi
%
% For peerreview papers, this IEEEtran command inserts a page break and
% creates the second title. It will be ignored for other modes.
\IEEEpeerreviewmaketitle

%\today
%\hfill \today

\section{Introduction}
% The very first letter is a 2 line initial drop letter followed
% by the rest of the first word in caps.
%
% form to use if the first word consists of a single letter:
% \IEEEPARstart{A}{demo} file is ....
%
% form to use if you need the single drop letter followed by
% normal text (unknown if ever used by IEEE):
% \IEEEPARstart{A}{}demo file is ....
%
% Some journals put the first two words in caps:
% \IEEEPARstart{T}{his demo} file is ....
%
% Here we have the typical use of a "T" for an initial drop letter
% and "HIS" in caps to complete the first word.
%
%
\IEEEPARstart{R}{esearch} in many disciplines, including biology, economics, social sciences, computer science, and physics, involves large networks of interacting agents.  For instance, neuroscientists seek to determine which neurons communicate with which other neurons.  Investors want to learn which stocks' fluctuations effect their portfolios.  Computer security experts seek to uncover which computers in a network infected others with malicious software.  Often, researchers can observe time-series of agents' activity, such as neural spikes, stock prices, and network traffic. This work develops tools that analyze network time-series  to identify the underlying causal influences between agents.

A natural question is whether influences between agents can be learned by analyzing their activity.  Suppose an advertiser wants to target a specific population of users in a micro-blogging network, such as Twitter.  The advertiser sees that many users follow several major news companies and celebrities   (see Figure~\ref{fig:intro:social_network}) and wants to identify which sources have strong influence on the users to decide whom to pay to advertise. The advertiser can observe time-series of activities, such as message times (see Figure~\ref{fig:intro:tweet_times}), and wants to calculate a measure of influence using the data.  Clearly, efficent algorithms and reliable methods to compute statistics from data are desired. Moreover,  direct and indirect influences must be distinguished.  That is, if  both celebrities $A$ and $B$ influence user $Y$, the advertiser must be sure both are direct influences, and not that $A$ influences $B$ who in turn influences $Y$.

% \figwidth is used to handle switching between single and double columns
\begin{figure}[t]
\centering
\includegraphics[width=\figwidth]{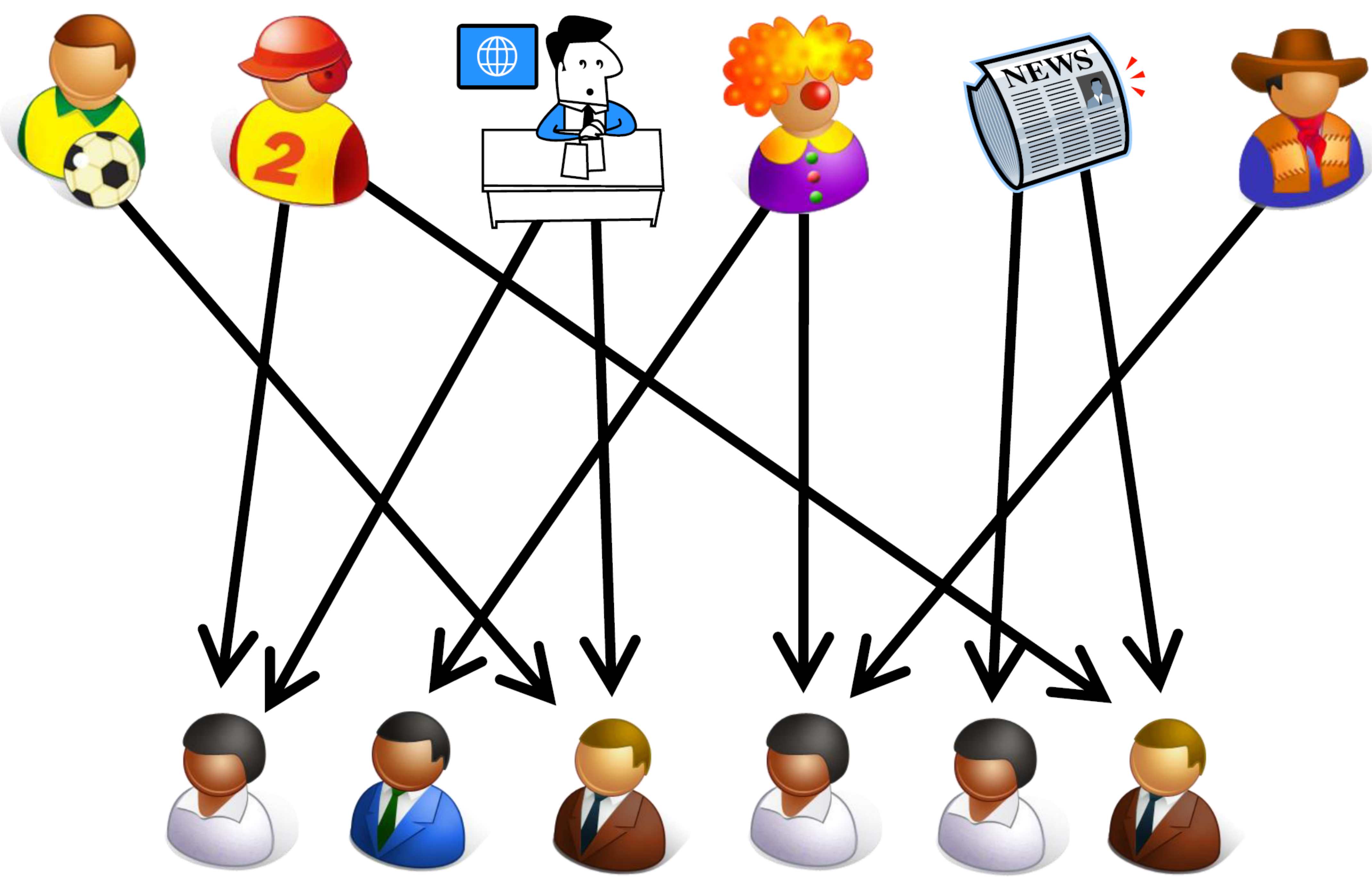}
\caption{This graph depicts the influences between celebrities and news corporations (top) to a population of users (bottom) in an example of an online social network.  For applications such as word-of-mouth advertisement, it is more useful to know the graph of influences between agents than the ``friend'' or ``follower'' graphs.  However, influences are harder to identify and must be inferred from agent activity.} \label{fig:intro:social_network}
\end{figure}

\begin{figure}[t]
\centering
\includegraphics[width=\figwidth]{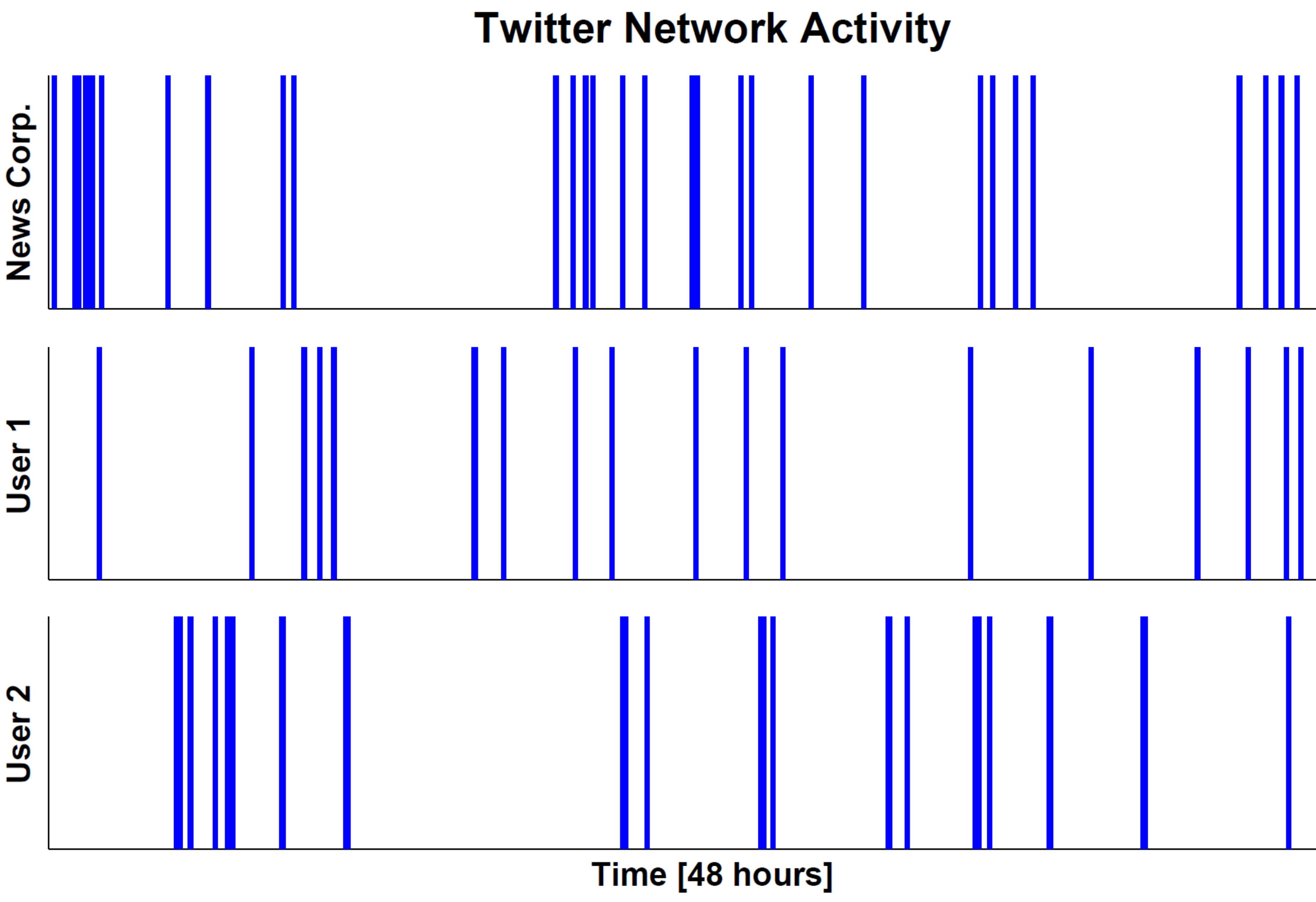}
\caption{This plot shows example microblogging activity of a news corporation and two users over two days in an online social network. Vertical lines depict each time a message was posted by that agent.  A major research goal is to infer whether, and how strongly, the news corporation influences the users by analyzing these time-series.} \label{fig:intro:tweet_times}
\end{figure}

\subsection{Our Contribution}

This work develops methods to address these issues. It proposes a graphical model, the minimal generative model graph, based on reduced factorizations of the joint distribution over time.  Agents are depicted by nodes and directed edges represent inferred influences.  We show how under certain conditions, it is both unique and consistent with another graphical model, the directed information graph \cite{ quinn2011estimating, amblard2011directed}, that is motivated by Granger causality \cite{granger1963economic, granger1969investigating}.  The latter uses the information-theoretic quantity directed information \cite{marko1973bidirectional}, which generalizes the concept of Granger causality.  Clive Granger, a Nobel laureate, proposed a methodology for deciding when, in a statistical sense, one process $\X$ causally influences another process $\Y$ in a network \cite{granger1963economic, granger1969investigating}.  Directed information has been used before to measure Granger causality.   Here we clarify how directed information connects with Granger's original principle beyond agreeing with ``strong'' Granger causality \cite{chamberlain1982general, florens1982note} which uses conditional independence tests.

For networks with large numbers of nodes, such as millions of people in a social network, having efficient algorithms that recover the graphical models is critical.  We propose and prove the correctness of algorithms to infer the graph.  Two algorithms are described without assumptions on the network.  Another, more efficient, algorithm is discussed that recovers the graph when  upper bounds on the in-degrees are known.  We show this algorithm returns an optimal approximation when the bounds are not valid.  We also present a modified version of the algorithm that returns a bounded-degree approximation that is robust to estimation errors.  We prove the correctness of an adaptive algorithm that we proposed in an earlier work \cite{quinn2011estimating}.  We also show that even if the assumption of no instantaneous influence is invalid, our algorithms will still recover the strictly causal influences.

Statistics must often be estimated from data. We identify sample complexity---how much data is needed for reliable estimation---and confidence intervals for plug-in empirical and parametric estimators of directed information.  

Although the proposed framework is theoretically-grounded, we demonstrate its utility by identifying influences in the Twitter network.  Specifically, we record and analyze activity of 16 news corporations and 48 user accounts to infer which  corporations influenced which users in tweeting about events in the Middle East.  Using only knowledge of the message times, not the follower graph or message content, the algorithms accurately infer many influences.

\subsection{Organization}
The paper is organized as follows. We next discuss related work.   
In Section~\ref{sec:background}, we establish definitions and notations.  In Section~\ref{sec:MGM_DI}, we discuss the graphical models.  In Section~\ref{sec:ident_struct}, we propose algorithms that identify the graph when no assumptions about the topology are made.  In Section~\ref{sec:ident_str_side_info}, we describe one algorithm that uses knowledge of in-degree bounds to more efficiently identify the exact graph and another algorithm that finds approximations which are robust to estimation error.  In Section~\ref{sec:estimation}, we evaluate the sample complexity of two plug-in estimators for directed information.   In Section~\ref{sec:sim}, we demonstrate the algorithms using simulations.  In Section~\ref{sec:analysis_twitter} we analyze social network data from Twitter.  The proofs are contained in the appendices.

\subsection{Related Work}

We next discuss related work on directed information and graphical models for networks of processes.

\subsubsection{Directed information}
Directed information was introduced by Marko \cite{marko1973bidirectional} and independently rediscovered by Kamitake et al. \cite{kamitake1984time} and Rissanen and Wax \cite{rissanen1987measures}.  Rissanen and Wax proposed their work as an extension of Granger's framework \cite{granger1969investigating}.  Massey modified Marko's work for the setting of communication channels \cite{massey1990causality}. Directed information has been used in a variety of settings, e.g., to characterize the capacity of channels with feedback \cite{kramer1998directed, kramer2003capacity, tatikonda2009capacity, permuter2009finite, naiss2013extension}, to quantify achievable rates for source encoding with noiseless feed-forward \cite{venkataramanan2007source, naiss2013computable}, and for feedback control \cite{elia2004when,  martins2008feedback, liu2009convergence, gorantla2012interplay}.   Permuter et al. explored its relevance to gambling, hypothesis testing, and portfolio theory \cite{permuter2009interpretations}.      

Applications of directed information include neuroscience studies \cite{quinn2011estimating,kim2011Granger, liu2010information, so2012assessing}, analysis of gene regulatory data \cite{rao2008using}, and social networks \cite{versteeg2012information, versteeg2013information}.  We note that \cite{versteeg2012information} estimated pairwise directed information only for known follower--followee pairs, and \cite{versteeg2013information} used message content.   In our application, we use causally conditioned directed information and do not use prior knowledge of follower relationships or message content.

There have been several works on estimating directed information and many more on entropy and mutual information.  A parametric estimator for directed information was independently proposed by Quinn et al. \cite{quinn2011estimating} and Kim et al. \cite{kim2011Granger}.   Consistency was shown in \cite{quinn2011estimating}. Jiao et al. developed a consistent, universal estimation scheme for directed information in the finite-alphabet setting using context-tree weighting and showed sample complexity results \cite{jiao2012universal}.  Frenzel and Pompe \cite{frenzel2007partial} adapted a k-nearest neighbors mutual information estimator from \cite{kraskov2004estimating}. Data partitioning based methods were investigated in \cite{rao2007motif, rao2008using, liu2009high}.  Wu et al. \cite{wu2012learning} identified sample complexity of mutual information for i.i.d. data.  Our proof for the plug-in empirical estimator has an analogous structure.

\subsubsection{Graphical models for networks of processes}

There is a rich body of literature on graphical models for i.i.d. random variables, such as Markov and Bayesian networks \cite{koller2009probabilistic}.  Dynamic Bayesian networks extend Bayesian networks to the setting of processes \cite{murphy2002dynamic}, representing each variable in each process as a separate node.  

This work follows another approach, developing graphical models where each node represents a whole process.  There have been several works proposing such graphical models based on Granger causality.  Dalhaus \cite{dahlhaus2000graphical} and Eichler \cite{eichler2007granger} developed graphs for autoregressive time-series.  Using conditional independence tests proposed as ``strong'' Granger causality in \cite{florens1982note} and \cite{chamberlain1982general}, Eichler \cite{eichler2012graphical} identified what conditional independencies must hold for a joint distribution $P_{\allX}$ to ``satisfy'' a given graph $G$.  For instance, if there is no edge $\X \to \Y$ in $G$, then $\Y$ should be causally conditionally independent of $\X$ given the rest of the network.  That is equivalent to a directed information being zero.  Eichler \cite{eichler2012graphical} also identified sufficient conditions when pairwise independencies imply a (non-minimal) generative model.  Our preliminary work \cite{quinn2011equivalence} independently proved that result; here we cite \cite{eichler2012graphical} to simplify the proofs.  As this work considers producing a particular graph $G$ that best represents $P_{\allX}$, we  address additional issues such as minimality, uniqueness, algorithms, estimators, and graph approximation.

Although \cite{eichler2012graphical} did not explicitly define a graphical model such as directed information graphs, the pairwise independence conditions it studied can be naturally extended to do so.  Our earlier work \cite{quinn2011estimating} and Amblard and Michel's \cite{amblard2011directed} both proposed directed information graphs, independent of Eichler \cite{eichler2012graphical} and each other.  Recently, \cite{amblard2014causal} explored how instantaneous influences can effect directed information graphs \cite{eichler2012graphical}.

\subsubsection{Structure learning for graphical models for networks of processes}

We first note that there is a large body of literature on exact and heuristic structure learning for graphical models of i.i.d. random variables.  Some comprehensive references are \cite{koller2009probabilistic, koski2012review,koller2009probabilistic}.  Some of the algorithms we develop are analogous, such as testing each edge using a directed information where for Markov networks mutual information would be used.  A recent work by Wu et al. \cite{wu2012learning} proposed a consistent algorithm to identify Markov networks with bounded degree.  It used more numerous and complicated tests than our algorithm for the analogous problem, essentially searching for subsets of parents of both $X$ and $Y$ to conclude whether the edge $X\!\!-\!\! Y$ exists.  

Several works have investigated structure learning specifically for multivariate autoregressive models.  Materassi and Innocenti \cite{materassi2010topological} proposed an algorithm analogous to Chow and Liu's \cite{chow1968approximating}, for the case when the network is a tree.  Materassi and Salapaka \cite{materassi2012problem} extended \cite{materassi2010topological} to larger classes of topologies.  Tan and Willsky also investigated learning tree structured networks \cite{tan2011sample}. An alternative approach identified sparse networks used group Lasso \cite{bolstad2011causal}.  

In \cite{quinn2013efficient}, we proposed an algorithm to approximate the network with a tree topology, analogous to Chow and Liu's work in \cite{chow1968approximating}, but it was not restricted to autoregressive processes as \cite{materassi2010topological, materassi2012problem, tan2011sample, bolstad2011causal} were.  Lastly, we proposed an adaptive structure learning algorithm for directed information graphs in \cite{quinn2011estimating}; here we prove correctness.

\section{Background} \label{sec:background}

\subsection{Notation and Information-Theoretic Definitions}
\begin{itemize}
\item For a sequence $a_1,a_2,\ldots$, denote $(a_i,\ldots,a_j)$ as $a_i^j$ and $a^k := a_1^k$.

\item Denote $[m] := \{1,\ldots,m\}$ and the power set $2^{[m]}$ on $[m]$ to be the set of all subsets of $[m]$.  

\item For any finite alphabet $\calX$, denote the space of probability measures on $\calX$ as $\probSimplex{\calX}$.

\item Throughout this paper, we will consider $m$ finite-alphabet, discrete-time random processes over a horizon $n$.  Let $\calX$ denote the alphabet. Denote the $i$th random variable at time $t$ by $X_{i,t}$, %: \Omega \to \calX$, 
the $i$th random process as $\bX_i = (X_{i,1},\ldots,X_{i,n})^{\top}$, the whole collection of all $m$ random processes as $\allX= (\bX_1,\ldots,\bX_m)^{\top}$, and a subset of $K$ processes indexed by $A \subseteq [m]$ as $\allX_{A}= (\bX_{A(1)},\ldots,\bX_{A(K)})^{\top}$.

\begin{remark} We consider the finite-alphabet setting to simplify the presentation.  The results easily extend to more general cases.
\end{remark}

\item Denote the conditional distribution and {\it causally conditioned} distribution \cite{kramer1998directed} of $\X_i$ given $\X_j$ as
\beqa
\!\!\!\!P_{\X_i|\X_j=\x_j}(\x_i) \!\!\!\!&:=&\!\!\!\! \pmf{\X_i|\X_j}{\x_i|\x_j} \nonumber \\
 \!&=&\!\!\!\! \prod_{t=1}^n \pmf{X_{i,t}|X_{i}^{t-1},X_j^n}{x_{i,t}|x_{i}^{t-1},x_j^n}  \label{eq:def:cond_distr}\\
\!\!\!\!P_{\X_i\|\X_j=\x_j}(\x_i) \!\!\!\!&:=&\!\!\!\! \pmf{\X_i\|\X_j}{\x_i\|\x_j} \nonumber \\
\!&:=&\!\!\!\! \prod_{t=1}^n \pmf{X_{i,t}|X_{i}^{t-1},X_j^{t-1}}{x_{i,t}|x_{i}^{t-1},x_j^{t-1}}\!. \label{eq:def:causal_cond}
\eeqa
Note the similarity between \eqref{eq:def:cond_distr} and \eqref{eq:def:causal_cond}, though in \eqref{eq:def:causal_cond} the present and future, $x^n_{j,t}$, is not conditioned on.  In \cite{kramer1998directed}, the present $x_{j,t}$ was conditioned on.  The reason we remove it will be made clear in Remark~\ref{rmrk:cond_fut}.

\item  Consider the set of processes $\allX_A$ for some $A \subseteq \setmi{m}{i}$.  Next consider two sets of causally conditioned distributions
$\{P_{\X_i\| \allX_A=\allx_A} \in \probSimplex{\calX}: \allx_A \in \calX^{|A|n}\}$ and 
$\{Q_{\X_i\| \allX_A=\allx_A} \in \probSimplex{\calX}: \allx_A \in \calX^{|A|n}\}$ along with
a marginal distribution $P_{\allX_A} \in \probSimplex{\calX^{|A|n}}$.  Then the conditional Kullback-Leibler (KL) divergence between causally conditioned distributions is given by
\begin{align}  %single column version
\kldist{P_{\X_i\| \allX_A}}{Q_{\X_i\| \allX_A} | P_{\allX_A}} := \sum_{t=1}^n \sum_{\ux_A^{t-1} %\in \calX^{|A|(t-1)}
} \!\kldist{P_{X_{i,t}| \uX_A^{t-1}=\ux_A^{t-1}}}{Q_{X_{i,t}| \uX_A^{t-1}=\ux_A^{t-1}}}  P_{\uX_A^{t-1}}(\ux_A^{t-1}). \label{eq:def:condKL}
\end{align}
The following lemma will be useful throughout:
\begin{lemma} [\!\cite{cover2006elements} pg.~29] \label{lemma:iffKLdivergenceZero}
$\kldist{P_{\X_i\| \allX_A}}{Q_{\X_i\| \allX_A} | P_{\allX_A}}= 0$ if and only if $P_{\X_i\| \allX_A =\allx_A}(\x_i) = Q_{\X_i\| \allX_A =\allx_A}(\x_i)$
for all $\allx_A$ such that $P_{\allX_A}(\allx_A)>0$.
\end{lemma}

\item
Let $i,j \in [m]$ and $A \subseteq \setmi{m}{i,j}$.  The mutual information,  {\it directed information} \cite{marko1973bidirectional}, and causally conditioned directed information \cite{kramer1998directed} are given by
\begin{align}
\I(\X_j;\X_i) :=& \ \kldist{P_{\X_i,\X_j}}{P_{\X_i} P_{\X_j}} = \kldist{P_{\X_i|\X_j}}{P_{\X_i} | P_{\X_j}} \label{eqn:defn:MutualInformation} \\
=& \sum_{t=1}^n \I(X_j^n; X_{i,t} | X_i^{t-1}) \nonumber \\
\I(\X_j \to \X_i) :=& \  \kldist{P_{\X_i\|\X_j}}{P_{\X_i} | P_{\X_j}}\nonumber %\label{eqn:defn:DirectedInformation} 
\\
=& \sum_{t=1}^n \I(X_j^{t-1}; X_{i,t} | X_i^{t-1}) \nonumber\\
\I(\X_j \to \X_i \| \allX_A) :=&  \  \kldist{P_{\X_i\| \allX_{A \cup \{j\}}}}{P_{\X_i\|\allX_A} | P_{\allX_{A \cup \{j\}} }} \label{eqn:defn:ccDirectedInformation} 
\\
=& \sum_{t=1}^n \I(X_j^{t-1}; X_{i,t} | X_i^{t-1},\uX_A^{t-1}). \nonumber
\end{align}
Mutual information and directed information are related.  However, while mutual information quantifies statistical correlation (in the colloquial sense of statistical interdependence), directed information quantifies statistical \emph{causation}.  We later justify this statement showing that directed information is a general formulation of Granger causality.  Note that
$\I(\X_j;\X_i)=\I(\X_i;\X_j)$, but $\I(\X_j \to \X_i) \neq \I(\X_i \to \X_j)$ in general.

\begin{remark} \label{rmrk:cond_fut}
In \eqref{eq:def:causal_cond} and \eqref{eqn:defn:ccDirectedInformation}, there is no conditioning on the present $X_{j,t}$.  This follows Marko's definition \cite{marko1973bidirectional} and is consistent with Granger causality \cite{granger1969investigating}.  Massey \cite{massey1990causality} and Kramer \cite{kramer1998directed} later included conditioning on $X_{j,t}$ for the specific setting of communication channels.  Although there is a small delay between the channel input and output, since the causation is \emph{already known} for that setting, it is notationally convenient to use synchronized time.  That necessitates conditioning on the present $X_{j,t}$.
\end{remark}

We have the following corollary.  \begin{corollary} For $i,j \in [m]$ and $A \subseteq [m] \backslash \{i,j\}$, 
$\I(\X_j \to \X_i \| \allX_A) = 0$ if and only if \[ P_{\X_i\|\X_j=\x_j,\allX_A=\allx_A}(\x_i) = P_{\X_i\|\allX_A=\allx_A}(\x_i), \quad \forall \ \allx_{\{j\} \cup A} \in \calX^{(|A|+1)n} :  P_{\allX_{\{j\} \cup A}}(\allx_{\{j\} \cup A})>0\]  
\end{corollary} %
\begin{proof}
The proof follows immediately from Lemma~\ref{lemma:iffKLdivergenceZero} and \eqref{eqn:defn:ccDirectedInformation}.
\end{proof}

In this case, we say $\X_i$ is causally independent of $\X_j$  causally conditioned on $\allX_A$.  We also denote $\I(\X_j \to \X_i \| \allX_A) = 0$ as $\X_j \to \allX_A \to \X_i$, a causal (in the sense of Kramer's causal conditioning \eqref{eq:def:causal_cond}) Markov chain.  This entails that $X_{i,t}$ is independent of $X_j^{t-1}$ given $\uX_A^{t-1}$ and $X_i^{t-1}$.  

\item Let $G = (V,E)$ denote a directed graph.  For each edge $(u,v)\in E$, $u$ is called the \emph{parent} and $v$ is the \emph{child}.  

\end{itemize}

\section{ Minimal Generative Models and Directed Information Graphs} \label{sec:MGM_DI}

We now consider the problem of graphically representing causal relationships between stochastic processes in a network. We will examine two definitions of ``causal'' in this section.  Both are based on observed time-series.

\subsection{Minimal Generative Models} \label{subsec:MGM}

A deterministic dynamical system is characterized by a set of differential or difference equations.  Those equations describe how the past state of the system influences how the state will evolve. For stochastic dynamical systems, the induced joint distribution factorizes in an analogous manner. 

Let $\allX$ be a network of $m$ random processes with a joint distribution $P_{\allX}$.  The system dynamics are fully described by $P_{\allX}$.  First factorize $P_{\allX}$ over time,
$P_{\allX}(\allx)  =  \prod_{t=1}^n P_{ \allX_{t}   |  \allX^{t-1}  } (  \allx_{t} |  \allx^{t-1}  ). $ If the processes at time $t$ are independent given the full past $\allX^{t-1}$,
\begin{eqnarray}
P_{\allX}(\allx) \!\!\!&=&\!\!\!  \prod_{t=1}^n  \prod_{i=1}^m P_{X_{i,t}| \allX^{t-1}} (x_{i,t}| \allx^{t-1} ),  \label{eq:MGM:2}
\end{eqnarray} we say $P_{\allX}$ is \emph{strictly causal}.  Using causal conditioning notation \eqref{eq:def:causal_cond}, \eqref{eq:MGM:2} can be written as 
\begin{eqnarray}
P_{\allX}(\allx) \!\!&=&\!\!  \prod_{i=1}^m  P_{\bX_i \parallel \allX_{\setmi{m}{i}} } (\bx_i \parallel \allx_{\setmi{m}{i}}). \label{eq:MGM:strc_pos}
\end{eqnarray} By factorizing over time first, each $\bX_i$ is still conditioned on the full past of every other process. 

\begin{assumption}\label{def:spat_cond_ind}
For the remainder of this paper, we only consider joint distributions that are positive, i.e. $P_{\allX}(\allx) > 0$ for all $\allx \in \calX^{mn}$, and strictly causal, satisfying \eqref{eq:MGM:strc_pos}.
\end{assumption}

\begin{remark}
The positivity assumption avoids degenerate cases that arise with purely deterministic relationships between variables.  Granger argued that strict causality is a valid assumption if the sampling rate is high enough and relevant processes are observed \cite{granger1969investigating, granger1980testing}. Strict causality will be essential to obtain a {\em unique}, minimal factorization.  Furthermore, we show in Appendix~\ref{sec:indent_struct:info_decom} that even if $P_{\allX}$ is not strictly causal, our results will apply to the strictly causal part of $P_{\allX}$, i.e. the right hand side of \eqref{eq:MGM:strc_pos}. 
\end{remark}

The following lemma shows that a large class of generative models are strictly causal.
\begin{lemma}
If a set of stochastic processes $\allX$ has a generative model of the form $X_{i,t} = g_{i,t}(\allX^{t-1}, N_{i,t}), $where $\{ g_{i,t} \}_{i\in [m], 1\leq t \leq n}$ are deterministic functions and the random variables $ \{X_{i,0}\}_{i\in[m]}  \cup \{N_{i,t}\}_{i\in [m], 1\leq t \leq n} $ are mutually independent, then $P_{\allX}$ is strictly causal.
\end{lemma}
\begin{IEEEproof}
Let $i \in [m]$ and $B\subseteq [m]\backslash\{i\}$ be arbitrary.  Then 
\beqa
\I(X_{i,t} ; \allX_{B, t} | \allX^{t-1}) &=& \I( g_{i,t}(\allX^{t-1}, N_{i,t}) ; \{  g_{j,t}(\allX^{t-1}, N_{j,t}) \}_{j \in B}    | \allX^{t-1}) \nonumber \\ %\label{eq:strcaus:case:1} \\
&\leq& \I(  N_{i,t} ; \{  N_{j,t} \}_{j \in B}    | \allX^{t-1}) , \label{eq:strcaus:case:2} \\
&=& \I(  N_{i,t} ; \{  N_{j,t} \}_{j \in B}    ) , \label{eq:strcaus:case:3} \\
&=& 0 , \label{eq:strcaus:case:4} 
\eeqa %
where \eqref{eq:strcaus:case:2} follows from the data processing inequality and conditioning on $\allX^{t-1}$, \eqref{eq:strcaus:case:3} follows because $\allX^{t-1}$ is a function of past noises, $ \{X_{l,0}\}_{l\in[m]}  \cup \{N_{l,t'}\}_{l\in [m], 1\leq t' \leq t-1}$, which are independent of noises at time $t$, $\{N_{l,t}\}_{l\in [m]}$, and \eqref{eq:strcaus:case:4} follows because $i \not \in B$ so  $N_{i,t}$ is independent of $\{  N_{j,t} \}_{j \in B}$. This result implies \eqref{eq:MGM:2} holds.
\end{IEEEproof}

In \eqref{eq:MGM:strc_pos}, there could be unnecessary dependencies.  We next remove those.  For each process $\bX_i$, let $A(i) \subseteq \setmi{m}{i}$ denote a subset of other processes.  Define the corresponding induced probability distribution $P_A$, \begin{eqnarray}
P_A(\allx) = \prod_{i=1}^m  P_{\bX_i \parallel \allX_{A(i)}}(\bx_i \parallel \allx_{A(i)}). \label{eq:MGM:PA}
\end{eqnarray}  We want to pick the parent sets $\{A(i)\}_{i=1}^m$ with minimal cardinalities that preserve the full dynamics of $P_{\allX}$, \beqa \kldist{P_{\allX}}{P_A } =0. \label{eq:MGM:6}
\eeqa

\begin{definition}
\label{def:min_gen_mod}
For a joint distribution $P_{\allX}$, a \emph{minimal generative model} is a function $A:[m]\to 2^{[m]}$ where the cardinalities of the parent sets $\{|A(i)|\}_{i=1}^m$ are minimal such that \eqref{eq:MGM:6} holds.
\end{definition}

By non-negativity of the KL divergence and the factorizations \eqref{eq:MGM:strc_pos} and \eqref{eq:MGM:PA}, \eqref{eq:MGM:6} corresponds to \begin{eqnarray}
\kldist{P_{\bX_i \parallel \allX_{\setmi{m}{i}} }}{P_{\bX_i \parallel \allX_{A(i)} } \big| P_{\allX_{\setmi{m}{i}} }} \!\!\!&=&\!\!\!0 \label{eq:MGM:a1}
\end{eqnarray} for all $i \in [m]$.  Thus, the parent sets can be chosen separately to satisfy \eqref{eq:MGM:6}. 

\begin{remark}
Minimal generative models are well defined for any $P_{\allX}$. An explicit (non-)parametric generative model could be unknown or not exist.
\end{remark}

We now define a corresponding graphical model.
\begin{definition}\label{def:MGMgrph} A \emph{minimal generative model graph} is a directed graph for a minimal generative model $A$, where each process is represented by a node, and there is a directed edge from $\bX_k$ to $\bX_i$ for $i,k \in [m]$ iff $k \in A(i)$.
\end{definition}

The goal is to produce a single graph for the network structure, but in general,  models might not exist or be unique.  Appendix~\ref{sec:indent_struct:info_decom} shows that if $P_{\allX}$ is not strictly causal, no strictly causal $P_A$ satisfies \eqref{eq:MGM:6}.  The following example shows that if positivity is violated, the  model graph need not be unique.
\begin{example}\label{exmpl:nonpos} Let $\N$ and $\N'$ be mutually independent processes with i.i.d. standard normal variables.  Let $\X$, $\Y$, and $\Z$ be three processes with $X_t = N_t$, $Z_t = X_{t-1}$, and $Y_t =Z_{t-1}+N_t'.$  It is natural to posit that the minimal generative model graph {\em should} have edges $\{\X \to \Z, \Z \to \Y\}$.  However, there are two equally valid minimal generative model graphs, one with edges $\{\X \to \Z, \X \to \Y\}$ and one with edges $\{\X \to \Z, \Z \to \Y\}$. It is ambiguous which to use. 
\end{example}
We will show in Theorem~\ref{thm:MGMandDIsame} that under Assumption~\ref{def:spat_cond_ind},  models graphs exist and are unique.

We next discuss an alternative graphical model from \cite{quinn2011estimating, amblard2011directed}, which is based on the framework of Granger causality and directly identifies relationships between pairs of processes. It always exists and is unique.

\subsection{Granger Causality and Directed Information Graphs} \label{sec:GrangerDI}

An alternative approach to defining causal influences is the widely adopted Granger causality.  In the 1960s, motivated by earlier work by Wiener \cite{wiener1956theory}, Nobel laureate Clive Granger proposed \cite{granger1963economic, granger1969investigating}: ``We say that $\bX$ is causing $\bY$ if we are better able to predict [the future of] $\bY$ using all available information than if the information apart from [the past of] $\bX$ had been used.'' 
Granger's original formulation involved statistical hypothesis testing with linear models.  For the setting  of two processes, later works used directed information (in value, not in name) \cite{rissanen1987measures, bouissou1986tests, gourieroux1987kullback}. 
Directed information is equal to linear Granger causality for jointly Gaussian processes\cite{barnett2009granger, amblard2009relating}.  

Causally conditioned directed information has recently been used in the setting of networks of processes \cite{quinn2011estimating, amblard2011directed}.  Recall that Granger's statement was in terms of the value of causal side information in sequential prediction.  The next proposition shows directed information is that value for a specific yet flexible sequential prediction problem.

\begin{proposition} \label{prop:DIisGranger}
The directed information $\I(\X_j \to \X_i \| \allX_{[m]\backslash \{j\}})$ is precisely the value of the side information $X_j^{t-1}$ in terms of expected cumulative reduction in loss when sequentially predicting $X_{i,t}$ with knowledge $\allX_{[m]\backslash \{j\}}^{t-1}$, the predictors are distributions with minimal expected loss, and log loss is used.
\end{proposition}
For the formulation and proof see Appendix~\ref{app:prop:DIisGranger}.

We thus use directed information to determine the causal influences in a network in the sense of Granger.  

\begin{definition}[\!\!\cite{quinn2011estimating, amblard2011directed}]\label{def:dir_info_grph} For a set of random processes $\allX$, the \emph{directed information graph} is a directed graph where each node represents a process and there is a directed edge from  $\X_j$ to  $\X_i$ (for $i,j \in [m]$) iff
\begin{eqnarray}
 \I(\X_j \to \X_i \parallel \allX_{ \setmi{m}{i,j}}) > 0. \label{eq:def:dir_inf_grph1}
\end{eqnarray}
\end{definition}

\begin{corollary} \label{cor:DIunique}
For any $P_{\allX}$, the directed information graph exists and is unique.
\end{corollary}
\begin{IEEEproof}
The proof follows immediately as \eqref{eq:def:dir_inf_grph1} is true or false for each edge.
\end{IEEEproof}

Corollary~\ref{cor:DIunique} is good in that directed information graphs always depict a single topology. However, in some cases the graph can be misleading.  Consider the network from Example~\ref{exmpl:nonpos}.  Although the directed information graph exists and is unique, it will only have the edge $\X \to \Z$. Thus, it will appear as $\Y$ was independent which is arguably worse than the ambiguity of the two minimal generative model graphs from Example~\ref{exmpl:nonpos} with edge sets $\{\X \to \Z, \X \to \Y\}$ and $\{\X \to \Z, \Z \to \Y\}$ respectively.

In general the two graphical models can disagree.  We next show that under appropriate conditions, not only is there a unique minimal generative model graph, but it is consistent with the directed information graph.

\begin{theorem} \label{thm:MGMandDIsame}
If $P_{\allX}$ satisfies Assumption~\ref{def:spat_cond_ind}, there is a unique minimal generative model graph and it is equivalent to the directed information graph.  
\end{theorem}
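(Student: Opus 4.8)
The plan is to prove that the two graphs coincide edge by edge: fixing $i,k \in [m]$ with $i \neq k$, I will show that $k \in A(i)$ (an edge of the minimal generative model graph) if and only if $\I(\bX_k \to \bX_i \| \allX_{\setmi{m}{i,k}}) > 0$ (an edge of the directed information graph). It is cleaner to prove the contrapositive on both sides, namely that $k \notin A(i)$ iff the directed information vanishes. Two reductions will be used throughout. First, by the corollary following the definition \eqref{eqn:defn:ccDirectedInformation}, $\I(\bX_k \to \bX_i \| \allX_{\setmi{m}{i,k}}) = 0$ is equivalent to the causal Markov chain identity $P_{\bX_i \| \allX_{\setmi{m}{i}}} = P_{\bX_i \| \allX_{\setmi{m}{i,k}}}$ ($P_{\allX_{\setmi{m}{i}}}$-a.s.), since $\allX_{\setmi{m}{i}} = (\bX_k, \allX_{\setmi{m}{i,k}})$. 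Second, by the decoupled characterization \eqref{eq:MGM:a1} together with Lemma~\ref{lemma:iffKLdivergenceZero}, $A(i)$ is the minimal-cardinality subset of $\setmi{m}{i}$ for which $P_{\bX_i \| \allX_{\setmi{m}{i}}} = P_{\bX_i \| \allX_{A(i)}}$ a.s. Expanding every causally conditioned distribution through \eqref{eq:def:causal_cond} and matching one-step factors, each such equality of causally conditioned measures is equivalent to the per-time family of identities $P_{X_{i,j}\mid X_i^{j-1},\allX_{S}^{j-1}} = P_{X_{i,j}\mid X_i^{j-1},\allX_{S'}^{j-1}}$ for all $j$, which I will move between freely.

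For the easy direction, assume $k \notin A(i)$, so $A(i) \subseteq \setmi{m}{i,k} \subseteq \setmi{m}{i}$. I would use the tower property: for each $j$,
\[ P_{X_{i,j}\mid X_i^{j-1},\allX_{\setmi{m}{i,k}}^{j-1}} = \E\!\left[ P_{X_{i,j}\mid X_i^{j-1},\allX_{\setmi{m}{i}}^{j-1}} \,\big|\, X_i^{j-1},\allX_{\setmi{m}{i,k}}^{j-1}\right]. \]
By the minimality identity the inner conditional equals $P_{X_{i,j}\mid X_i^{j-1},\allX_{A(i)}^{j-1}}$, which is measurable with respect to $\sigma(X_i^{j-1},\allX_{\setmi{m}{i,k}}^{j-1})$ because $A(i) \subseteq \setmi{m}{i,k}$; hence the conditional expectation returns it, giving $P_{X_{i,j}\mid X_i^{j-1},\allX_{\setmi{m}{i,k}}^{j-1}} = P_{X_{i,j}\mid X_i^{j-1},\allX_{\setmi{m}{i}}^{j-1}}$ for every $j$. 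Taking the product over $j$ yields $P_{\bX_i \| \allX_{\setmi{m}{i,k}}} = P_{\bX_i \| \allX_{\setmi{m}{i}}}$, so the directed information is zero and there is no directed information edge. This direction uses only the tower property and no positivity.

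For the converse, assume $\I(\bX_k \to \bX_i \| \allX_{\setmi{m}{i,k}}) = 0$, i.e. $P_{X_{i,j}\mid X_i^{j-1},\allX_{\setmi{m}{i}}^{j-1}} = P_{X_{i,j}\mid X_i^{j-1},\allX_{\setmi{m}{i,k}}^{j-1}}$ for all $j$, and combine this with the minimality identity $P_{X_{i,j}\mid X_i^{j-1},\allX_{\setmi{m}{i}}^{j-1}} = P_{X_{i,j}\mid X_i^{j-1},\allX_{A(i)}^{j-1}}$ to obtain
\[ P_{X_{i,j}\mid X_i^{j-1},\allX_{A(i)}^{j-1}} = P_{X_{i,j}\mid X_i^{j-1},\allX_{\setmi{m}{i,k}}^{j-1}}. \]
Here is where positivity (Assumption~\ref{def:spat_cond_ind}) enters: the left side is a function of $(x_i^{j-1},\allx_{A(i)}^{j-1})$ and may a priori depend on the coordinate $x_k^{j-1}$, whereas the right side does not involve $x_k^{j-1}$ at all. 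Since positivity makes every conditioning configuration occur with positive density, I can compare two conditioning values differing only in $x_k^{j-1}$: the right side is unchanged, forcing the left side to be invariant in $x_k^{j-1}$. Hence $P_{X_{i,j}\mid X_i^{j-1},\allX_{A(i)}^{j-1}} = P_{X_{i,j}\mid X_i^{j-1},\allX_{A(i)\setminus\{k\}}^{j-1}}$ for all $j$, i.e. $P_{\bX_i \| \allX_{A(i)}} = P_{\bX_i \| \allX_{A(i)\setminus\{k\}}}$.

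Chaining this with the minimality identity shows that $A(i)\setminus\{k\}$ also satisfies the defining condition $P_{\bX_i \| \allX_{\setmi{m}{i}}} = P_{\bX_i \| \allX_{A(i)\setminus\{k\}}}$. If $k \in A(i)$, this is a strictly smaller valid parent set, contradicting the minimal cardinality of $A(i)$ (equivalently its uniqueness, Lemma~\ref{lem:MGMuniq}); therefore $k \notin A(i)$, which completes the equivalence and hence the proof that the two graphs are identical. The step I expect to be the main obstacle is precisely the promotion of an almost-sure equality of conditional densities into genuine functional non-dependence on $x_k^{j-1}$; the role of positivity must be spelled out carefully there, as it is exactly the hypothesis that rules out the degenerate, measure-zero conditioning configurations that would otherwise break the argument.
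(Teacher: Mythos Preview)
Your proposal is correct and follows essentially the same route as the paper: prove the contrapositive edge by edge, with the nontrivial direction being that vanishing directed information forces $k\notin A(i)$ by producing a strictly smaller valid parent set $A(i)\setminus\{k\}$, contradicting minimality. The only difference is packaging: the paper invokes Lemma~\ref{lem:margintechnq} (the intersection property, applied with $\buU=\allX_{\setmi{m}{i,k}}$ and $\buW=\allX_{A(i)}$, whose union is $\allX_{\setmi{m}{i}}$ under the hypothesis $k\in A(i)$ and whose intersection is $\allX_{A(i)\setminus\{k\}}$), whereas you prove the needed special case of that lemma inline by using positivity to vary $x_k^{j-1}$ alone and force invariance. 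Both arguments are the same idea; your exposition simply unpacks where positivity is doing the work, and your first direction fills in the tower-property step that the paper leaves implicit.
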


The proof is in Appendix~\ref{app:prf:MGMandDIsame}.  In the remainder it will become clear that it is convenient having two characterizations,  the edge test \eqref{eq:def:dir_inf_grph1} and the factorization \eqref{eq:MGM:PA}. We next consider how to efficiently find the graphs.

\section{ Graphical Model Identification -- General}\label{sec:ident_struct}
In this section, we discuss algorithms to identify the graph when no assumptions about the topology are made.  Efficiency will correspond to the dimension of the statistics that will be necessary, such as only needing joint statistics of pairs of processes as compared to the full joint distribution.  By Theorem~\ref{thm:MGMandDIsame}, the algorithms can learn the network by either testing edges \eqref{eq:def:dir_inf_grph1} or searching for parent sets \eqref{eq:MGM:PA}. 

First consider the following lemma which will motivate later algorithms and help prove their correctness.  
\begin{lemma}
\label{lem:caus_data_proc}  Let $P_{\allX}$ satisfy Assumption~\ref{def:spat_cond_ind}.  For any process $\X_i$, consider any set $B(i)$ containing the parent set $A(i)$, $A(i) \subseteq B(i) \subseteq \setmi{m}{i}$, and any subset $W(i)\subseteq\setmi{m}{i}$.  Then $\allX_{W(i)} \to \allX_{B(i)} \to \bX_i$ holds, meaning $\I(\allX_{W(i)} \to \bX_i \| \allX_{B(i)}) = 0.$ Furthermore, \beqa \I( \allX_{W(i)}  \to \bX_i) \leq \I( \allX_{B(i)} \to \bX_i), \label{eq:lem:data_proc:gen} \eeqa with equality iff $ A(i) \subseteq W(i)$.  
\end{lemma}

The proof is in Appendix~\ref{app:prf:caus_Mark_chain}.  Equation~\eqref{eq:lem:data_proc:gen} implies that any set $W(i)$ containing the full parent set captures the full dynamics, and even if $W(i)$ is only missing a single parent, it will not accurately describe $\X_i$'s dynamics. We now consider algorithms to learn the topology.

\begin{table}[t] \begin{normalsize} \begin{center}
\begin{tabular*}{\linewidth}{@{}llrr@{}}
{\bfseries Algorithm 1. MGMconstruct}\\
\hline {\bf Input:} $\mathcal{DI}_{\mathrm{MGM}}, \ m$ \\
\hline
~1. {\bf For} $i \in [m]$  \vspace{-0.15cm}\\ \vspace{-0.15cm}
~2. $\quad$ $A(i) \gets \setmi{m}{i}$ \\ \vspace{-0.15cm}
~3. $\quad $ {\bf For} $j \in A(i) $ \\ \vspace{-0.15cm}
~4. $\quad \quad$ $ B(i) \gets A(i) \backslash \{j\}$ \\ \vspace{-0.15cm}
~5. $\quad \quad$ {\bf If} $\I( \bX_{j} \to \bX_i  \parallel \allX_{ B(i)}) = 0$ \\ \vspace{-0.15cm}
~6. $\quad \quad \quad$ \( A(i) \gets B(i)\) \\
~7. {\bf Return} $\{A(i)\}_{i=1}^m$\\
\hline
\end{tabular*}\end{center}\label{alg:MGMconstruct2} \end{normalsize}
\vspace{-1cm}
\end{table}

\subsection{Algorithm~1 --- Parent Set Search}

Identifying the parent sets of each process $\bX_i$ requires determining the minimal cardinality  set $A(i)$ that satisfies \eqref{eq:MGM:a1}.  No search order is prespecified.  One approach to finding the smallest $A(i)$ is to test increasing sizes of subsets of potential parents.  This would require calculating an exponential number of causally conditioned directed informations \eqref{eqn:defn:ccDirectedInformation}. An alternative method, motivated by Markov chains and Lemma~\ref{lem:caus_data_proc}, is to start with all other processes as a trivial Markov blanket and sequentially test and remove insignificant ones as described in Algorithm~1.

Let $\mathcal{DI}_{\mathrm{MGM}}$ denote input to Algorithm~1---the set of all causally conditioned directed information values from one process to another, causally conditioned on a subset of the rest,
\begin{eqnarray}  \mathcal{DI}_{\mathrm{MGM}} \!\!\!\!\!\!&=&\!\!\!\!\!\! \left\{\! \I\!\left( \! \bX_{j} \!\! \to\! \bX_i \! \parallel\! \allX_{ B(i)}\!\right) \!\!:\! j,i\!\in\!\! [m], \! B(i) \!\subseteq \!\setmi{m}{i,j}\!   \right \} \!. \nonumber
\end{eqnarray}

\begin{theorem}
\label{thm:alg1}  If $P_{\allX}$ satisfies Assumption~\ref{def:spat_cond_ind}, Algorithm~1 recovers the minimal generative model.
\end{theorem}

The proof is in Appendix~\ref{prf:thm:alg1}.  Algorithm~1 requires the full joint distribution.  However, it only uses $\mathcal{O}(m^2)$ tests.  Note that the tests used in line~5 are adaptive, using the current $B(i)$.  Next consider an alternative algorithm following the definition of directed information graphs (Definition~\ref{def:dir_info_grph}).

\subsection{Algorithm~2 --- Edge Tests}

Directed information graphs are defined by separate edge tests.  Testing an edge entails computing a directed information from one process to another, causally conditioned on all other processes. This is described in Algorithm~2.  Let $\mathcal{DI}_{\mathrm{DI}}$ denote the input to Algorithm~2---the set of causally conditioned directed informations \begin{eqnarray} \!\!\!\!\! \mathcal{DI}_{\mathrm{DI}} \!\!\!
&=&\!\!\! \{ \I( \bX_j \to \bX_i  \parallel \allX_{ \setmi{m}{i,j}}) : i,j\in [m] \} . \nonumber
\end{eqnarray}

\begin{table}[t] \begin{normalsize} \begin{center}
\begin{tabular*}{\linewidth}{@{}llrr@{}}
{\bfseries Algorithm 2. DIconstruct}\\
\hline {\bf Input:} $\mathcal{DI}_{\mathrm{DI}}, \ m$ \\
\hline
~1. {\bf For} $i \in [m]$  \vspace{-0.15cm}\\\vspace{-0.15cm}
~2. $\quad$ \( A(i) \gets \emptyset\) \\\vspace{-0.15cm}
~3. {\bf For} $i,j \in [m]$  \\\vspace{-0.15cm}
~4. $\quad$ {\bf If} \(\I( \bX_j \to \bX_i  \parallel \allX_{ \setmi{m}{i,j}})>0\) \\\vspace{-0.15cm}
~5. $\quad \quad$ \( A(i) \gets A(i) \cup \{j\} \) \\ %\vspace{-0.15cm}
~6. {\bf Return} $\{A(i)\}_{i=1}^m$\\
\hline
\end{tabular*}\end{center}\label{alg:DIconstruct} \end{normalsize}
\vspace{-1cm}
\end{table}

\begin{theorem}
\label{thm:alg2}  If $P_{\allX}$ satisfies Assumption~\ref{def:spat_cond_ind}, Algorithm~2 recovers the directed information graph.
\end{theorem}
\begin{proof} The proof follows immediately from Definition~\ref{def:dir_info_grph}.\end{proof}

Unlike Algorithm~1, Algorithm~2 uses each of the $\mathcal{O}(m^2)$ elements in $\mathcal{DI}_{\mathrm{DI}}$.  Line~4 could be executed in parallel.  The number of directed information tests is the same as Algorithm~1, though the tests themselves are different.

\subsection{Algorithm~3 --- Adaptive}
Even when the graph is sparse, Algorithms~1~and~2  use high-dimensional statistics.  A natural question is whether for sparse graphs low-dimensional statistics, such as only between pairs of processes, is sufficient to recover the graph.  Example~\ref{ex:noisyXOR} in Appendix~\ref{ex:alg3:xor} shows that is not true.  It demonstrates that if  $K$ is the size of the largest parent set, then any algorithm that uses directed informations involving $K$ or fewer processes cannot guarantee recovery in general.  We next investigate an algorithm that can recover the graph using directed informations of $K+2$ processes.

In \cite{quinn2011estimating}, Quinn et. al. proposed Algorithm 3.  We include it here for completeness.  It is an adaptive algorithm for networks with unknown in-degrees.  It identifies parents by first using pairwise tests, then conditioning on one process, then two, etc.  Thus, if the processes have small in-degrees, then low-dimensional statistics suffice to learn the structure.  However, the following assumption is required.

\begin{assumption}\label{assump:faithful1}
For a distribution $P_{\allX}$, for all $i,j \in [m]$ and $S  \subseteq [m] \backslash \{i,j \}$, 
\beqas
\I(\X_j \to \X_i \| \allX_{[m] \backslash \{i,j \}}) > 0 \implies \I(\X_j \to \X_i \| \allX_{S})>0. \label{eq:faithfulness}
\eeqas
\end{assumption}

\begin{remark} 
Assumption 2 excludes certain non-linear relationships such as the exclusive-or (see Appendix~\ref{ex:alg3:xor}) and relationships with ``perfect'' cancellation.  For a linear counter-example, let $\mathbf{N}_1$, $\mathbf{N}_2$, and $\mathbf{N}_3$ be independent processes with i.i.d. $\mathcal{N}(0,1)$ variables.  Let $X_t=N_{1,t}$, $Y_t=-X_{t-1} + N_{2,t}$, and $Z_t=X_{t-2}+Y_{t-1}+N_{3,t} = N_{2,t-1} + N_{3,t}$.  Then $\I(\X \to \bZ\|\bY)>0$ but $\I(\X \to \Z)=0$.  
\end{remark}

\begin{table}[t] \begin{normalsize} \begin{center}
\begin{tabular*}{\linewidth}{@{}llrr@{}}
{\bfseries Algorithm 3. GenStructAdapt} \cite{quinn2011estimating}\\
\hline {\bf Input:} $\mathcal{DI}_{\mathrm{MGM}}, \ m$ \\
\hline

~\phantom{1}1. {\bf For} $i \in [m]$ \vspace{-0.15cm} \\  \vspace{-0.15cm}
~\phantom{1}2. $\quad  A(i) \gets [m]\backslash \{ i\}$ \\\vspace{-0.15cm}
~\phantom{1}3. $\quad K \gets 0$ \\\vspace{-0.15cm}
~\phantom{1}4. $\quad \!\!$ {\bf While} $K +1 \leq |A(i)|$ \\\vspace{-0.15cm}
~\phantom{1}5. $\quad \quad$ {\bf For} \(j \in A(i)\) \\\vspace{-0.15cm}
~\phantom{1}6. $\quad \quad \quad$ \( \mathcal{B} \gets \{ B : B \subseteq A(i) \backslash \{j\}, \ |B| = K  \}\) \\ \vspace{-0.15cm}
~\phantom{1}7. $\quad \quad \quad$ {\bf For} \(B \in  \mathcal{B}\) \\ \vspace{-0.15cm}
~\phantom{1}8. $\quad \quad \quad \quad$ {\bf If} \( \I(\X_j \to \X_i \|   \allX_{B} ) =0 \) \\ \vspace{-0.15cm}
~\phantom{1}9. $\quad \quad \quad \quad \quad$  \(A(i) \gets A(i)\backslash \{j \}\) \\ \vspace{-0.15cm}
~10. $\quad \quad \quad \quad \quad$ {\bf Go to} line~5\\ \vspace{-0.15cm} %  \(A(i) \gets A(i)\backslash \{j \}\) \\
~11. $\quad \quad $  \(K \gets K+1\) \\%\vspace{-0.15cm}
~12. {\bf Return} $\{A(i)\}_{i=1}^m$\\ %\vspace{-0.15cm}

\hline
\end{tabular*}\end{center}\label{alg:StructRecov} \end{normalsize}
\vspace{-1cm}
\end{table}

\begin{theorem}
\label{thm:alg3}  If $P_{\allX}$ satisfies Assumptions~\ref{def:spat_cond_ind}~and~\ref{assump:faithful1}, Algorithm~3 recovers the directed information graph.
\end{theorem}

The proof is in Appendix~\ref{prf:thm:alg3}.  Note no proof of correctness was given in \cite{quinn2011estimating}.   For each process $\X_i$, $K$ will increment until it is at most the size of the parent set.  Thus, even though in-degrees are not known, the graph can be recovered using near minimal-dimensional statistics.

\section{Graphical Model Identification -- Constrained Topology}\label{sec:ident_str_side_info}

We now discuss an algorithm that identifies the graph with minimal dimensional statistics when   bounds on the in-degrees are known. We show even if the bounds are invalid, the resulting graph is an optimal approximation.

\subsection{Algorithm~4 --- Bounded In-Degree}

We next show that when there are known upper bounds $\{ K(i)\}_{i=1}^m$ on the in-degrees, directed informations involving $(K(i)\!+\!1)$ processes are sufficient to identify $\X_i$'s parents.  

Lemma~\ref{lem:caus_data_proc} showed that among all sets of $K(i)$ processes, those that contain $\X_i$'s parents have maximal influence on $\X_i$.  The intersection of those sets is $\X_i$'s parent set.  Algorithm~4 formally describes this.  Let $\mathcal{DI}_{\mathrm{BndInd}}$ denote the input to Algorithm~4---the set of directed information values from each $K(i)$-sized subset of processes to $\bX_i$,  
\begin{eqnarray} \!\! \mathcal{DI}_{\mathrm{BndInd}} \!\!\!
&=&\!\!\! \left\{ \I( \allX_{ B(i)} \to \bX_i  ) : i\in [m],  B(i) \subseteq \setmi{m}{i}, |B(i)|=K(i)   \right\} . \nonumber
\end{eqnarray}

\begin{table}[t] \begin{normalsize} \begin{center}
\begin{tabular*}{\linewidth}{@{}llrr@{}}
{\bfseries Algorithm 4. BoundedIn-Degree}\\
\hline {\bf Input:} $\mathcal{DI}_{\mathrm{BndInd}}, K, \ m$ \\
\hline

~1. {\bf For} $i \in [m]$  \vspace{-0.2cm} \\  \vspace{-0.15cm}
~2. $\quad\  A(i) \gets \emptyset$ \\  \vspace{-0.15cm}
~3. \(\quad \ \calB \gets \{ B':B'\subseteq \setmi{m}{i}, \ |B'| = K(i)  \}\) \\  \vspace{-0.1cm}
~4. $\quad $  \( \calB_{\mathrm{max}} \gets  \argmax_{B \in \calB} \ \I( \allX_{B} \to \bX_i)\) \\  \vspace{-0.1cm}
~5. $\quad $  \( A(i)  \gets \hspace{-0.25cm} \bigcap \limits_{B \in \calB_{\mathrm{max}}} \hspace{-0.35cm} B \) \\
~6. {\bf Return} $\{A(i)\}_{i=1}^m$\\
\hline
\end{tabular*}\end{center}\label{alg:BndInDgr} \end{normalsize}\vspace{-1cm}
\end{table}

Let $A^*(i)$ and $A(i)$ denote the true and returned parent sets for $\X_i$ respectively.  Note that for the trivial in-degree bound $K(i)=m-1$, $|\mathcal{B}|=1$ and the algorithm cannot resolve the parent sets.

\begin{theorem}
\label{thm:genstructmain} Let $P_{\allX}$ satisfy Assumption~\ref{def:spat_cond_ind}.
Algorithm~4 recovers the directed information graph for a given \(P_{\allX}\) if for each $i \in [m]$, $|A^*(i)| \leq K(i) \leq m-2$.
\end{theorem}

The proof is in Appendix~\ref{prf:thm:alg4}.

\begin{remark} In practice, one should test the output $\{A(i)\}_{i=1}^m$ of Algorithm~4, checking that \[\I( \allX_{A(i)} \to \bX_i) = \argmax_{B \in \calB} \ \I( \allX_{B} \to \bX_i).\]   
\end{remark}

\subsection{Bounded In-Degree Approximations} \label{sec:BndIndegrAppx}

Algorithm~4 requires bounds on the in-degrees.  A natural question is whether the output is useful if the bounds are invalid.  We next show that by modifying line~5 in Algorithm~4 to return any set $B \in \calB_{\mathrm{max}}$, the result is an optimal approximation, regardless of the validity of the bounds.

Consider approximating $P_{\allX}$ with 
\beqa
\Phat_{\allX}( \allx ) := \prod_{i=1}^m P_{\bX_{i}\parallel\bX_{\uAhat(i)}} (\bx_{i}\parallel \bx_{\uAhat(i)}). \label{eq:def:appxP}
\eeqa  In \eqref{eq:def:appxP}, the conditional marginals are exact, but the parent sets $\{\uAhat(i)\}_{i=1}^m$ are approximate.  The divergence $\kldist{P_{\allX}}{\Phat_{\allX}} $ measures how close $\Phat_{\allX}$ is to $P_{\allX}$.  Let $\calPhat_K$ denote the set of all approximations of the form \eqref{eq:def:appxP} with parent set cardinalities $|\uAhat(i)| = K(i)$.  Denote any optimal approximation as 
%\beqas
$\Phat_{\allX}^* := \argmin_{\Phat_{\allX} \in \calPhat_K} \kldist{P_{\allX}}{\Phat_{\allX}}.$
%\eeqas

For the setting where $\Phat_{\allX}$ is constrained to be a directed tree, Quinn et al. \cite{quinn2013efficient} show that $\Phat_{\allX}^*$ is the directed tree with the maximal sum of directed informations along its edges.  We show an analogous result here, where $\Phat_{\allX}$ only has specified in-degrees.  Let $\{\Ahat^*(i)\}_{i=1}^m$ denote the parent sets corresponding to the optimal approximation $\Phat_{\allX}^*$.  The following theorem states that the $\Ahat^*(i)$ can be selected independently.

\begin{theorem}
\label{thm:apx:gen_K_apx} Let $P_{\allX}$ satisfy Assumption~\ref{def:spat_cond_ind}. For all $i \in [m]$,
\beqa
\Ahat^*(i) \in \argmax_{\uAhat(i) : |\uAhat(i)|=K(i) } \!\!\!\!\! \I ( \X_{\uAhat(i)} \to  \X_{i} ). \label{eq:apx:gen_K:thm}
\eeqa
\end{theorem}
\begin{IEEEproof}
The proof follows from \cite{quinn2013efficient}, which proved Theorem~\ref{thm:apx:gen_K_apx} for the special case $K=1$.  That proof naturally extends to the general case $K>1$.  
\end{IEEEproof}

\begin{corollary}\label{cor:sumDIweight} Under Assumption~\ref{def:spat_cond_ind}, $\sum_{i=1}^m\I ( \X_{\uAhat^*(i)} \!\!\to\!\!  \X_{i} ) = \max_{\{\uAhat(i) : |\uAhat(i)|=K(i)\}_{i=1}^m }\sum_{i=1}^m\I ( \X_{\A(i)} \!\!\to\!\!  \X_{i} )$
\end{corollary}
\begin{IEEEproof}
The proof follows immediately from Theorem~\ref{thm:apx:gen_K_apx}.
\end{IEEEproof}

By Theorem~\ref{thm:apx:gen_K_apx}, for any user-specified parent set cardinalities $\{K(i)\}_{i=1}^m$, Algorithm~4 can return an optimal approximation.  However, so far the algorithms have used exact directed information values or point estimates.  We next consider using confidence intervals. 

\subsection{Robust Graph Identification} \label{sec:rob_str_ident}

Algorithms~1-4 require calculations or point estimates of directed information to recover the graph.  When point estimates $\{\widehat{\I}(\allX_{A(i)} \to \X_i)\}_{i \in [m], A \subseteq [m] \backslash\{i\}}$ are not available or reliable, confidence intervals $\{\wcalI(\allX_{A(i)} \to \X_i)\}_{i \in [m], A \subseteq [m]\backslash\{i\}}$ can be used instead.  We next develop an algorithm that will find the ``best'' approximation which is robust to estimation error.  We will discuss estimation and confidence intervals more fully in Section~\ref{sec:estimation}.  

\begin{remark}
In this work, we consider the practical setting of using a set of simultaneous confidence intervals resulting in a rectangular, joint confidence region In general, multidimensional confidence regions need not be rectangular.  We also use a constant in-degree $K$ for notational simplicity; the results generalize.
\end{remark}

Denote the Cartesian product of confidence intervals as %\beqa
$\calS := \bigtimes_{ (i, A(i))} \wcalI(\allX_{A(i)} \to \X_i). \nonumber \label{eq:def:calS}$
%\eeqa 
Note that $\calS$ is a subset of $\mathbb{R}^{m {m-1 \choose K}}$.  Each element $s  \in \calS$ is a length $m {m-1 \choose K}$ vector. We refer to each $s \in \calS$ as a \emph{scenario}.  Each scenario selects an estimate value $\widehat{\I}_s(\allX_{A(i)} \to \X_i) \in \wcalI(\allX_{A(i)} \to \X_i)$ for all pairs $(i,A(i))$.

We now modify the optimal approximation version of Algorithm~4  in Section~\ref{sec:BndIndegrAppx}, to select approximate parent sets so that the approximation will be robust to estimation errors. For a given scenario $s\in \calS$ and approximation $\Phat_{\allX}$ \eqref{eq:def:appxP}, by Corollary~\ref{cor:sumDIweight} the quality of the approximation is characterized by its \emph{weight}, %\beqa
$W(\Phat_{\allX},s) := \sum_{i =1}^m \widehat{\I}_s(\allX_{\widehat{A}(i)} \to \X_{i}).\nonumber $% \label{eq:def:weight}
%\eeqa  
~The best approximation for a particular scenario $s$ is
\beqa
\Phat^*_{\allX}(s) := \argmax_{\Phat_{\allX} \in \calPhat_K} W(\Phat_{\allX},s). \label{eq:def:given_s_opt_Phat}
\eeqa
 
While Algorithm~4 can solve \eqref{eq:def:given_s_opt_Phat} to give the best parent sets for a given $s$, those parents might perform poorly in a different scenario $s' \in \calS$ compared to $\Phat^*_{\allX}(s')$.  A natural question is whether there is a  $\Phat_{\allX}$ that performs well under all scenarios.  In particular, we want to select the ``robust'' approximation $\Phatrob$ that attains the minimax regret,
\beqas
\Phatrob :=  \hspace{0.1cm} \argmin_{\Phat_{\allX} \in \calPhat_K} \hspace{0.1cm} \max_{s \in \calS}  \hspace{0.1cm}  \{ W(\Phat^*_{\allX}(s),s) - W(\Phat_{\allX},s) \}. \label{eq:def:Phatrob}
\eeqas

\begin{table}[t] \begin{normalsize} \begin{center}
\begin{tabular*}{\linewidth}{@{}llrr@{}}
{\bfseries Algorithm 5. RobustBoundedIn-Degree}\\
\hline {\bf Input:} $\calS, K, \ m$ \\
\hline

~1. \ph {\bf For} $i \in [m]$  \vspace{-0.15cm} \\ \vspace{-0.15cm}
~2. \ph $\quad\ \Ahat(i) \gets \emptyset$ \\ \vspace{-0.15cm}
~3. \ph \(\quad \ \calB \gets \{ B':B'\subseteq \setmi{m}{i}, \ |B'| = K  \}\) \\ \vspace{-0.15cm}
~4. \ph $\quad$ {\bf For} $j \in \{1, \cdots, |\calB|\}$ \\ \vspace{-0.15cm}
~5. \ph \(\quad \ \quad \ B_j \gets \calB(j) \) \\ \vspace{-0.15cm}
~6. \ph \(\quad \ \quad \ M(B_j) \gets \mathrm{midpoint}(\wcalI(\allX_{B_j} \to \X_i)) \) \\  \vspace{-0.15cm}
~7. \ph \(\quad \ \quad \ H(B_j) \gets \mathrm{max}(\wcalI(\allX_{B_j} \to \X_i)) \) \\  \vspace{-0.15cm}
~8. \ph \(\quad \ \quad \ L(B_j) \gets \mathrm{min}(\wcalI(\allX_{B_j} \to \X_i)) \) \\ \vspace{-0.15cm}
~9. \ph  \(  \quad \ j_1 \gets  \argmax\limits_j H(B_j)\) \\ \vspace{-0.15cm}
~10. \(\quad \ j_2 \gets  \argmax\limits_j L(B_j)\) \\ \vspace{-0.15cm}
~11. \(\quad \ j_3 \gets  \argmax\limits_{j\neq j_1} H(B_j)\) \\ \vspace{-0.1cm}
~12. $\quad$ {\bf If} $M(B_{j_1}) \geq \frac{1}{2} \left( H(B_{j_3}) + L(B_{j_2}) \right)$ \\ \vspace{-0.15cm}
~13. \(\quad \ \quad \ \Ahat(i) \gets B_{j_1} \) \\ \vspace{-0.15cm}
~14. $\quad$ {\bf Else } \\ \vspace{-0.1cm}
~15. \(\quad \ \quad \ \Ahat(i) \gets B_{j_2} \) \\
~16. {\bf Return} $\{\Ahat(i)\}_{i=1}^m$\\
\hline
\end{tabular*}\end{center}\label{alg:RobustBndIndDeg} \end{normalsize}
\vspace{-1cm}
\end{table}

Algorithm~5 extends the optimal approximation version of Algorithm~4 to the setting of using confidence intervals.
\begin{theorem}\label{thm:robustbnddegree}
Under Assumption~\ref{def:spat_cond_ind}, Algorithm~5 identifies $\Phatrob$.  
\end{theorem}

The proof appears in Appendix~\ref{app:prf:robustbnddegree}.  A natural question is how robust Algorithm~4's approximations are. 
\begin{corollary} \label{cor:rob_bnd_degree_alg4}
Under Assumption~\ref{def:spat_cond_ind}, if the confidence intervals $\{\wcalI(\allX_{A(i)} \to \X_i)\}_{i \in [m], A \subseteq [m] \backslash \{i\}}$ all have the same width, Algorithm~4 identifies $\Phatrob$.
\end{corollary}
\begin{IEEEproof}
Let $\Delta$ denote the common width. Then for all candidate sets $ B_j \in \calB(j)$, $H(B_j) = L(B_j) + \Delta$.  Thus
\beqa
M(B_{j_1}) &=& \frac{1}{2} ( H(B_{j_1}) + L(B_{j_1})) \nonumber \\
&=& \frac{1}{2} ( H(B_{j_1}) + H(B_{j_1}) - \Delta) \label{eq:prf:rob:apx:r1} \\
&\geq& \frac{1}{2} ( H(B_{j_3}) + H(B_{j_2}) - \Delta) \label{eq:prf:rob:apx:r2} \\
&=& \frac{1}{2} ( H(B_{j_3}) + L(B_{j_2}) ) \label{eq:prf:rob:apx:r3} 
\eeqa where \eqref{eq:prf:rob:apx:r1} uses the common width $\Delta$, \eqref{eq:prf:rob:apx:r2}  uses that $\wcalI(\allX_{B_{j_1}} \to \X_i)  $ has the largest maximal value by construction, and  \eqref{eq:prf:rob:apx:r3} uses the common width $\Delta$.  Since \eqref{eq:prf:rob:apx:r3} is the condition in line~12, this finishes the proof.
\end{IEEEproof} 

\section{Estimation of Directed Information} \label{sec:estimation}

In this section, we derive sample complexity and confidence bounds for two finite-alphabet plug-in estimators, the first based on the empirical distribution and the second on a parametric distribution.  For simplicity of presentation, we focus on jointly estimating directed information between all pairs of processes, $\{\I(\X_j \to \X_i)\}_{i,j\in [m]}$.  The results generalize to jointly estimating directed information involving sets of processes, $\{\I(\X_B \to \X_i)\}_{i \in [m], B \subseteq [m]\backslash\{i\}}$.

\begin{assumption} \label{assump:basic} We assume the network $\allX$ is jointly stationary, ergodic, and Markov of finite order $l$. We further assume that each pair of processes $\{\X_i, \X_j\}$ are Markov order $l$.
\end{assumption}

\begin{remark} 
The pairwise Markovicity is used to simplify notation for  \eqref{eq:didef:2}.  The network Markovicity is needed to ensure joint convergence for all pairwise directed information estimates.   To extend these results for jointly estimating all directed informations with $K+1$ processes, $\{\I(\allX_B \to \X_i) : i\in [m], B \subseteq [m]\backslash\{i\}, |B|=K\}$, each $(K+1)$-tuple of processes $\{\allX_{B\cup\{i\}} \}_{i\in[m], B\subseteq [m]\backslash\{i\}}$ must be Markov order $l$.  
\end{remark}

The network Markovicity, coupled with the fact that the alphabet $\calX$ is finite and $P_{\allX}$ is positive (Assumption~\ref{def:spat_cond_ind}) implies that network is irreducible and aperiodic.  Thus it has a unique stationary distribution.

For notational simplicity, shift the time indexing to start at $t=-l+1$ so for $t=1$ there is a length $l$ history.  Under Assumption~\ref{assump:basic}, the directed information for all ordered pairs $(i,j)$ has the form 
\beqa
\hspace{-0.5cm}  \frac{1}{n} \I(\X_{j} \to \X_i) &=&  \frac{1}{n} \sum_{t=1}^n \I(X_{i, t} ; X_{j,t-l}^{t-1} | X_{i, t-l}^{t-1} ) \label{eq:didef:2}\\
 \hspace{-0.0cm} &=&   \I(X_{i, l+1} ; X_{j,1}^{l} | X_{i, 1}^{l} ) \label{eq:didef:3}\\
 \hspace{-0.0cm} &=& \hspace{-0.2cm} \sum_{x_{j}^l, x_i^{l+1} %\in \calX^{(2l+1)} 
} \hspace{-0.3cm} P_{X_{j}^l, X_i^{l+1}}(x_{j}^l, x_i^{l+1}) %\nonumber \\&& \hspace{0.2cm} \times 
\log \frac{P_{ X_{i,l+1} |   X_{j}^l, X_{i}^{l}}(  x_{i,l+1} |   x_{j}^l, x_{i}^{l} )}{P_{ X_{i,l+1} |    X_{i}^{l}}(  x_{i,l+1} |   x_{i}^{l} )}.   \label{eq:didef:4}
\eeqa 
Eq.~\eqref{eq:didef:2} follows from Markovicity, \eqref{eq:didef:3} from stationarity, and \eqref{eq:didef:4} from the definition of mutual information \eqref{eqn:defn:MutualInformation}.

We first jointly estimate all pairwise distributions, $\{\widehat{P}_{X_{j}^l, X_i^{l+1}}\}_{i,j\in [m]}$, and then plug those into \eqref{eq:didef:4} to obtain directed information estimates $\{\widehat{\I}(\X_{j} \to \X_i)\}_{i,j \in [m]}$.  The confidence interval for each $(i,j)$ is set as \beqa \wcalI(\X_{j} \to \X_i) := \left[ \widehat{\I}(\X_{j} \to \X_i) - \delta , \, \widehat{\I}(\X_{j} \to \X_i) + \delta \right] \label{eq:est:CI}\eeqa for a given constant $\delta>0$.  Let $B_{\delta}$ denote the event that each confidence interval contains the true value 
\beqa
\hspace{-0.5cm} B_{\delta} \!\!\!\!&:=&\!\!\!\! \mathbbm{1}_{\{ \I(\X_j \to \X_i) \in \wcalI(\X_{j} \to \X_i)  :  i,j \in [m]\}}. \nonumber
\eeqa  We next examine the sample complexity of the two plug-in estimators to characterize $\P( B_{\delta} )$ as a function of $n$.

\subsection{Empirical Distribution} \label{sec:est:emp}
  First consider the ``empirical'' distribution.  For each ordered pair $(i,j)$, compute a distribution $\widehat{P}_{X_{j}^l, X_i^{l+1}}$, where for each possible realization $\{x_{j}^l, x_i^{l+1}\} \in \calX^{2l+1}$ of $\{X_{j}^{l}, X^{l+1}_{i}\}$, the estimates are \beqa
\widehat{P}_{X_{j}^l, X_i^{l+1}}(x_{j}^l, x_i^{l+1}) := \! \frac{1}{n} \!\sum_{t=1}^n \!\mathbbm{1}_{\left\{ \{X_{j,t-l}^{t-1}, X^t_{i,t-l}\} = \{x_{j}^l, x_i^{l+1}\} \right\}}. \label{eq:def:full_emp_dist}
\eeqa

To determine how quickly the empirical distributions $\{\widehat{P}_{X_{j}^l, X_i^{l+1}}\}_{i,j\in[m]}$ jointly converge, and thus the directed information estimates, we need to measure how fast the network converges to its stationary distribution, its ``mixing time.''  To simplify notation, denote the state of the network from time $t-l$ to time $t$ by $\uV_t := \allX^t_{[m], t-l}.$ Then $\{\uV_t\}_{t = 1}^n$ forms a first-order Markov chain.  Let $\pi$ denote its stationary distribution.  Let $\lambda$ be a constant $0 < \lambda \leq 1$ and $d \geq 2$ an integer such that for all $v_1 \in \calX^{m(l+1)}$, $\P(V_{d} = v | V_{1} = v_1) \geq \lambda \pi(v). $

\begin{theorem}\label{thm:emp_est_smpl} Under Assumptions~\ref{def:spat_cond_ind}~and~\ref{assump:basic}, for a given $\delta > 0$, $\P( B_{\delta} ) \geq 1 - \rho,$ where $\epsilon$ is chosen so that $\delta = - 4 |\calX|^{2l+1} \epsilon  \log \epsilon$ and 
\beqa \hspace{-0.5cm} \rho \!\!\!\!&=&\!\!\!\! 8m(m-1)|\calX|^{2l+1}\exp \!\!\left(\!\! - \frac{(n \epsilon - 2 d/\lambda)^2}{2 n d^2 / \lambda^2} \right). \label{eq:def:rho}
\eeqa For any $\epsilon'>0$, the sample complexity of jointly estimating all pairwise directed informations $\{ \I(\X_j \to \X_i   \}_{i,j\in [m]} \}$ is $\delta = \mathcal{O}(n^{-1/2+\epsilon'})$ for fixed $m$ and $n = \mathcal{O}(\log m)$ for fixed $\delta$.
\end{theorem}

The proof appears in Appendix~\ref{app:prf:emp_est_smpl}.

\subsection{Parametric Distribution}

Parametric models are widely used for modeling time-series in economics, biology, and other fields.  We next identify the sample complexity for parametric plug-in estimators.  We consider a network of stochastic processes whose conditional distribution $P_{\allX_t | \allX^{t-1}_{t-l}; \utheta^*}$ is characterized by a $Q$-dimensional parameter vector $\utheta^*$.  We next discuss conditions for the maximum likelihood estimate (MLE) $\wuthetan$ to exist.  These are analogous to the i.i.d. case.

Suppose $\utheta^*$  is in the interior of $\Theta$, a compact subset of $\mathbb{R}^Q$. Let $q$ index the parameter vector $\utheta = \{ \theta_q\}_{q=1}^Q \in \Theta$.
Denote the conditional log-likelihood of $\allX_t$ parameterized by $\utheta$ as
\beqas
L_t(\utheta) := \log P_{\allX_t | \allX^{t-1}_{t-l};\utheta}(\allX_t | \allX^{t-1}_{t-l}).
\eeqas     Define the negative Hessian matrix $A_t(\utheta )$ evaluated at $\utheta '$ as
\beqas A_t(\utheta ')  &=& \left[  - \frac{\partial^2 L_t(\utheta) }{\partial \theta_{q_1} \partial \theta_{q_2} } \bigg|_{\utheta=\utheta'} \right]_{1 \leq q_1,q_2 \leq Q} .
\eeqas

Analogous to the i.i.d. case (eg. pg~384 in \cite{bickel2007mathematical}), the following conditions are sufficient to guarantee asymptotic normality of the MLE error $(\wuthetan - \utheta^*)$ \cite{ling2010general}.

\begin{assumption}\label{asmp:param} (i) $L_n(\utheta)$ is continuously twice differentiable in terms of $\utheta$.  (ii) $\E [L_t(\utheta) ]$ has a unique maximizer at $\utheta^*$.  (iii) $\E[ A_t(\utheta^*)]$ is finite and positive definite.  
\end{assumption}

Define the covariance matrix
\beqa
\Sigma := \left[ \E[ A_t(\utheta^*)] \right]^{-1}. \label{eq:def:param_cov_mat}
\eeqa

\begin{lemma} \label{lem:asym_norm_theta}   Under Assumptions~\ref{def:spat_cond_ind},~\ref{assump:basic},~and~\ref{asmp:param},
\beqa
\sqrt{n}\, ( \wuthetan - \utheta^*) \to \mathcal{N}(0, \Sigma) \hspace{0.5cm} \text{in distribution}. \label{eq:param_est:paramestnorm}
\eeqa
\end{lemma}  The proof is in Appendix~\ref{app:prf:lem:asym_norm_theta}.  It uses \cite{ling2010general} for the main conclusion.  Lemma~\ref{lem:asym_norm_theta} extends to functions of the parameters. Let $\{g_r(\utheta)\}_{r=1}^R $ be a set of $R$ functions of the parameter vector $\utheta$, indexed by $r$.  Using the $Q \times Q$ parameter covariance matrix $\Sigma = (\sigma_{q,q'})$ \eqref{eq:def:param_cov_mat}, define the $R \times R$ covariance matrix $\Sigma' = (\sigma_{r,r'}')$ as
$\sigma_{r,r'}' = \sum_{q = 1}^{Q} \sum_{q' = 1}^{Q} \sigma_{q,q'} \frac{\partial g_r}{\partial \theta_q}\frac{\partial g_{r'}}{\partial \theta_{q'}} \bigg|_{\utheta = \utheta^* }.$

\begin{theorem}[Theorem~5.4.6 of \cite{lehmann1999elements}] \label{lem:asym_norm_DI}   If \eqref{eq:param_est:paramestnorm} holds, each function in the set $\{g_r(\utheta)\}_{r=1}^R $ is continuous and differentiable in the neighborhood of $\utheta^*$, and if the Jacobian matrix with $(r,q)$-th entry $\frac{\partial g_r}{\partial \theta_q}\big|_{\utheta = \utheta^* }$ is non-singular, %then
\beqa
&& \hspace{-1.4cm} \sqrt{n} \left[ (g_1(\wuthetan) - g_1(\utheta^*)), \dots,(g_{R}(\wuthetan) - g_{R}(\utheta^*))  \right] %\nonumber \\ && \hspace{2.4cm} 
\to \mathcal{N}(0, \Sigma') \hspace{0.3cm} \text{in distribution}. \label{eq:param_est:DIestnorm}
\eeqa
\end{theorem}

Theorem~\ref{lem:asym_norm_DI} is known as the multivariate delta method. Let $g_r(\utheta)$ specifically be the directed information of the $r$th pair $(i_r, j_r)$ computed with $\utheta$,
%\beqas
$g_r(\utheta) := \I( \X_{j_r} \to \X_{i_r}).$
%\eeqas
Assumption~\ref{asmp:param} (i) implies that $g_r(\utheta)$ is continuously differentiable.  The Jacobian matrix with $(r,q)$-th entry $\frac{\partial g_r}{\partial \theta_q}\big|_{\utheta = \utheta^* }$ will be singular if there are linear dependencies between different directed informations.  Even if that occurs, we can nonetheless upper bound the joint convergence rate of the estimates using a ``worse'' covariance matrix $\Sigma'$, as will be done for the proof of the following theorem.  

\begin{theorem} \label{thm:par_est_smpl} Under Assumptions~\ref{assump:basic}~and~\ref{asmp:param}, the sample complexity is $\delta = \mathcal{O}(n^{-1/2})$ for fixed $m$ and $n = \mathcal{O}(\log m)$ for fixed $\delta$.
\end{theorem}

The proof appears in Appendix~\ref{app:prf:par_est_smpl}.

\begin{remark}  Under Assumptions~\ref{assump:basic}~and~\ref{asmp:param}, the unknown covariance matrices $\Sigma$ and $\Sigma'$ in \eqref{eq:param_est:paramestnorm} and \eqref{eq:param_est:DIestnorm} respectively can be consistently estimated by using $\wuthetan$ in place of the unknown $\utheta^*$ \cite{ling2010general}. Calculating $\Sigma'$ might be difficult in some cases.  For practical implementation, confidence intervals for directed information can be approximated as follows.  Separately fit the conditional marginals $P_{Y_t|Y^{t-1}_{t-l};\wuthetan'}$ and  $P_{Y_t|Y^{t-1}_{t-l}, X^{t-1}_{t-l};\wuthetan''}$.  Confidence intervals for  $\wuthetan'$ and $\wuthetan''$ can be calculated.  Sample from those confidence intervals and compute the directed information for each sample to estimate the confidence interval for $\I(\X\to\Y)$. 
\end{remark}

\section{Simulations} \label{sec:sim}

\subsection{Exact Recovery Smulations -- Algorithms 2, 3, and 4} \label{sec:sim:exact}

\subsubsection{Setup} \label{sec:sim:exact:setup}
We first tested Algorithms 2, 3, and 4 using Markov order-1 autoregressive (AR) processes, 
\beqas
\allX_t = C \allX_{t-1} + N_t \label{eq:sim:ARdef}
\eeqas for a given $m$ by $m$ coefficient matrix $C$ and  noise vector $N_t$.

Two network sizes $m \in \{ 6, 15\}$ were used.  For each size $m$, there were $200$ trials each of $n = 750$ time-steps.  In each trial, the parent sets and coefficients were generated.  For each node, the number of parents was chosen at uniform between $0$ and $3$ for $m=6$ and between $0$ and $6$ for $m=15$.  Non-zero AR coefficients were i.i.d.  standard normal.  The matrix $C$ was scaled so that the largest magnitude of its eigenvalues was $0.9$ to have a limiting stationary distribution (see pg. 88 in \cite{lutkepohl2004applied}).  The noise process $\{N_t\}_{t=1}^n$ had i.i.d. $\mathcal{N}(0, \frac{1}{4})$ entries.

Performance was measured by both the proportion of edges correctly identified and by the ratio of the sum of directed information from estimated parents to children as compared to the true parents
\beqa
\frac{\sum_{i = 1}^m \I(\allX_{\widehat{A}(i)} \to \X_i) }{\sum_{i = 1}^m \I(\allX_{A(i)} \to \X_i) }, \label{eq:sim:DIperfrat}
\eeqa where $A(i)$ and $\widehat{A}(i)$ denote the true and inferred parent sets respectively.  This second measure characterizes how much of the dynamics are captured by the estimated parent sets. Performance is averaged over the trials.

Directed information estimates were calculated using least square model fits of the form 
\beqa
Y_t \!\!\!\!&=&\!\!\!\! b_1 Y_{t-1} + b_2 Z_{t-1} + b_3 X_{t-1} + N_t \label{eq:sim:ARfit1}\\
Y_t \!\!\!\!&=&\!\!\!\! b_1' Y_{t-1} + b_2' Z_{t-1}  + N_t' \label{eq:sim:ARfit2}.
\eeqa Let $\sigma$ and $\sigma'$ denote $\std(N_t)$ and $\std(N_t')$ respectively. From Theorem~8.4.1 of \cite{cover2006elements}, the entropy $\H(\Y \| \Z, \X)$ is $1/2 \log( 2 \pi e \sigma^2)$.  The directed information is then $\log \sigma'/ \sigma.$  To avoid over-fitting, we used the minimum description length (MDL) penalty \cite{grunwald2007minimum}, $J * \log_2(n)/(2n)$, where $J$ is the number of parameters.  The first model \eqref{eq:sim:ARfit1}'s total complexity is $\widehat{\H}(\Y\|\Z,\X) + 3 \log_2(n)/(2n)$ and the second model \eqref{eq:sim:ARfit2}'s total complexity is $\widehat{\H}(\Y\|\Z) + 2 \log_2(n)/(2n)$.  To select edges, Algorithms~2~and~3 tested if $\widehat{\I}(\X \to \Y \| \Z) > \frac{(3-2)}{2n} \log_2(n).$

For Algorithm 4, $\calB_{\mathrm{max}}$ initially consisted of a single parent set, denote as $B_{\mathrm{max}}$.  To resolve which other parent sets had the same maximal influence except for numerical discrepancies, we again used the MDL penalty.  We set 
\beqa
\calB_{\mathrm{max}} \!\!\gets\! \left\{\! B\!:\! \I(\allX_B \!\to\! \X_i) > \I(\allX_{B_{\mathrm{max}}} \!\!\!\!\to\! \X_i) - \frac{\log_2(n)}{2n} \! \right \}\! . \label{eq:sim:Alg4mod}
\eeqa  Eq.~\eqref{eq:sim:Alg4mod} uses the property that if the true parent set is $A$, then from over-fitting, $\I(\allX_{B_{\mathrm{max}}} \to \X_i)$ would have value up to $\I(\allX_A \to \X_i) + (|B| - |A|)\log_2(n)/(2n)$.  Instead of searching over values of $(|B| - |A|)$, for simplicity we only considered $|B|-|A|=1.$  Letting $\widehat{A} = \cap_{B \in \calB} B$ denote the inferred parent set, we tested whether
\[
\I(\allX_{\widehat{A}} \to \X_i) > \I(\allX_{B_{\mathrm{max}}} \!\!\to \X_i) - (|B_{\mathrm{max}}| - |\widehat{A}|)   \frac{\log_2(n)}{2n}. %\nonumber
\]  Otherwise, the difference cannot be explained through over-fitting, so % and a mistake might have occurred, in which case 
we defaulted to accepting $B_{\mathrm{max}}$ as the parent set.  The in-degree bound for Algorithm~4 was set at $K=4$ and $K=8$ for $m=6$ and $m=15$ respectively.

\subsubsection{Results}
The results are shown in Figure~\ref{fig:sim:AR234}.  Standard error bars are drawn.  The algorithms all performed well.  Algorithm~2~and~4 were the best, and their performances were almost identical.  Increases in $m$ did not result in significant degradation. Algorithms~2~and~4 captured almost all the dynamics though misclassified some edges.  That suggests the missed edges were weak influences.  

\begin{figure}[t]
\centering

\subfigure[The proportion of dynamics ($m=6$).]{\label{fig:sim:AR234_m6_DI}\includegraphics[width=.7\figwidth]{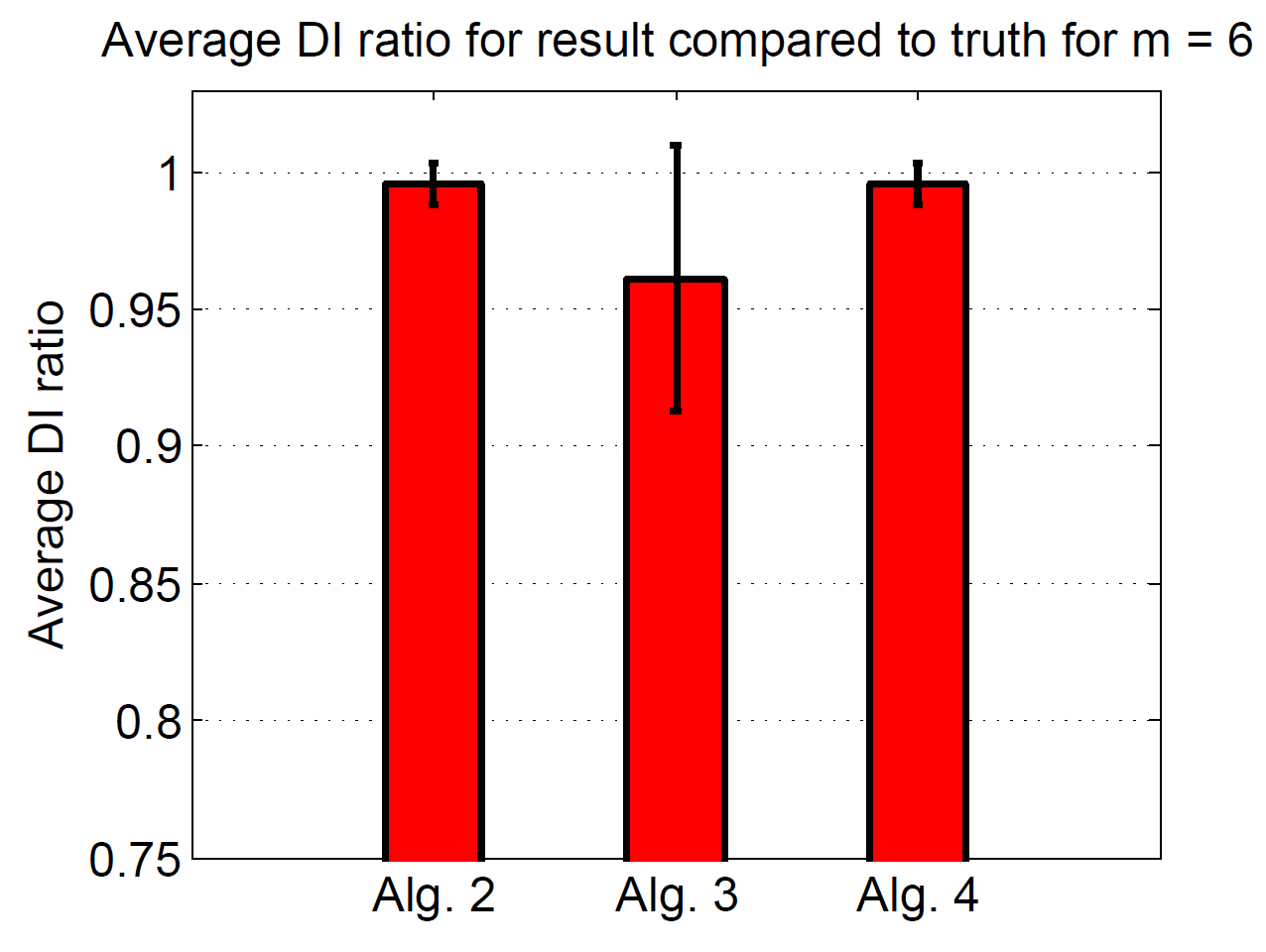}} \hspace{0.1cm}
\subfigure[The proportion of edges ($m=6$).]{\label{fig:sim:AR234_m6_edges}\includegraphics[width=.69\figwidth]{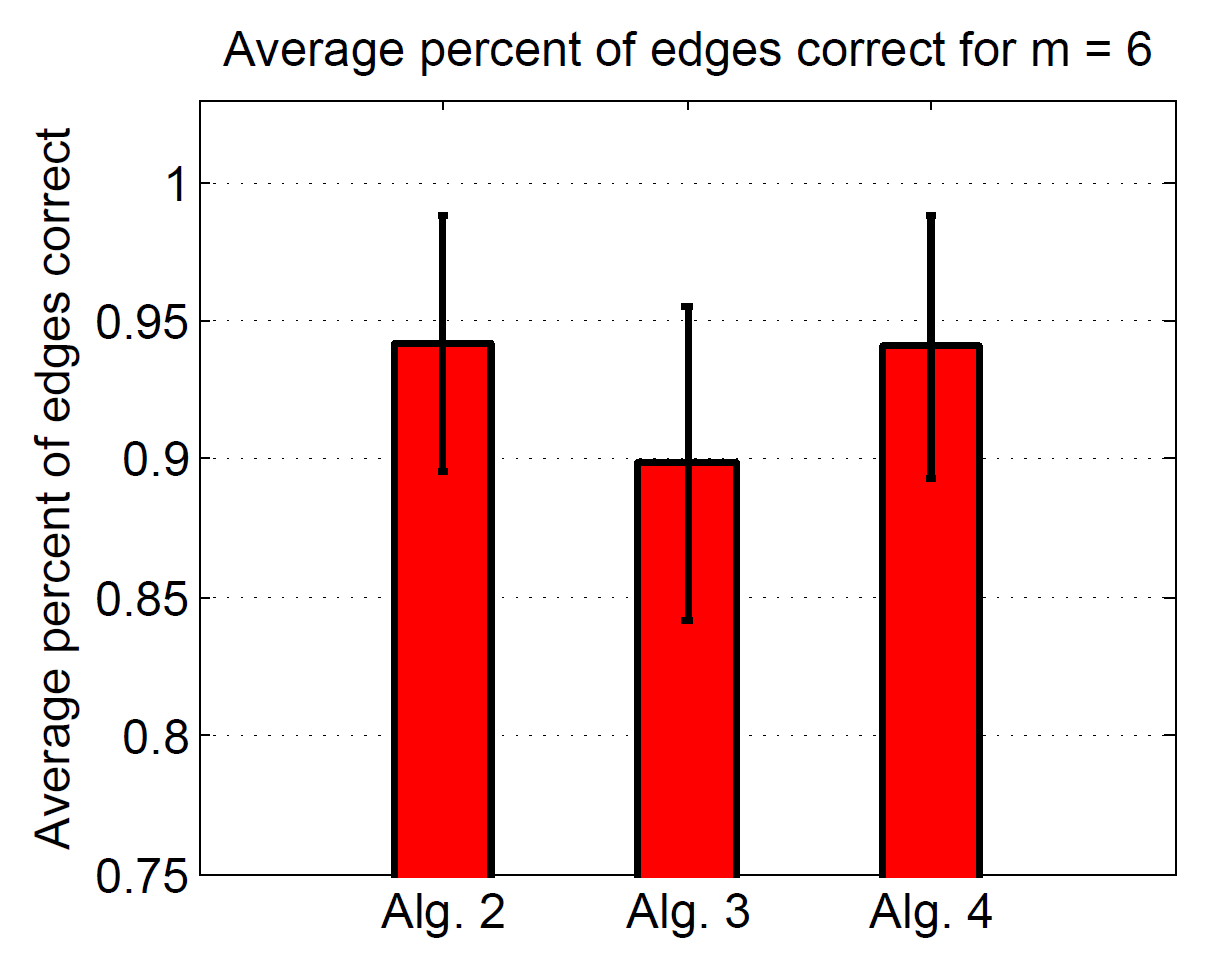}}\\
\subfigure[The proportion of dynamics ($m=15$).]{\label{fig:sim:AR234_m15_DI} \includegraphics[width=.7\figwidth]{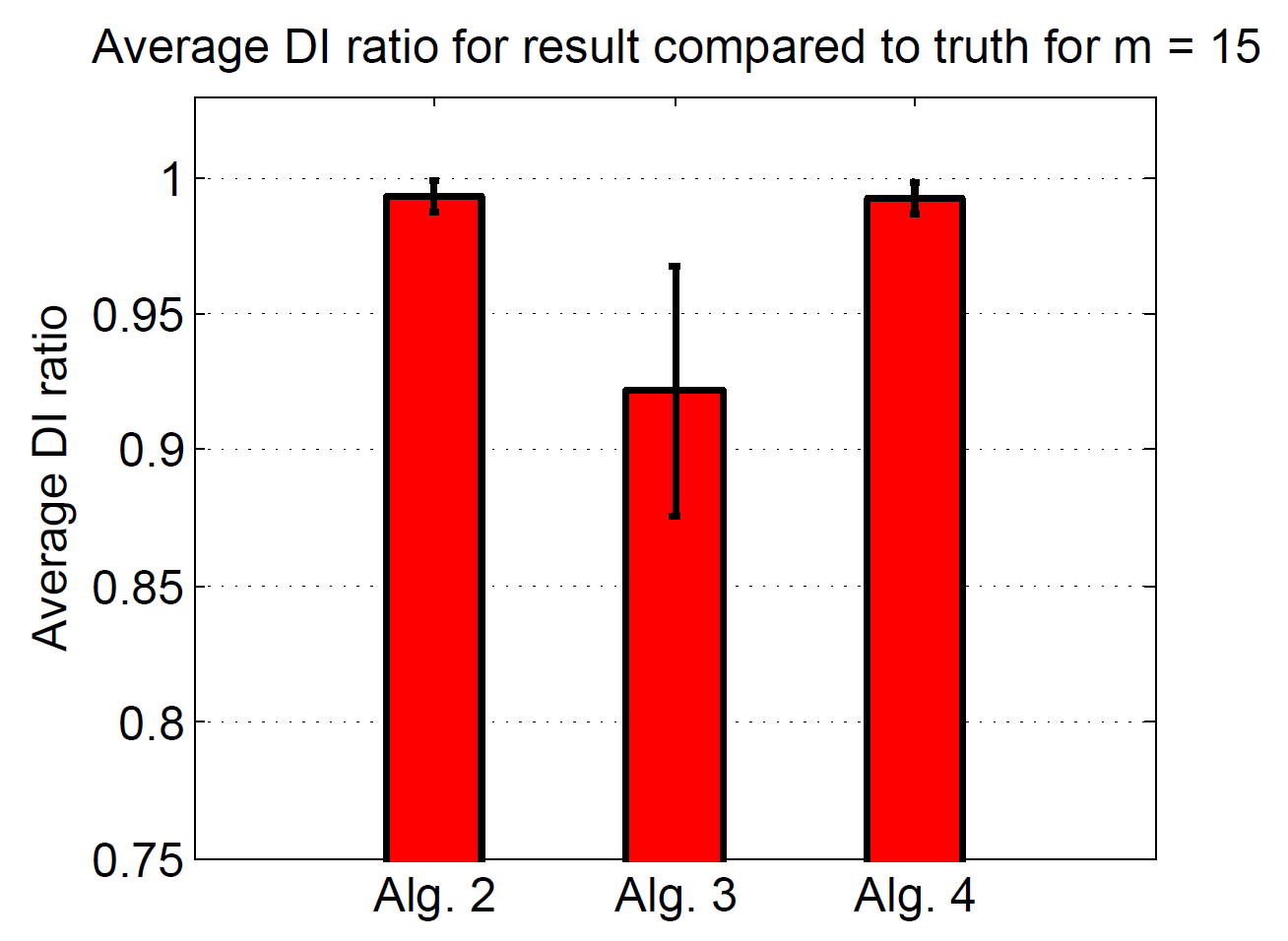}} \hspace{0.1cm}
\subfigure[The proportion of edges ($m=15$).]{\label{fig:sim:AR234_m15_edges}\includegraphics[width=.65\figwidth]{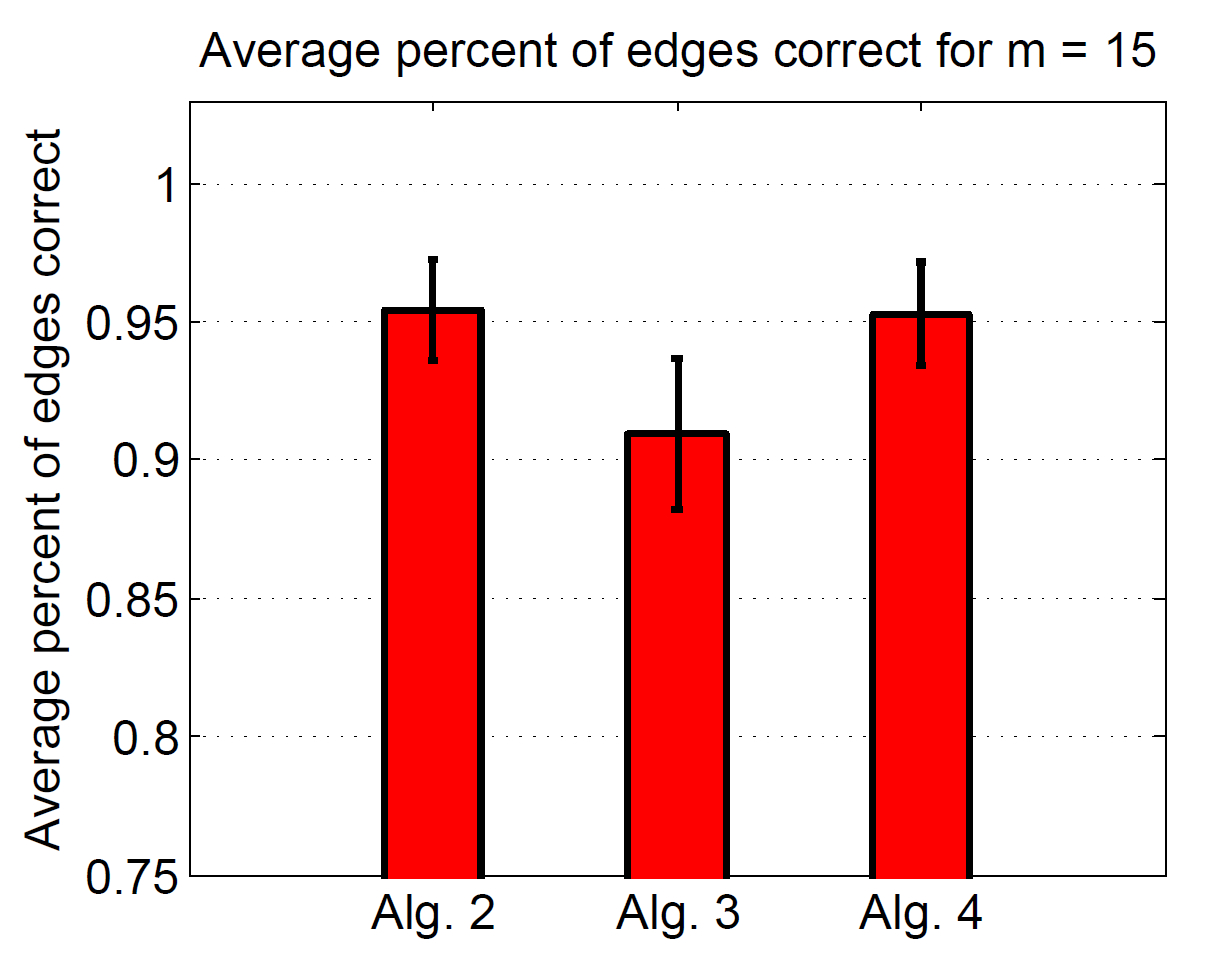}}  
\caption{These figures depict the performance for Algorithms 2, 3, and 4 using randomly generated networks of AR processes.  There were $200$ networks for $m=6$ and $m=15$.  Performance was measured by the ratio between estimated and true parent sets for the sum of directed information from the parent sets to children \eqref{eq:sim:DIperfrat}.  Also, the proportion of edges correctly identified as present or absent was computed.} \label{fig:sim:AR234}
\end{figure}

\subsection{Optimal Approximation -- Algorithm 4}

We also characterized the optimal approximation version of Algorithm~4, discussed in Section~\ref{sec:BndIndegrAppx}.

\subsubsection{Setup}
The setup is similar to Section~\ref{sec:sim:exact:setup}.  Networks of sizes $m = 6$ and $m=15$ were simulated for $n = 750$ time-steps.  In-degrees were not constrained.  Edges were picked i.i.d. with probability $1/2$.  The non-zero AR coefficients were drawn i.i.d. from a standard normal distribution and scaled for stationarity, as in Section~\ref{sec:sim:exact:setup}.  There were $200$ trials for each $m$.  For each trial for $m=6$, in-degree bounds $K \in \{1,2,3,4 \}$ were used.  For $m=15$, bounds $K \in \{2,4,6,8,10 \}$ were used.  The same performance measures in Section~\ref{sec:sim:exact} were used here.

\subsubsection{Results}
The results are shown in Figure~\ref{fig:sim:AR4appx}.  The proportion in dynamics kept by the approximation increases monotonically with the in-degree bound.  Note, however, the percentage of edges correctly identified is concave.  The peak is near the expected number of parents per node, $2.5$ parents for $m=6$ and $7$ parents for $m=15$.  Algorithm~4 does not remove weak edges, though such variations could be done.  Note that for an optimal parent set $B$, if $l$ parents are removed, the resulting parent set is not necessarily the optimal set with $|B| - l$ parents.

\begin{figure}[t]
\centering
\subfigure[ The proportion of dynamics ($m=6$).]{\label{fig:sim:AR4appx_m6_DI}\includegraphics[width=.7\figwidth]{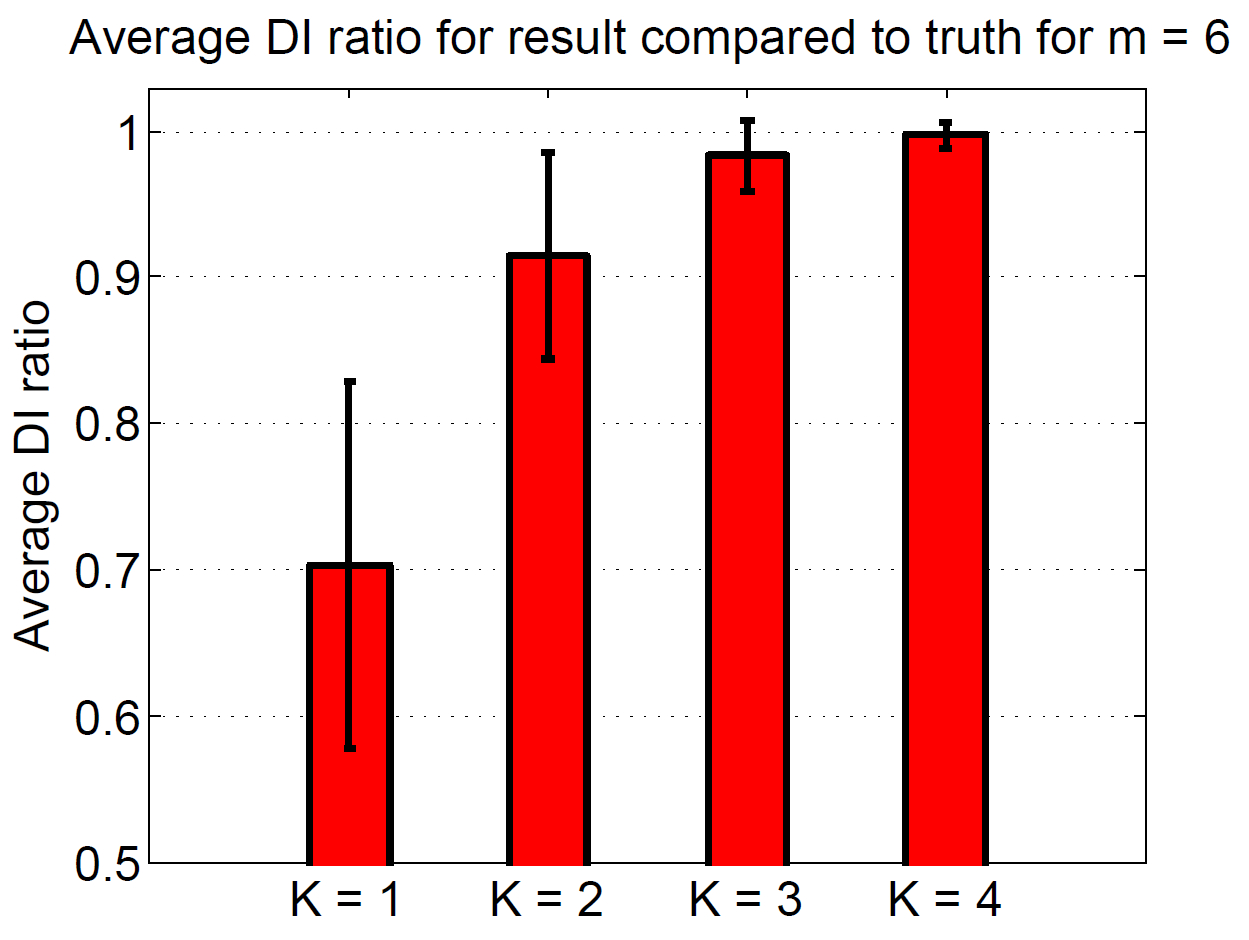}} \hspace{0.1cm}
\subfigure[The proportion of edges ($m=6$).]{\label{fig:sim:AR4appx_m6_edges}\includegraphics[width=.65\figwidth]{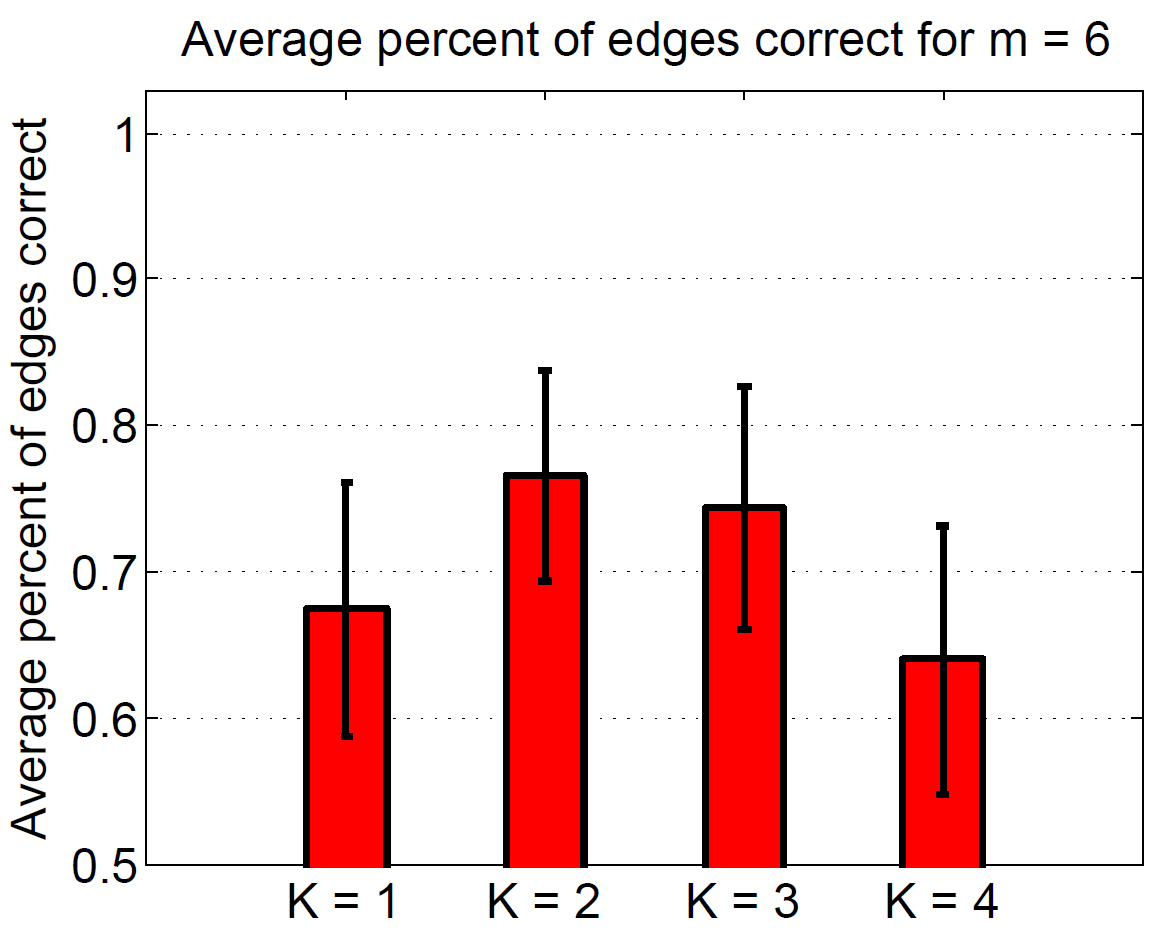}}\\
\subfigure[The proportion of dynamics ($m=15$).]{\label{fig:sim:AR4appx_m15_DI} \includegraphics[width=.7\figwidth]{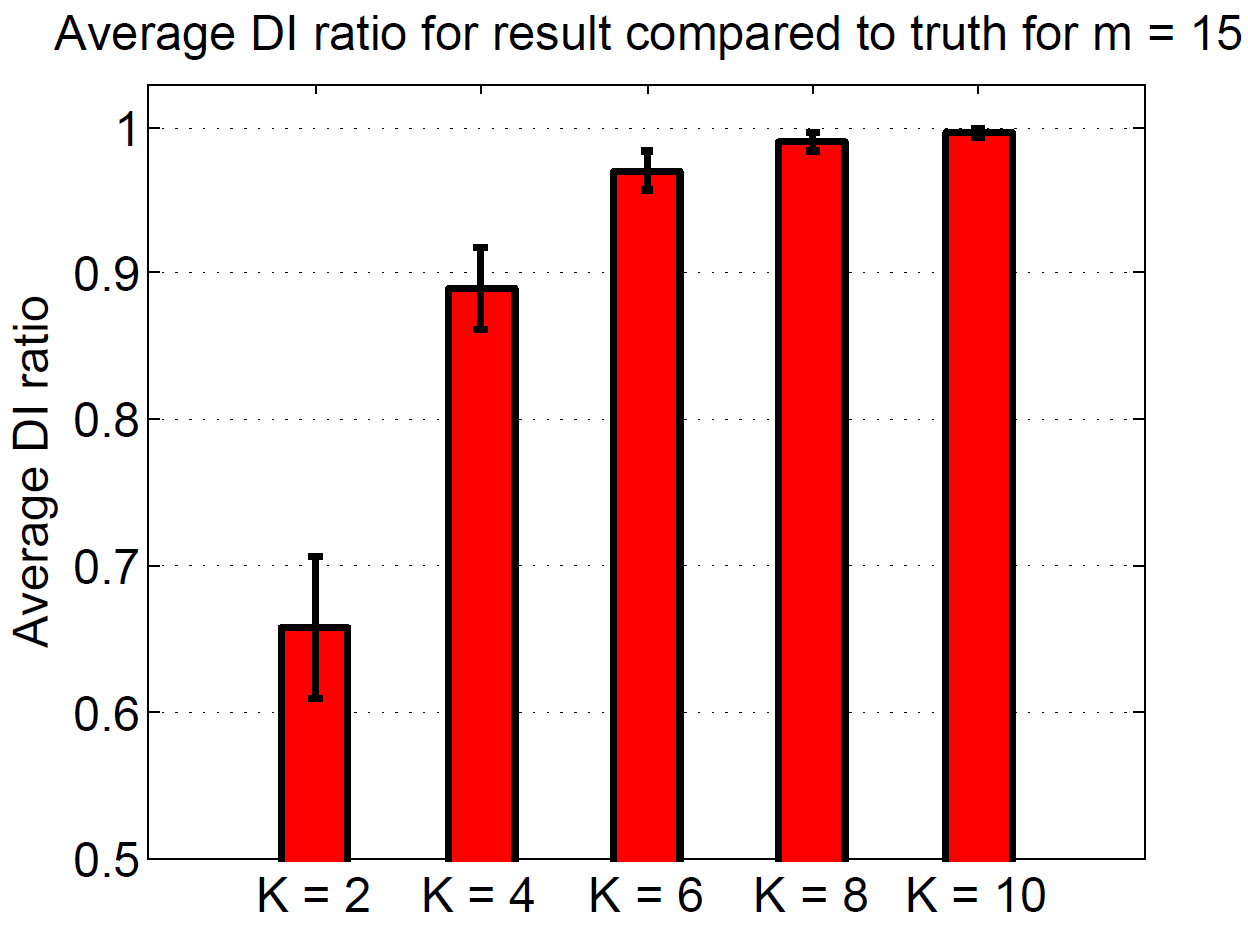}} \hspace{0.1cm}
\subfigure[The proportion of edges ($m=15$).]{\label{fig:sim:AR4appx_m15_edges}\includegraphics[width=.66\figwidth]{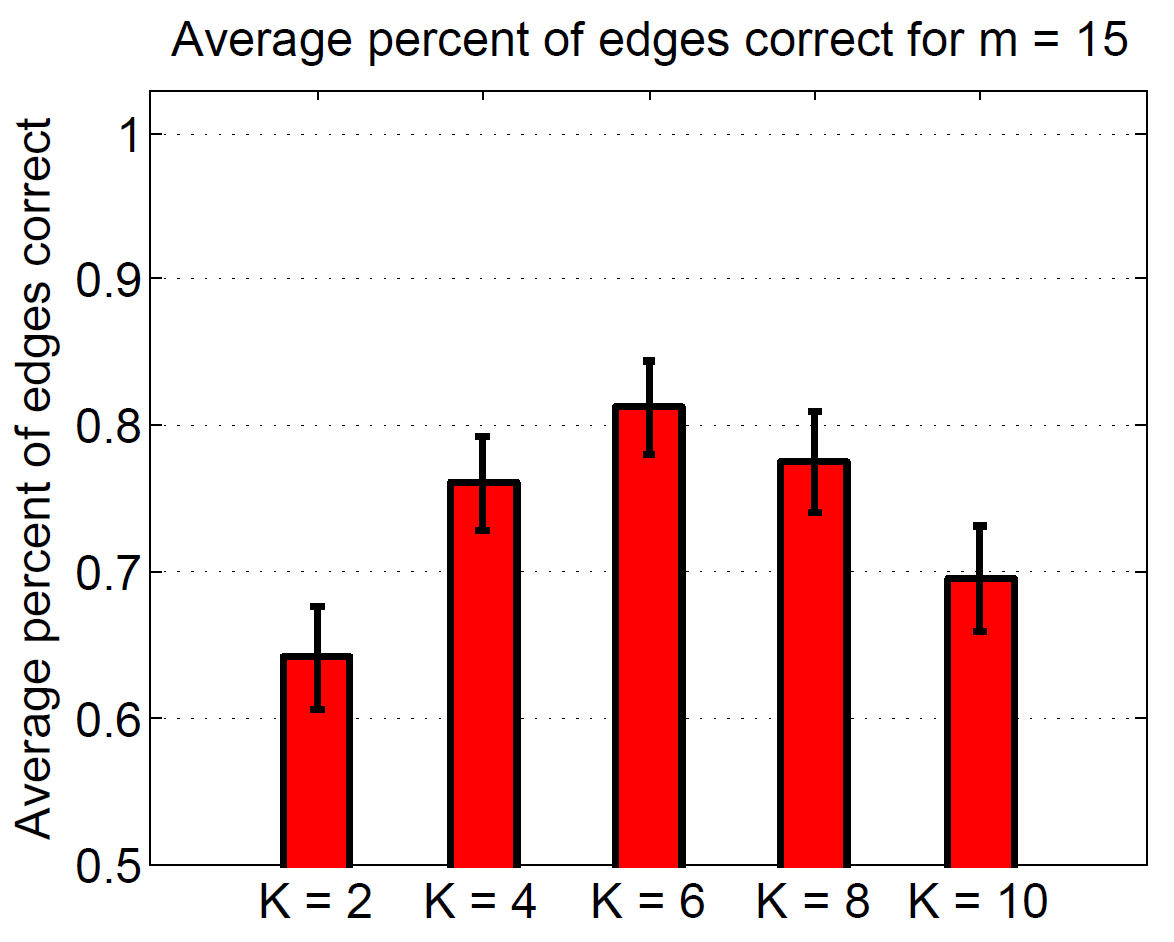}}  
\caption{ These depict the performance for the optimal approximation version of Algorithm~4.  There were $200$ randomly generated AR networks for $m=6$ and $m=15$.  Performance was measured by the ratio between estimated and true parent sets for the sum of directed information from the parent sets to children \eqref{eq:sim:DIperfrat}.  Also, the proportion of edges correctly identified as present or absent was computed.} \label{fig:sim:AR4appx}
\end{figure}

\subsection{Robust Approximations -- Algorithms 4 and 5} \label{sec:sim:rob}

Algorithm~5 identifies the best approximation that is robust to estimation errors.  We next evaluate how Algorithms~4 and 5 compare.  Both use the plug-in empirical estimator discussed in Section~\ref{sec:est:emp}.

\subsubsection{Setup} \label{sec:sim:Alg5:setup}

Network consensus games were simulated.  Each game randomly generated a network of $m=6$ binary valued nodes with in-degree two.  The objective was for the nodes to agree on a value, despite individual bias and limited knowledge.  The biases for $+1$ and $0$ are denoted as $a_{1}, a_0 \geq 0$ respectively with $a_{1} + a_0 = 1$.  At time $1\leq t < 20$, each node $i$ observed its value $X_{i,t}$ and its parents $\allX_{A(i),t}$, and picked its next value $X_{i, t+1}$ using \beqa P_{X_{i,t+1}| \allX_{A(i) \cup \{i\},t}}(1|  \allX_{A(i) \cup \{i\},t})= \frac{ a_1(X_{i,t}+\sum_{j \in A(i)} X_{j,t} ) } {  a_1(X_{i,t}+\sum_{j \in A(i)} X_{j,t})  + a_0(1-X_{i,t}+\sum_{j \in A(i)} 1-X_{j,t})}. \label{eq:alg5sims:1}\eeqa %
Each game consisted of a single network and fixed biases with 50 rounds of play, each round 20 timesteps long. The initial states were i.i.d. Bernoulli($\frac{1}{2}$).  See Figure~\ref{fig:sim:alg5:3} for an example network and biases.

\begin{figure}[t]
\centering
\subfigure[An example network.]{\label{fig:sim:alg5:3} \includegraphics[width=.6\figwidth]{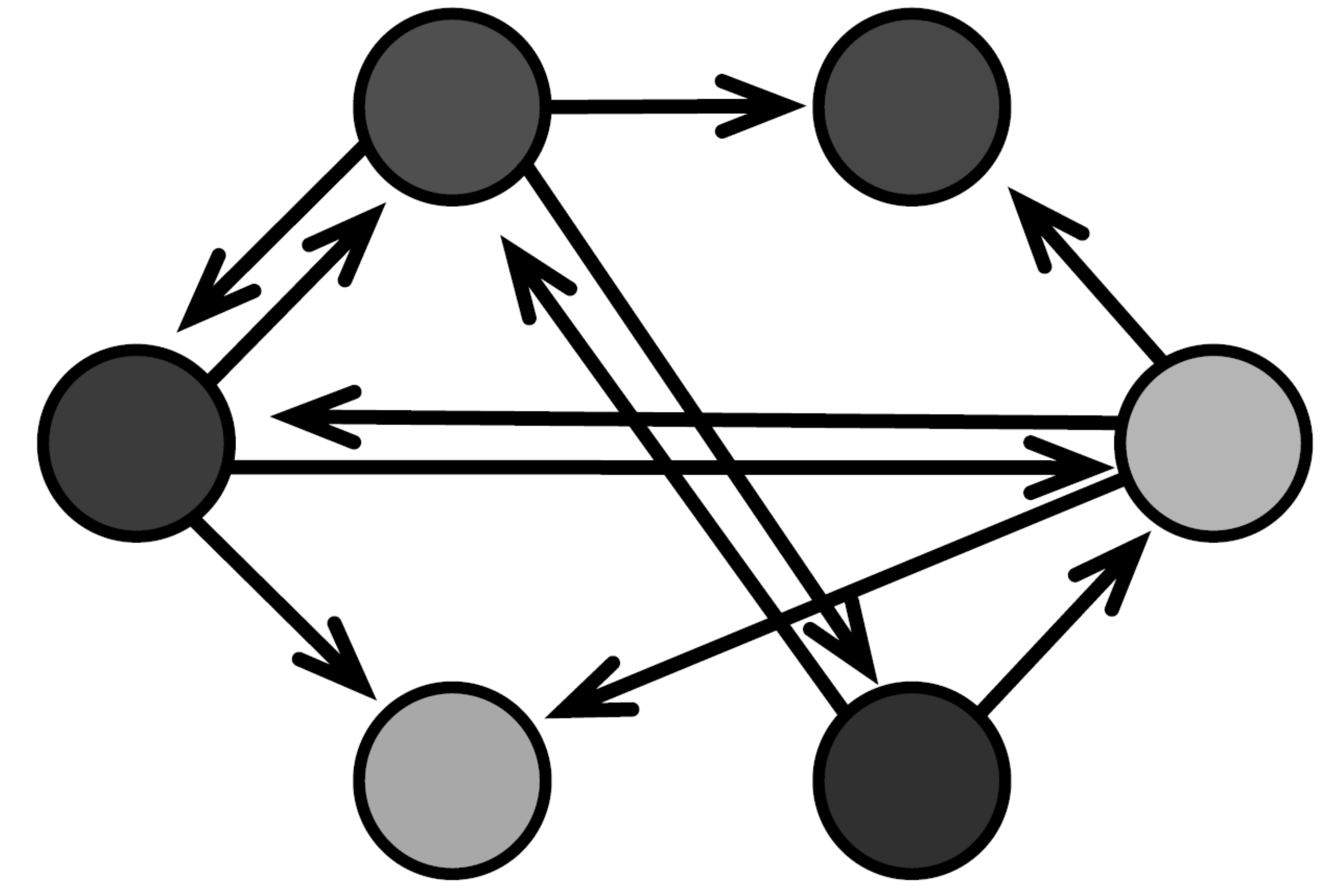}} \hspace{1.1cm}
\subfigure[The percentage of parents identified.]{\label{fig:sim:alg5:4}\includegraphics[width=.75\figwidth]{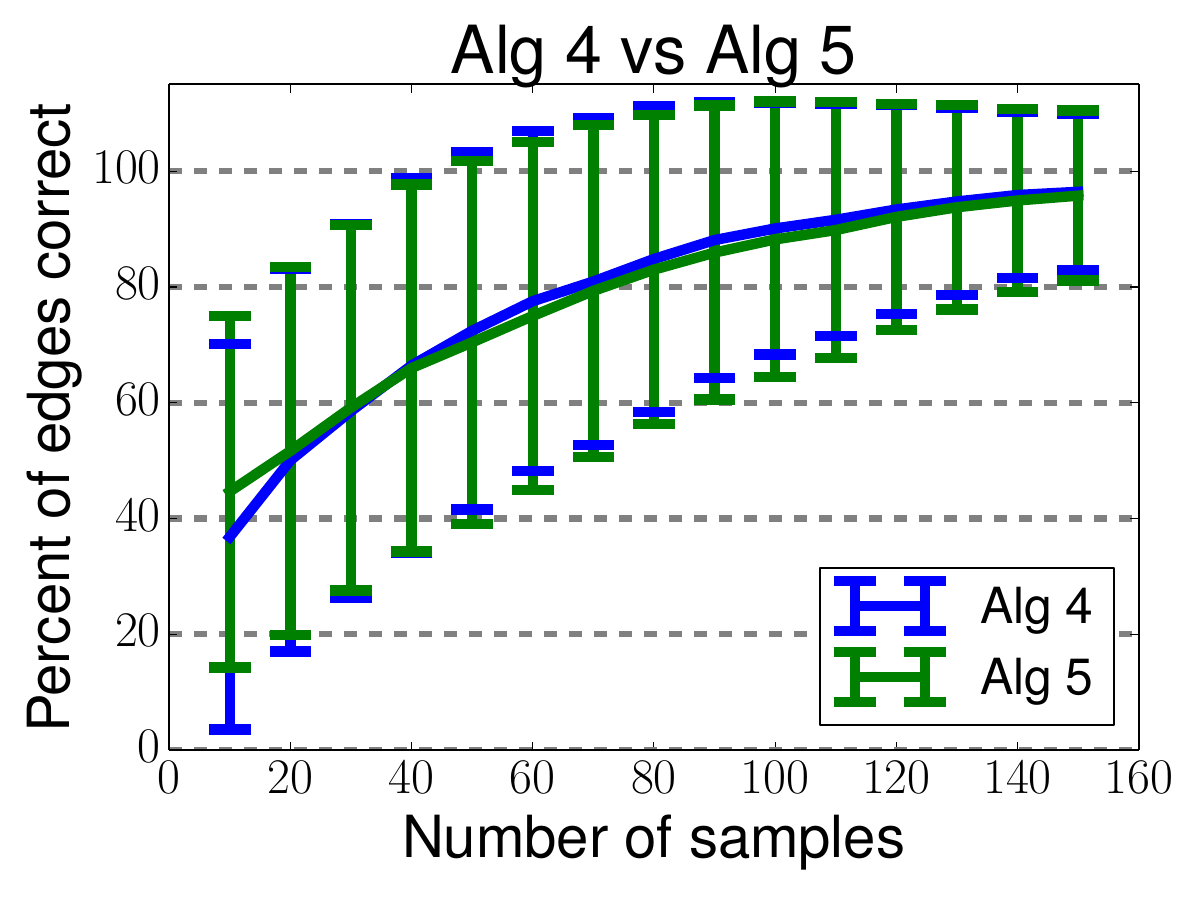}}  
\caption{Figure~\ref{fig:sim:alg5:3} shows the network from one of the games.  Darker colors correspond to larger bias $a_1$.  Figure~\ref{fig:sim:alg5:4} shows the results from the analysis.  Algorithm~4 performs almost the same as Algorithm~5, with a small gap for $n=10$ samples. } \label{fig:sim:alg5}
\end{figure}

\subsubsection{Analysis} There were 150 games total.  For each game, data from the rounds were combined.  Network approximations were obtained using Algorithms~4 and 5 for varying amounts of data $n$ from each game, such as $n=10$ or $n=20$ data points.  The plug-in empirical estimator from Section~\ref{sec:est:emp} was used.  Confidence intervals for Algorithm~5 were derived using bootstrap resampling.   For each potential parent set $\allX_{\widehat{A}}$ of $\X_i$ in each game, $n$ resamples were drawn with replacement from the original $n$ samples, then an estimate $\widehat{\I}(\allX_{\widehat{A}} \to \X_i)$ computed.  This was repeated 500 times.  Then $\widehat{\calI}(\allX_{\widehat{A}} \to \X_i)$ was calculated using the mean of the 500 directed information estimates with a $95\%$ interval width under normality assumptions.  Confidence intervals from Section~\ref{sec:est:emp} were not used because they have uniform width $\delta$, so by Corollary~\ref{cor:rob_bnd_degree_alg4} Algorithms~4 and 5 would output the same graph.

\subsubsection{Results}

Algorithm~4 performed comparably with Algorithm~5.  Even though confidence intervals had different widths, the algorithms often produced the same graph.  Figure~\ref{fig:sim:alg5:4} plots the percentage of correct parents inferred by both algorithms, averaged across 150 games.  The x-axis is the number of samples used. Standard error bars are shown.  Random guessing of parents with $K=2$ would average $40\%$ correct parents.  Thus, with $n=10$ samples, both were close to random guessing, though Algorithm~5 did slightly better.  Algorithm~4 quickly caught up.  These results empirically suggest that Algorithm~4's approximation was typically robust.

\section{Social network analysis} \label{sec:analysis_twitter}

We also demonstrated the utility of Algorithms~2,~3,~and~4 by inferring which news sources influenced which users in the online micro-blogging network Twitter.  All of the news sources covered major events in the Middle East during late 2013.  By analyzing only the times of relevant posts from the news outlets and the users, the algorithms identified which news agency accounts influenced which user accounts with high precision. 

\subsection{Setup}
\subsubsection{Data}
We analyzed activity on the micro-blogging platform Twitter.  Users view messages, ``tweets,'' from accounts they follow.  Users can post novel messages or repost others' messages, ``retweets.''   For data collection, $16$ accounts of major news corporations were selected, such as ABC News, Agence France-Presse, and Reuters Top News. Three corporations had multiple accounts which re-tweeted each other.  We %used the Twitter REST API\footnote{\url{https://dev.twitter.com/docs/api}} to 
retrieved the news accounts' tweets between October 10, 2013 and Dec. 10, 2013.  We focused on tweets containing at least one of the keywords \{`Syria', `Strike', `Assad', `Chemical', `Intervention', `Iraq',`Afghanistan',`Iran',`Terrorist'\}.  A group of $48$ users was picked who had at least five retweets with a keyword from one news source. 

Figure~\ref{fig:twitter:example_timeseries} shows a 48 hour snapshot of activity from two news sources and two user accounts.  Tweets containing a relevant keyword are represented by long black lines.  Other tweets are depicted with short green lines.  User {\bf BoneToBone\_} retweeted content from {\bf BBCBreaking}, and user {\bf hrblock\_21} retweeted content from {\bf FoxNews}.  Note the long periods of inactivity of user {\bf hrblock\_21}.  Also note that the first tweets of {\bf BoneToBone\_} and {\bf hrblock\_21} containing a keyword were after {\bf BBCBreaking} and {\bf FoxNews} tweeted using the keywords.

\begin{figure}[t]
\centering
\subfigure[Tweet activity for four accounts.]{\label{fig:twitter:example_timeseries} \includegraphics[width=.95\figwidth]{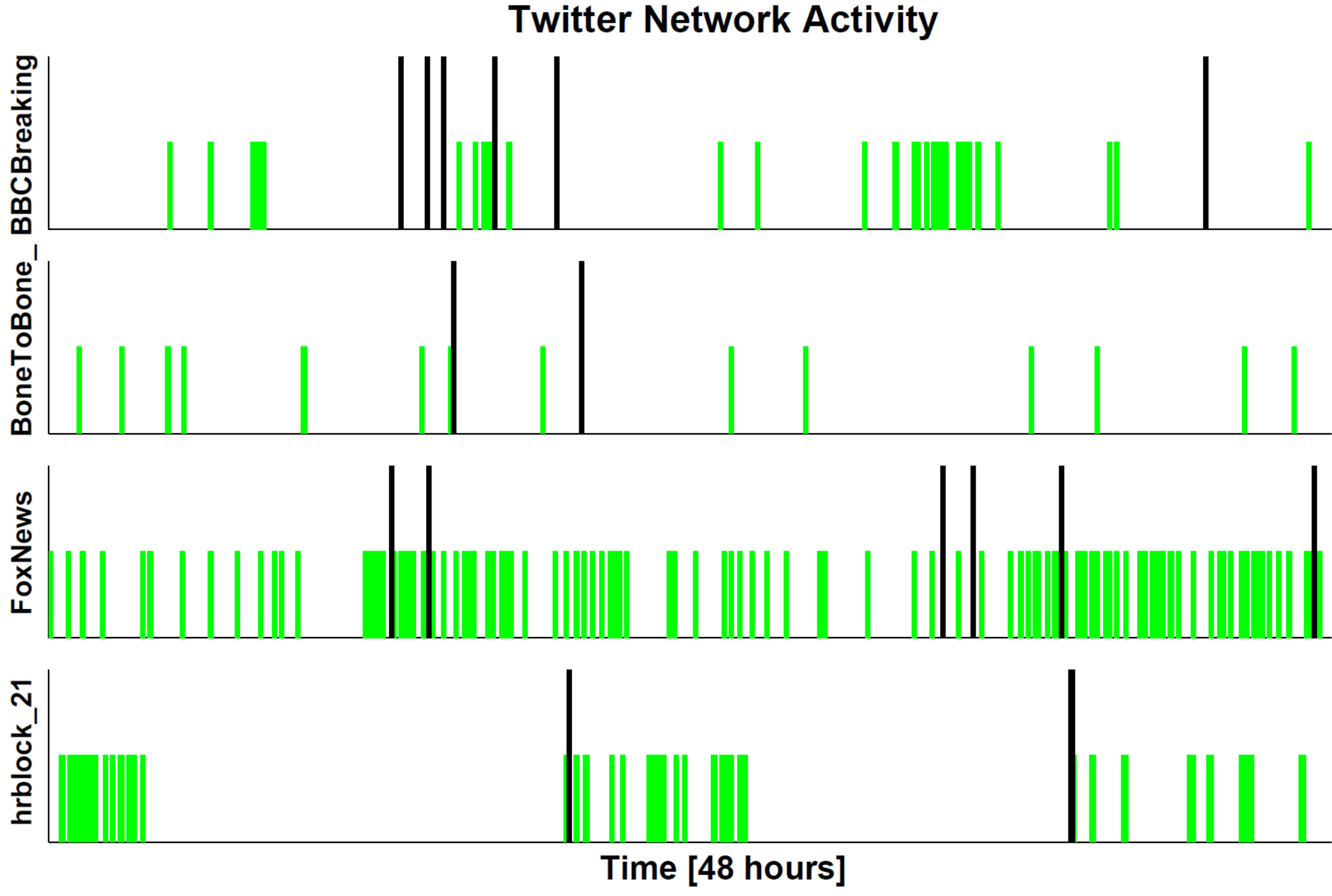}} \hspace{1.1cm} %
\subfigure[User modeling.]{\label{fig:twitter:diagram_modeling} \includegraphics[width=.5\figwidth]{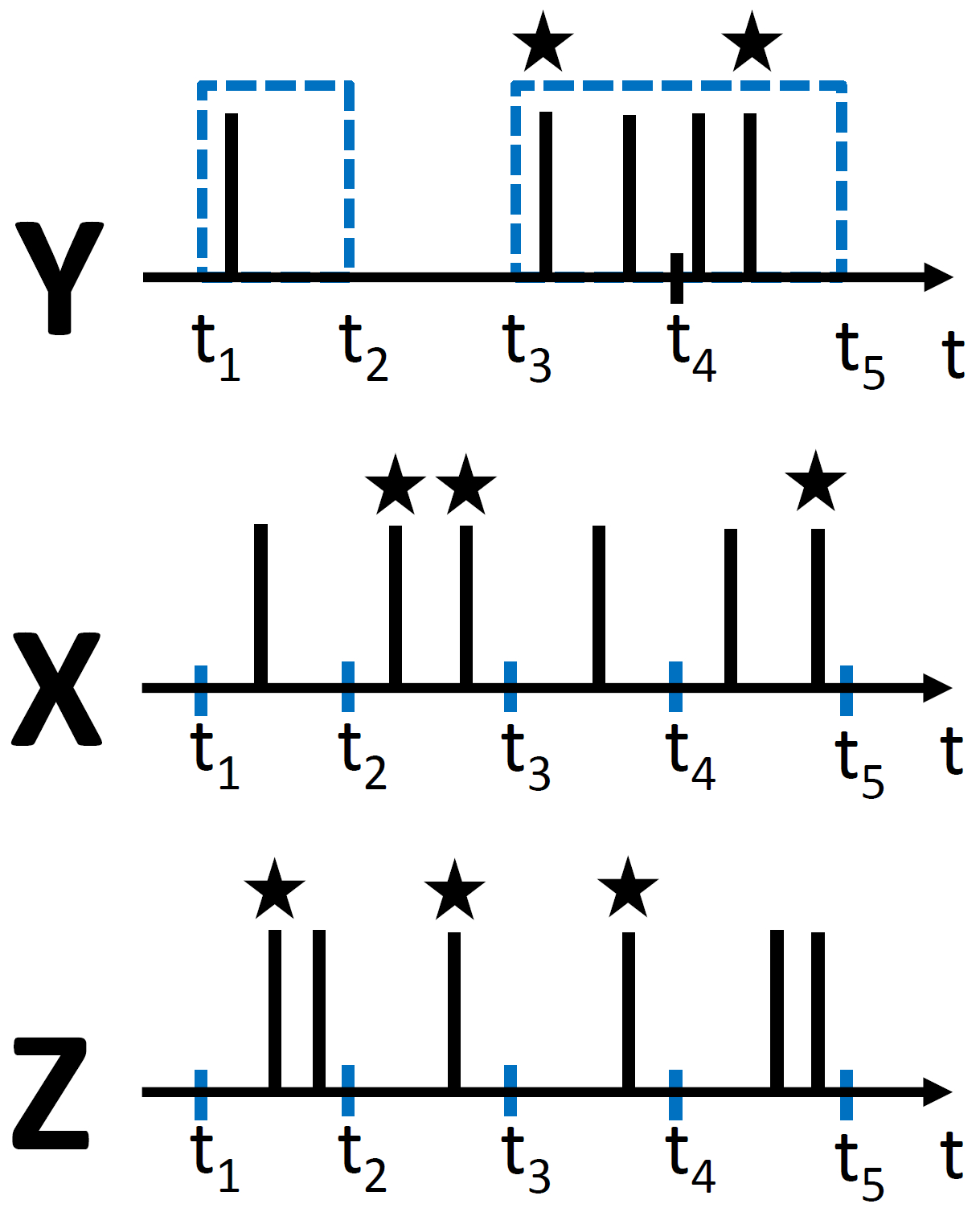}}  
\caption{Figure~\ref{fig:twitter:example_timeseries} shows tweet activity for four accounts over a 48 hour period.  Tweets containing a relevant keyword are represented by large black lines.  Other tweets are depicted with short green lines.  User {\bf BoneToBone\_} retweeted {\bf BBCBreaking}.  User {\bf hrblock\_21} retweeted {\bf FoxNews}.      Figure~\ref{fig:twitter:diagram_modeling}  shows a diagram of how user $\Y$'s tweet activity was modeled as depending on its own past, the past of $\X$, and the past of $\Z$.  The solid vertical lines denote tweet times.  The dashed boxes denote $\Y$'s inferred active periods.  Stars denote tweets with a keyword.  } \label{fig:twitter:usermodeling}
\end{figure}

\subsubsection{Ground-truth} \label{sec:twitter:ground_truth} To establish the ground-truth whether news source $\X$ influenced user $\Y$, the following two conditions were used.  User $\Y$ retweeted at least $5$ tweets of $\X$ that contained a keyword. Of all of $\Y$'s tweets that contained a keyword, at least $15\%$ were retweets from $\X$. 

\begin{remark}  These parameters were manually selected.  No formal sensitivity analysis was conducted.  Performance was observed to degrade for lower thresholds. 
\end{remark}

\subsection{Modeling} We next discuss time-series modeling.  Message arrival times in communication networks form point processes.  We used a logistic model for how  users' tweeting activity depended on past activity.  

Although a user $\Y$ receives messages continuously, $\Y$ might access Twitter intermittently, as is seen in Figure~\ref{fig:twitter:example_timeseries}.  If $\Y$ tweeted at a time $t$, with or without a keyword, then we modeled $\Y$ as being active at least one minute before and $\Y$'s median delay plus three minutes after.  Overlapping active periods were merged.  See Figure~\ref{fig:twitter:diagram_modeling}.

We next describe how the likelihood of each user $\Y$'s activity was modeled, given news sources $\X$ and $\Z$. Active periods of $\Y$ were divided up into intervals of length equal to $\Y$'s median retweet time.  Each interval was modeled as a binary variable, with value one if $\Y$ had a tweet (possibly a retweet) which contained a keyword.  That variable was conditioned on the tweets of $\X$, $\Y$, and $\Z$ during the previous interval.  Let $\Delta t$ denote a time interval and $N_{\X}(\Delta t)$ the number of tweets containing a keyword that were sent by $\X$ during $\Delta t$.  Define $\widetilde{N}_{\X}(\Delta t) $ as 
\begin{equation*}
\hspace{-0.05cm}\widetilde{N}_{\X}(\Delta t) :=
\begin{cases}
N_{\X}(\Delta t) & \text{if } N_{\X}(\Delta t) \leq 2, \\
1 \!+\! \left\lceil \log( 1 \!+\! N_{\X}(\Delta t))\right\rceil    & \text{o/w}.\label{eq:twitter:logcount}
\end{cases} 
\end{equation*} We used $\widetilde{N}_{\X}(\Delta t)$ to describe the past of $\X$ during interval $\Delta t$.  If $\Y$ was inactive for a long period, then $\X$ might have had many tweets, but $\Y$ might only read through a few of them.

Consider a time interval $\Delta t$ during which $\Y$ was active.  Let $\Delta t '$ denote the preceding time interval.  We modeled $\Y$'s tweeting activity during $\Delta t$ as depending on the past of $\X$, $\Y$, and $\Z$ with the following logistic model
\beqa
&& \hspace{-1.3cm} P( N_{\Y}(\Delta t) > 0 \| \X, \Y, \Z) = 1/(1 + e^{-(\alpha_0 + \alpha_1 \widetilde{N}_{\X}(\Delta t ') + \alpha_2 \widetilde{N}_{\Y}(\Delta t ') +\alpha_3 \widetilde{N}_{\Z}(\Delta t '))}), \label{eq:twitter:log_model}
\eeqa %
where $\{\alpha_0, ..., \alpha_3\}$ are coefficients.

\subsection{Estimation} Directed information estimates were computed using the consistent, parametric estimation technique proposed in \cite{quinn2011estimating}.  To estimate the directed information $\I(\X \to \Y \| \Z)$, we first estimated two causally conditioned entropy terms, $\H(\Y\| \Z)$ and $\H(\Y \| \X, \Z)$.  For each entropy term, a logistic model of the form \eqref{eq:twitter:log_model} was fit using generalized linear regression functions. Denote the observed likelihood function as
$
l_{\Y}(\Delta t) := P( N_{\Y}(\Delta t) > 0 \| \X, \Y, \Z).
$
The entropy was then estimated as %
$
\widehat{\H}(\Y \| \X, \Z) := \frac{1}{n'} \sum_{\Delta t} - \log_2 l_{\Y}(\Delta t), \label{eq:twitter:entr_est}
$ where the summation was over all periods $\Delta t$ when $\Y$ was active, and $n'$ was the number of such periods.  The estimate $\widehat{\H}(\Y \| \Z)$ was computed in the same manner.  The directed information estimate was then %
\beqas
\widehat{\I}(\X \to \Y \| \Z) := \widehat{\H}(\Y \|\Z) - \widehat{\H}(\Y \| \X, \Z). \label{eq:twitter:DI_est}
\eeqas

To avoid overfitting, we used the minimum description length (MDL) penalties \cite{grunwald2007minimum}.  For a parametric entropy estimate $\widehat{\H}(\Y \| \X, \Z)$ with $J$ parameters and $n'$ observations, the MDL complexity is $J \log_2(n') / (2 n')$.  The logistic model \eqref{eq:twitter:log_model} with $J-1$ processes has $J$ parameters.  Thus, the estimate $\widehat{\I}(\X \to \Y \| \Z)$ was considered significant if 
\beqa
\widehat{\H}(\Y \| \Z) + \frac{3 \log_2(n')}{2n'} \!\!\!\!&>&\!\!\!\! \widehat{\H}(\Y \| \X, \Z) + \frac{4 \log_2(n')}{2n'} \nonumber \\
\widehat{\I}(\X \to \Y \| \Z) \!\!\!\!&>&\!\!\!\! \frac{\log_2(n')}{2n'}. \nonumber %\label{eq:twitter:DI_thresh}
\eeqa

MDL penalties were used for Algorithms~2,~3,~and~4.  For Algorithm~4, we took the maximum over $\I(\allX_B \to \X_i) - (|B|+1)\frac{\log_2(n')}{2n'}$ to  quantify how informative conditioning on the past of $\allX_B$ was above the amount expected from over-fitting.  We then set $\mathcal{B}$ as the set of all $B$'s with values within $90\%$ of that maximum.

Numerically, the coefficients $\alpha_i$ in the logistic model \eqref{eq:twitter:log_model} could have been positive or negative.  A positive coefficient $\alpha_1$ for $\widetilde{N}_{\X}(\Delta t)$ corresponded to $\Y$ having an increased likelihood of posting a tweet or retweet with a keyword, if $\X$ posted one or more tweets with keywords in the previous period.  Such positive influences were known to be present in the data; retweeting is an example.   However, a negative coefficient was more likely due to over-fitting than a news agency's activity suppressing a user's activity.  In Algorithms~2,~3,~and~4, any process $\X$ that had corresponding negative coefficient in the logistic model \eqref{eq:twitter:log_model} was rejected.  In Algorithm~4, if a set $B$ had negative coefficients, those processes were removed from $B$ and another fit on the remaining processes was performed.  This was repeated until a subset of $B$ with only positive coefficients was obtained.

\subsection{Evaluation} We now describe evaluation criteria. Each algorithm inferred a graph.  Based on the ground truth in Section~\ref{sec:twitter:ground_truth}, each (non)edge was true positive (TP), false positive (FP), true negative (TN), or false negative (FN).  The following criteria were used to evaluate the performance \cite{fawcett2004roc}. Accuracy $(TP + TN)/(TP + FP + TN + FN)$ measures the proportion of correct labels.  Precision $TP /(TP + FP)$ measures the proportion of correctly inferred edges.  True positive rate $TP/(TP + FN)$ measures the proportion of influences that were identified. False positive rate $FP/(FP + TN)$ is the proportion of non-influences that inferred as edges. We compared the algorithms to the expected performance of a baseline algorithm that knew in-degrees but randomly guessed influences.

\subsection{Results} 

The algorithms performed comparably and significantly better than baseline. Fig.~\ref{fig:Twitter:ROC} depicts the algorithms' performance on an ROC plot.   Table~\ref{tbl:comp_props} shows the values of the performance metrics.   The ground-truth graph was sparse, and the average in-degree was $1.3 \pm 0.5$.  The algorithms had approximately $95\%$ accuracy.  They correctly identified many non-influences as TN.  The baseline also had high accuracy due to the sparsity.  If a user had a single parent, even if the baseline guessed the wrong parent, $14$ of the potential influences would have been correctly identified as TN.  Each algorithm had a very low FPR, meaning they were highly conservative.  They selected few edges, but selected correctly.  Hence they had high precision.  There was some variation amongst them.  Algorithm~2 was the most conservative and Algorithm~4 with $K\!=\!3$ was the least.  This is reflected in the monotonic decrease in precision and increase in FPR.  Note that the differences in precision between the algorithms were larger than the increases in FPR because the ground-truth graph is sparse.

\begin{figure}[t]
\centering
\subfigure[ROC plot.]{\label{fig:Twitter:ROC} \includegraphics[width=.8\figwidth]{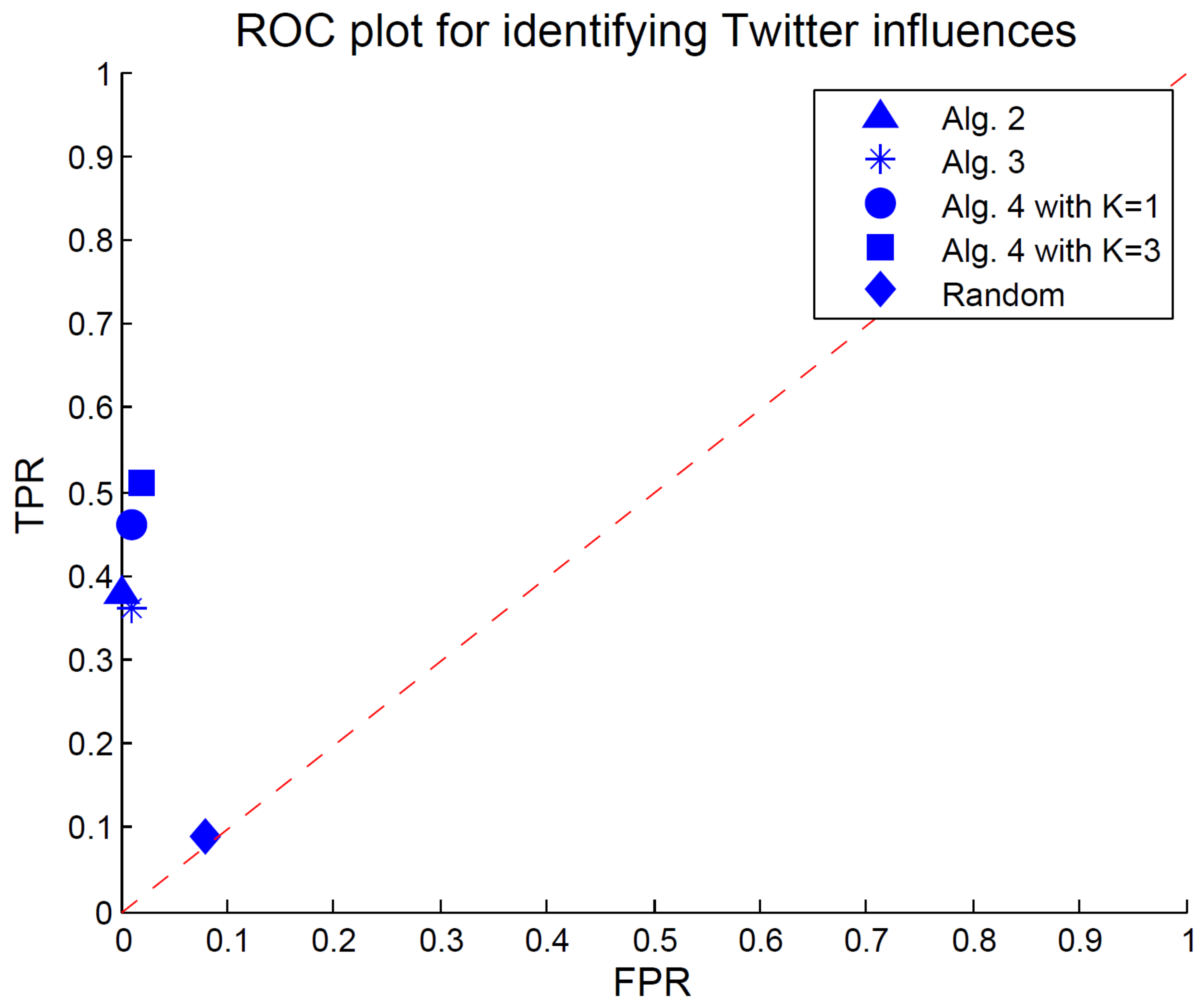}} 
\hspace{1.1cm} %
\subtable[Performance metrics.]{\label{tbl:comp_props} 
		%\begin{center}
				\begin{tabular}[b]{ |l | c | c | c | c |}
				\hline 
					 & Acc. & Prec. & TPR & FPR\\ \hline    
					Alg.~2 & 95& 100& 38& 0  \\ \hline  
					Alg.~3 & 94& 81& 36 & 1  \\ \hline  
					Alg.~4 $K=1$ & 94 & 74& 46 & 1  \\ \hline  
					Alg.~4 $K=3$ & 94& 70& 51 & 2  \\ \hline  
					Random & 86& 9& 9 & 8\\ \hline  
				\end{tabular}
		%\end{center} 
}  
\caption{Figure~\ref{fig:Twitter:ROC}  visually depicts the performance of the algorithms using a ROC plot.  All four algorithms are clustered on the y-axis.  Overall, the algorithms were conservative, selecting few influences, but selecting them correctly.  Table~\ref{tbl:comp_props} shows the accuracy, precision, TPR, and FPR for the algorithms and the baseline algorithm.  } \label{fig:twitter:results}
\end{figure}

\section{Conclusion and Future directions}
Methods that characterize which agents causally influence which other ones in a network could significantly bolster research in a number of diverse disciplines, including social sciences, economics, biology, and physics.  We proposed a widely-applicable framework to address this issue.  It included meaningful graphical representations for networks of interacting agents, multiple algorithms to identify the underlying graph---in some cases using prior knowledge to improve efficiency, and procedures to estimate required statistics from data, as well as robust algorithms when the estimates were not reliable.  We demonstrated the practical utility of the framework by identifying which news agencies influenced which users in the Twitter network with high precision.

There are a number of directions for future research.   One is to improve estimation techniques of directed information.  As discussed in this paper, there are already several estimation techniques \cite{quinn2011estimating, kim2011Granger, so2012assessing, jiao2012universal}.  Computational feasibility of current methods needs to be  further explored, especially for data-rich applications.  Also, small-sample the performance of these estimators is not well characterized.

Another major avenue of future research involves time-varying graphical models and estimation procedures.  The graphical model proposed here assumes the graph itself is time-invariant.  Likewise, the estimation techniques rely on stationarity assumptions for how the future of processes depends on the past of others, in order to establish convergence.  Especially for long-term studies of social networks, biological networks, and economic networks, graphical models and estimation techniques that can handle dynamic topologies would be greatly beneficial.

\appendices

\section{Strictly Causal Approximations and Information Decomposition} \label{sec:indent_struct:info_decom}

This work assumes strict causality (see Assumption~\ref{def:spat_cond_ind}).  Granger discussed that strict causality is a valid assumption if the sampling rate is high enough and all relevant processes are observed \cite{granger1969investigating, granger1980testing}.  In this section we demonstrate that, in an appropriate sense, strictly causal influences and instantaneous influences can be separately accounted for.  We also identify a chain rule for the total correlation of the network.

In this section, $P_{\allX}$ need not be strictly causal.  Denote the strictly causal marginal of $P_{\allX}$ as
\begin{eqnarray}
P_{\allX}^{\mathrm{caus}}(\allx) \!\!&:=&\!\!  \prod_{i=1}^m  P_{\bX_i \parallel \allX_{\setmi{m}{i}} } (\bx_i \parallel \allx_{\setmi{m}{i}}). \nonumber \label{eq:def:app:strc_caus}
\end{eqnarray}

 Let $\widehat{P}_{\allX}^{\mathrm{caus}}$ denote an approximation that is strictly causal. Define \beqa \widetilde{\mathrm{D}}( P_{\allX}^{\mathrm{caus}} \| \Phat_{\allX}^{\mathrm{caus}} )  :=      \E_{P_{\allX}} \left[\log \frac{P_{\allX}^{\mathrm{caus}}}{\widehat{P}_{\allX}^{\mathrm{caus}}} \right]\!. \label{eq:app:strccaus:Dtil}\eeqa  If the expectation were over $P^{\mathrm{caus}}_{\allX}$, then \eqref{eq:app:strccaus:Dtil} would be the KL divergence $\kldist{P_{\allX}^{\mathrm{caus}}}{\Phat_{\allX}^{\mathrm{caus}}}\!.$  The next lemma shows \eqref{eq:app:strccaus:Dtil} nonetheless is non-negative.

\begin{lemma} \label{lem:app:KLtilde_nonneg}
If $P_{\allX}$ is positive, $\widetilde{\mathrm{D}}( P_{\allX}^{\mathrm{caus}} \| \Phat_{\allX}^{\mathrm{caus}} ) \geq 0$.  Equality holds iff $P_{X_i \| \allX=\allx}(\x_i) = \Phat_{X_i \| \allX=\allx}(\x_i)$ for all $\allx \in \calX^{mn}$ and $i \in [m].$
\end{lemma}

\begin{IEEEproof} \vspace{-.3cm}\beqa
\widetilde{\mathrm{D}}( P_{\allX}^{\mathrm{caus}} \| \Phat_{\allX}^{\mathrm{caus}} )  
&=&      \E_{P_{\allX}} \left[\log \frac{P_{\allX}^{\mathrm{caus}}}{\widehat{P}_{\allX}^{\mathrm{caus}}} \right] \nonumber \\
&=&   \sum_{i=1}^m \sum_{t=1}^n   \E_{P_{\allX}} \left[\log \frac{ P_{X_{i,t} | \allX^{t-1}} (X_{i,t} | \allX^{t-1}) }{ \widehat{P}_{X_{i,t} | \allX^{t-1}} (X_{i,t} | \allX^{t-1})} \right] \label{eq:apnd:strctcaus:aa1} \\ %\nonumber\\
&=&   \sum_{i=1}^m \sum_{t=1}^n   \E_{P_{\allX^{t-1}}}  \left[ \E_{P_{X_{i,t} | \allX^{t-1}=\allx^{t-1} }} \log \frac{ P_{X_{i,t} | \allX^{t-1}} (X_{i,t} | \allx^{t-1}) }{ \widehat{P}_{X_{i,t} | \allX^{t-1}} (X_{i,t} | \allx^{t-1})} \bigg| \allX^{t-1}=\allx^{t-1} \right] \label{eq:apnd:strctcaus:aa2} \\
&=&   \sum_{i=1}^m \sum_{t=1}^n   \E_{P_{\allX^{t-1}}}  \left[ \kldist{P_{X_{i,t} | \allX^{t-1} = \allx^{t-1}} }{\Phat_{X_{i,t} | \allX^{t-1} = \allx^{t-1}}} \bigg| \allX^{t-1}=\allx^{t-1} \right] \nonumber \\
&=&   \sum_{i=1}^m    \kldist{P_{X_i \| \allX} }{\Phat_{X_i \| \allX} \big| P_{\allX}},  \label{eq:apnd:strctcaus:aa3}
\eeqa %
where \eqref{eq:apnd:strctcaus:aa1} factorizes and uses linearity of expectation, \eqref{eq:apnd:strctcaus:aa2} uses iterated expectation and normalizes, and \eqref{eq:apnd:strctcaus:aa3} uses \eqref{eq:def:condKL}.  The lemma then follows from \eqref{eq:apnd:strctcaus:aa3} using that KL-divergence is non-negative and Lemma~\ref{lemma:iffKLdivergenceZero}.
\end{IEEEproof}

Given that the joint distribution $P_{\allX}$ is not strictly causal, a natural question is how well $\Phat_{\allX}^{\mathrm{caus}} $ approximates $P_{\allX}$.  The following result shows that the divergence $\kldist{P_{\allX}}{\Phat_{\allX}^{\mathrm{caus}}}$ decomposes into the sum of a common penalty for $P_{\allX}$ violating strict causality and a second penalty measuring how well $\Phat_{\allX}^{\mathrm{caus}}$ approximates $P_{\allX}^{\mathrm{caus}}$.

\begin{theorem} \label{thm:inf_decmp:appx} If $P_{\allX}$ is positive,
$
 \kldist{P_{\allX}}{\Phat_{\allX}^{\mathrm{caus}}} = \kldist{P_{\allX} }{P_{\allX}^{\mathrm{caus}}}  %
+ \widetilde{\mathrm{D}}( P_{\allX}^{\mathrm{caus}} \| \Phat_{\allX}^{\mathrm{caus}} ) . $
\end{theorem}  

\begin{IEEEproof} \vspace{-.5cm}
\beqa%
\kldist{P_{\allX}}{\Phat_{\allX}^{\mathrm{caus}}} &=& \E_{P_{\allX}}\! \left[ \log \frac{P_{\allX}(\allX)}{\Phat_{\allX}^{\mathrm{caus}}(\allX)}   \right] \nonumber \\
&=& \E_{P_{\allX}}\! \left[ \log \frac{P_{\allX}(\allX)}{P_{\allX}^{\mathrm{caus}}(\allX)}  \frac{P_{\allX}^{\mathrm{caus}}(\allX)}{\Phat_{\allX}^{\mathrm{caus}}(\allX)} \right]  \label{eq:strctcaus:appx:3}  \\
&=& \kldist{P_{\allX} }{P_{\allX}^{\mathrm{caus}}} + \widetilde{\mathrm{D}}( P_{\allX}^{\mathrm{caus}} \| \Phat_{\allX}^{\mathrm{caus}} ),   \label{eq:strctcaus:appx:6} 
\eeqa %
where \eqref{eq:strctcaus:appx:3} multiplies by one and rearranges,   and \eqref{eq:strctcaus:appx:6} uses \eqref{eq:app:strccaus:Dtil}. 
\end{IEEEproof}

If $P_{\allX}$ is not strictly causal, $\kldist{P_{\allX} }{P_{\allX}^{\mathrm{caus}}}>0$.  Thus, $\kldist{P_{\allX}}{\Phat_{\allX}^{\mathrm{caus}}} >0$.  For this setting, we define minimal generative models according to $\widetilde{\mathrm{D}}( P_{\allX}^{\mathrm{caus}} \| \Phat_{\allX}^{\mathrm{caus}} ) = 0$ instead of \eqref{eq:MGM:6}. 

\begin{theorem}
If $P_{\allX}$ is positive, the parent sets in the directed information graph are the parent sets in $P_{\allX}^{\mathrm{caus}}$.  Algorithms~1--4 correctly recover the directed information graph.  
\end{theorem}
\begin{IEEEproof}
The proof will not repeat the proofs of the algorithms' correctness when $P_{\allX}$ is strictly causal.  When strict causality holds, Theorem~\ref{thm:app:LCeqPC}$\Rightarrow$Corollary~\ref{cor:app:PCisGenMod}$\Rightarrow$\{Theorem~\ref{thm:MGMandDIsame}, Lemma~\ref{lem:caus_data_proc}\}$\Rightarrow$ Algorithms~1, 3, and 4 are correct.    Theorem~\ref{thm:app:LCeqPC} only requires that $P_{\allX}$ be positive, not strictly causal, so it holds here.  Corollary~\ref{cor:app:PCisGenMod} depends on Theorem~\ref{thm:app:LCeqPC} and will hold provided generative models are defined by $\widetilde{\mathrm{D}}( P_{\allX}^{\mathrm{caus}} \| \Phat_{\allX}^{\mathrm{caus}} ) = 0$ instead of \eqref{eq:MGM:6}.  Lemma~\ref{lem:app:KLtilde_nonneg} ensures that equality will only hold for actual generative models (all marginals equivalent).

Theorem~\ref{thm:MGMandDIsame}, stating the directed information graph parent sets are the same as the minimal generative model parent sets, and Lemma~\ref{lem:caus_data_proc} then follow from Corollary~\ref{cor:app:PCisGenMod}.  The proofs for Algorithms~1, 3, and 4 will hold because Lemma~\ref{lem:caus_data_proc} does. Since Theorem~\ref{thm:MGMandDIsame} holds, Algorithm~2 thus also finds the parents sets of $P_{\allX}^{\mathrm{caus}}$. 
\end{IEEEproof}

Now consider problem of general approximations.  Let $\calP^{\mathrm{caus}}_{\allX}$ denote the set of strictly causal distributions.

\begin{corollary} \label{thm:apndx:argminappx} If $P_{\allX}$ is positive,
\beqas
\hspace{-0.6cm} \argmin_{\Phat_{\allX}^{\mathrm{caus}} \in  \calP^{\mathrm{caus}}_{\allX}} \kldist{P_{\allX}}{\Phat_{\allX}^{\mathrm{caus}}} \!\!\!&=&\!\!\! \argmin_{\Phat_{\allX}^{\mathrm{caus}} \in  \calP^{\mathrm{caus}}_{\allX}}  \widetilde{\mathrm{D}}( P_{\allX}^{\mathrm{caus}} \| \Phat_{\allX}^{\mathrm{caus}} ) . \eeqas
\end{corollary}  

\begin{IEEEproof}
The proof is immediate from Theorem~\ref{thm:inf_decmp:appx}  as  $\kldist{P_{\allX} }{P_{\allX}^{\mathrm{caus}}} $ has no dependence on $\Phat_{\allX}^{\mathrm{caus}}$.
\end{IEEEproof}
The importance of Corollary~\ref{thm:apndx:argminappx} is that if a researcher searches for a strictly causal approximation, how good it is depends on how closely it approximates $P_{\allX}^{\mathrm{caus}}$.  From Lemma~\ref{lem:app:KLtilde_nonneg}, only distributions $\Phat_{\allX}^{\mathrm{caus}}$ with correct marginals will minimize $\kldist{P_{\allX}}{\Phat_{\allX}^{\mathrm{caus}}}$.  The best approximation $\Phat_{\allX}^{\mathrm{caus}}$ for $P_{\allX}$ is thus the best for $P_{\allX}^{\mathrm{caus}}$.

We next evaluate $\kldist{P_{\allX} }{P_{\allX}^{\mathrm{caus}}}$ and show a chain rule for all statistical relationships in the network.  We will use 
the total correlation, a generalization of mutual information \cite{watanabe1960information}. For a set of random variables $\{A, B, C\}$, it is 
\beqa
\I(A;B;C) := \kldist{P_{A,B,C}}{P_A P_B P_C}. \label{eq:def:totcor}
\eeqa  
We will use the following notation 
\beqa
\bar{\I}(\X_1; \dots;\! \X_m) \!:=\! \sum_{t=1}^n \I(X_{1,t}; \dots ;\! X_{m,t} |\allX_{[m]}^{t-1}). \label{eq:def:timecondtotcor}
\eeqa  This is the sum over time of the total correlation of all the processes at time $t$, conditioned on the full past.  We first characterize how close $P_{\allX}^{\mathrm{caus}}$ is to $P_{\allX}$.
\begin{lemma} \label{lem:inf_decmp:strct_caus} $\kldist{P_{\allX} }{P_{\allX}^{\mathrm{caus}}} = \bar{\I}(\X_1; \dots; \X_m). $
\end{lemma}

\begin{proof} \vspace{-.7cm}
\beqa
 \kldist{P_{\allX} }{P_{\allX}^{\mathrm{caus}}} 
&=& \sum_{t=1}^n \E_{P_{\allX}} \left[ \log \frac{P_{\allX_{t} | \allX^{t-1}  }(\allX_{t} | \allX^{t-1})}{\prod_{i=1}^m P_{X_{i,t} | \allX^{t-1} } (X_{i,t} | \allX^{t-1}) } \right] \label{eq:info_decomp:1a}\\
&=& \sum_{t=1}^n \I(X_{1,t}; \dots ; X_{m,t} | \allX^{t-1}) \label{eq:info_decomp:2}\\
&=& \bar{\I}(\X_1; \dots; \X_m), \label{eq:info_decomp:3}
\eeqa  where \eqref{eq:info_decomp:1a} plugs in and factorizes over time, \eqref{eq:info_decomp:2} follows from  \eqref{eq:def:totcor},  and \eqref{eq:info_decomp:3} uses the notation \eqref{eq:def:timecondtotcor}. 
\end{proof}

Lastly, we identify a chain rule for the total correlation of the network.  This shows that all of the statistical dependencies between processes decompose into stictly causal and instantaneously correlative components.
\begin{lemma} \label{lem:inf_decmp:main} $\I(\X_1; \dots; \X_m) = \sum_{i=1}^m \I(\allX_{[m]\backslash\{i\}} \to \X_i) + \bar{\I}(\X_1; \dots; \X_m).$
\end{lemma} 

\begin{proof} \label{apdx:prf:lem:inf_decmp:main}\vspace{-.6cm}
\beqa
\I(\X_1; \dots; \X_m) 
%
%
%&=&  
&=&\E_{P_{\allX}} \left[ \log \frac{P_{\allX}(\allX)}{\prod_{i=1}^m P_{\X_i} (\X_i)}   \right] \label{eq:totcor:1} \\
&=&  \E_{P_{\allX}} \left[ \log \frac{P_{\allX}(\allX)}{{\prod_{i=1}^m P_{\X_i} (\X_i) }   }  + \log \frac{P_{\allX}^{\mathrm{caus}}(\allX)}{P_{\allX}^{\mathrm{caus}}(\allX) }  \right]  \label{eq:totcor:2} \\
&=&\E_{P_{\allX}} \left[ \log \frac{P_{\allX}(\allX)}{ P_{\allX}^{\mathrm{caus}}(\allX)   }  \right]  %\nonumber \\
%
%
%&& \hspace{0.3cm} 
+\E_{P_{\allX}} \left[ \log  \frac{\prod_{i=1}^m P_{\X_i \| \allX_{[m]\backslash\{i\}}}(\X_i \| \allX_{[m]\backslash\{i\}})}{\prod_{i=1}^m P_{\X_i}(\X_i)}   \right] \label{eq:totcor:3} \\
&=& \bar{\I}(\X_1; \dots ;\X_m) + \sum_{i=1}^m \I(\allX_{[m]\backslash\{i\}} \to \X_i).  \label{eq:totcor:4}
\eeqa
Eq.~\eqref{eq:totcor:1} follows from the definition \eqref{eq:def:totcor}, \eqref{eq:totcor:2} adds zero inside the expectation, \eqref{eq:totcor:3} rearranges and uses linearity of expectation, and \eqref{eq:totcor:4} uses Lemma~\ref{lem:inf_decmp:strct_caus} and \eqref{eqn:defn:ccDirectedInformation}.
\end{proof}

The chain rule for the two process case, when total correlation is mutual  information $\I(\X;\Y)$, was shown by Marko \cite{marko1973bidirectional} assuming strict causality and by Gourieroux et al. \cite{gourieroux1987kullback} and Solo \cite{solo2008causality} without that assumption.

\section{Proof of Proposition~\ref{prop:DIisGranger}} \label{app:prop:DIisGranger}

Directed information has been proposed as a general measure of Granger causality  \cite{rissanen1987measures, bouissou1986tests, gourieroux1987kullback}.  It is  consistent with ``strong'' Granger causality \cite{chamberlain1982general, florens1982note} which uses conditional independence tests.  Directed information reduces to Geweke's statistic $\log \tilde{\sigma}^2 / \sigma^2$, the common form of linear Granger causality, when the processes are jointly Gaussian \cite{barnett2009granger, amblard2009relating}.  We note that \cite{amblard2012relation}, which came after our preliminary \cite{quinn2011generalized} and a draft of the present paper, also considers a  prediction setting and observes that different setups yield different versions of Granger causality.  In \cite{amblard2012relation}, the result that the conditional mean is the optimal predictor under quadratic loss is used to motivate the conditional dependence based strong Granger causality and thus directed information.

We provide further motivation beyond corresponding to strong Granger causality.  The conditional independence tests are {\em sufficient} in that if $X_{i,t}$ is independent of $X_j^{t-1}$, then $X_j^{t-1}$ should not be help predict $X_{i,t}$.  But they might not be {\em necessary}.  There might be sequential prediction settings, i.e. certain loss functions and prediction spaces, such that even though $X_{i,t}$ is conditionally dependent on $X_j^{t-1}$, $X_j^{t-1}$ does not help predict $X_{i,t}$.  Thus, we do not want strong Granger causality to be the anchor justifying directed information for capturing Granger causality, when strong Granger causality might not capture Granger's statement in certain settings.

Granger's original statement was in terms of how much side information helps in prediction.  We will show that there are many variations of sequential prediction problems with side information and consequently numerous possible formulations of Granger causality.  We then show that directed information is precisely the value of causal side information in a specific, sequential prediction problem.

\newcommand{\pred}{q}
\newcommand{\tpred}{\tilde{q}}
\newcommand{\predSpace}{\mathcal{Q}}
\newcommand{\regret}{R}
\newcommand{\feasibleset}{\mathcal{A}}
\newcommand{\tfeasibleset}{\mathcal{\tilde{A}}}

We will quantify how much the causal side information of $\X_j$ helps in sequentially predicting $\X_i$.  (See \cite{cesa2006prediction} for an overview of sequential prediction.)  Consider two predictors sequentially forming predictions about $\X_i$ in some decision space $\predSpace$, a convex subset of a vector space.    
At time $t$, one predictor knows the full past of all the processes, $\allX^{t-1}$, and specifies a prediction $\pred_t(\allX^{t-1}) \in \predSpace$.   The other predictor knows the past of all the processes except $\X_j$, $\allX^{t-1}_{[m]\backslash \{j\}}$, and specifies a prediction $\tpred_t(\allX^{t-1}_{[m]\backslash \{j\}}) \in \predSpace$.  We will suppress the arguments of $\pred_t$ and $\tpred_t$ for simplicity.  Define the spaces of candidate predictions as $\feasibleset_t \!\!=\!\! \{ \pred_t: \calX^{m(t-1)} \to \predSpace  \} $ and $\tfeasibleset_t \!\!=\!\! \{\tpred_t:  \calX^{(m-1)(t-1)} \to \predSpace \}.$

Subsequently, $X_{i,t}$ is revealed, and a loss function $l:  \predSpace \times \calX \to \reals^+$ assesses the loss $l(p,x)$ for a prediction $p \in \predSpace$ given the outcome $x$.  Thus, one predictor
incurs loss $l(\pred_t,X_{i,t})$ and the other incurs $l(\tpred_t,X_{i,t})$.  The reduction in loss $ r_t(\pred_t,\tpred_t,X_{i,t}  ) := l(\tpred_t,X_{i,t})-l(\pred_t,X_{i,t})$ characterizes how much the side information of $X_j^{t-1}$ helps.  

There are many choices of the decision space $\predSpace$, the loss function $l$, and how to combine the reductions in loss over time.  For instance,  $\predSpace$ could be binary or a probability simplex,  $l$ could be absolute loss or Hellinger loss, and the loss could be combined over the horizon in a discounted or minimax manner \cite{cesa2006prediction}.  For any such sequential prediction setting, the value of the causal side information could be measured as a form of Granger causality.  We next focus on a particular setting for which directed information emerges as the value of Granger causality.

Let the decision space $\predSpace$ be the space of probability measures over $\calX$,
$\predSpace = \{p \in \probSimplex{\calX}\}.$
A natural loss function for probability measures is the logarithmic loss
$ l(\pred,x) =  -\log \pred(x). $ We consider the expected cumulative reduction in loss between the predictions in $\feasibleset_t$ and $\tfeasibleset_t$ respectively whose expected cumulative loss is minimal.  This is analogous to how Granger's test compares the linear models with smallest mean-square error.  By linearity of expectation, we can focus on minimizing instantenous loss \vspace{-0cm}\beqa
\pred^*_t \!\!\!&=&\!\!\! \argmin_{\pred_t \in \feasibleset_t} \E_{P_{X_{i,t}}} \brackets{ l(\pred_t,X_{i,t})}, \label{eqn:optimalPredictor:a} \\
\tpred^*_t \!\!\!&=&\!\!\! \argmin_{\tpred_t \in \tfeasibleset_t} \E_{X_{i,t}} \brackets{ l(\tpred_t,X_{i,t})} \label{eqn:optimalPredictor:b}.
\eeqa

The expected cumulative reduction in loss is
$
\bar{R}(\pred^*,\tpred^*) := \E_{P_{\allX}} \brackets{\sum_{t=1}^n r_t(\pred^*_t,\tpred^*_t,X_{i,t})}. \nonumber
$

We now state our main theorem, showing that the optimal predictors $\pred^*$ and $\tpred^*$ are the true conditional distributions and that the reduction in expected loss is precisely the causally conditioned directed information.
\begin{theorem} \label{thm:log_loss_thm_dir_info_zt}
The optimal solutions to \eqref{eqn:optimalPredictor:a} and \eqref{eqn:optimalPredictor:b} are given by \beqas
\pred^*_t(x_{i,t}) \!\!\!&=&\!\!\! P_{X_{i,t} | \allX_{[m]}^{t-1}}(x_{i,t}| \allX_{[m]}^{t-1}) \\ %P_{Y_t|\past_t} \\
\tpred^*_t(x_{i,t}) \!\!\!&=&\!\!\! P_{X_{i,t} | \allX_{[m] \backslash \{j\}}^{t-1}}(x_{i,t}| \allX_{[m] \backslash \{j\}}^{t-1}), 
\eeqas
where we continue to suppress the argument of the past $\allX_{[m]}^{t-1}$ in $\pred^*_t$ and $\allX_{[m] \backslash \{j\}}^{t-1}$ for $\tpred^*_t$. The expected cumulative reduction in loss is given by the causally conditioned directed information
$
\bar{R}(\pred^*,\tpred^*)  =   \I \left( \X_j \to \X_i \| \allX_{\setmi{m}{i,j}} \right).
$
\end{theorem}
\begin{IEEEproof}
Note that
\beqa
\pred^*_t \!\!\!&=& \!\!\!\!\argmin_{\pred_t \in \feasibleset_t} \ \E_{P_{X_{i,t}}} \brackets{-\log \pred_t(X_{i,t})  } \nonumber \\
          \!\!\!&=& \!\!\!\! \argmin_{\pred_t \in \feasibleset_t} \ \E_{P_{X_{i,t}}} \brackets{-\log  P_{X_{i,t} | \allX^{t-1}}(X_{i,t}| \allX^{t-1})  +\log \frac{ P_{X_{i,t} | \allX^{t-1}}(X_{i,t}| \allX^{t-1}) }{\pred_t(X_{i,t})} }  \label{proof:eqn:optimalPredictor:a:b}  \\
          \!\!\!&=& \!\!\!\!\argmin_{\pred_t \in \feasibleset_t} \  \kldist{P_{X_{i,t} | \allX^{t-1}}}{\pred_t  } \label{proof:eqn:optimalPredictor:a:c}
\eeqa
where \eqref{proof:eqn:optimalPredictor:a:b} multiplies by one inside the $\log$ and \eqref{proof:eqn:optimalPredictor:a:c} follows from the definition of divergence and that the left-hand term in the expectation does not effect the $\argmin$. From the non-negativity of the KL divergence,
$\pred^*_t(x)=P_{X_{i,t} | \allX^{t-1}}(x|\allX^{t-1})$.  Similarly, $\tpred^*_t(x)=P_{X_{i,t} | \allX_{[m] \backslash\{j\}}^{t-1}}(x|\allX_{[m] \backslash\{j\}}^{t-1})$.

We now discuss using Granger's notion of ``better'' to address the two predictors.  The reduction in loss becomes a log-likelihood ratio
\beqas
 r_t(\pred^*_t,\tpred^*_t,X_{i,t}) = \log \frac{\pred^*_t(X_{i,t})}{\tpred^*_t(X_{i,t})} = \log \frac{P_{X_{i,t} | \allX^{t-1}}(X_{i,t}| \allX^{t-1})}{P_{X_{i,t} | \allX_{[m] \backslash\{j\}}^{t-1}} (X_{i,t} | \allX_{[m] \backslash\{j\}}^{t-1}) }.
\eeqas
Thus,\vspace{-.4cm}
\beqas
\bar{R}(\pred^*,\tpred^*) &=& \E_{P_{\allX}} \brackets{\sum_{t=1}^n r_t(\pred^*_t,\tpred^*_t,X_{i,t})} \nonumber \\
                          &=& \E_{P_{\allX}} \brackets{\sum_{t=1}^n \log \frac{P_{X_{i,t} | \allX^{t-1}}(X_{i,t}| \allX^{t-1})}{P_{X_{i,t} | \allX_{[m] \backslash\{j\}}^{t-1}} (X_{i,t} | \allX_{[m] \backslash\{j\}}^{t-1}) }} \\
                          &=&   \I \left( \X_j\to \X_i \| \allX_{\setmi{m}{i,j}} \right). \vspace{-1cm}\eeqas 
\end{IEEEproof}

Theorem~\ref{thm:log_loss_thm_dir_info_zt} states that in sequentially predicting $X_{i,t}$, the expected cumulative reduction in loss due to the causal side information $X_j^{t-1}$ is precisely the directed information when the predictors are probability measures and the loss is the logarithmic loss.  Thus, in this setting, we can interpret the value of directed information as quantifying the ``strength'' of the influence in the reduction in bits.  In the preliminary work \cite{quinn2011generalized}, we explore other sequential prediction settings, such as minimax, where the value of Granger's statement is a different quantity.

A recent work \cite{jiao2014justification} demonstrated that for the above sequential prediction problem, the log loss is not ``special.''  If a data processing axiom holds and the alphabet is non-binary, $|\calX|>2$, with any other loss function $l$ the expected reduction in loss is proportional to the directed information $\bar{R}(\pred^*,\tpred^*)\propto \I \left( \X_j\to \X_i \| \allX_{\setmi{m}{i,j}} \right).$ Thus, for any such loss function $l$, only when $\X_i$ is causally conditionally independent of $\X_j$ given $\X_{[m] \backslash \{i,j\}}$ will $\bar{R}(\pred^*,\tpred^*)=0.$

\section{Proof of Theorem~\ref{thm:MGMandDIsame}} \label{app:prf:MGMandDIsame}

Before proving Theorem~\ref{thm:MGMandDIsame}, first consider Definition~2.2 of \cite{eichler2012graphical}.  For any graph $G$ and distribution $P_{\allX}$   the ``pairwise Granger-causal Markov property'' (PC) and the ``local Granger-causal Markov property'' (LC) are defined   
\beqa
&& \text{(PC): for all $i,j \in [m]$, if the edge $\X_j \to \X_i$ is not in $G$, } \I(\X_j \to \X_i \| \allX_{[m]\backslash \{i\}})  = 0, \nonumber\\
&& \text{(LC): for all $i \in [m]$, let $A'(i)$ denote parent indices for $\X_i$ in $G$, then } \I(\X_{[m]\backslash \{A'(i) \cup \{i\}\}} \to \X_i \| \allX_{A'(i) })  = 0. \nonumber
\eeqa

A distribution $P_{\allX}$ ``satisfies'' the pairwise property with respect to $G$ if (PC) holds, and likewise for (LC). 

\begin{theorem}[\!\!\cite{eichler2012graphical}] \label{thm:app:LCeqPC}
Under Assumption~\ref{def:spat_cond_ind}, for any graph $G$ and distribution $P_{\allX}$, (PC)$\Leftrightarrow$(LC).
\end{theorem}

\begin{corollary} \label{cor:app:PCisGenMod} Under Assumption~\ref{def:spat_cond_ind}, for a graph $G$ with parent set function $A'$ and for a distribution $P_{\allX}$, (PC) $\Leftrightarrow$ (LC) $\Leftrightarrow$ $A'$ is a generative model, satisfying \eqref{eq:MGM:6}.
\end{corollary} 
\begin{IEEEproof}
We show (LC) $\Leftrightarrow$ $A'$ is a generative model, then invoke Theorem~\ref{thm:app:LCeqPC}.  For all $i \in [m]$, (LC) means $\I(\X_{[m]\backslash \{A'(i) \cup \{i\}\}} \to \X_i \| \allX_{A'(i) })  = 0.$  We can extend the definition of causally conditioned directed information \eqref{eqn:defn:ccDirectedInformation} with $[m]\backslash \{A'(i) \cup \{i\}\}$ in place of $j$ to obtain \eqref{eq:MGM:a1}, which in turn means $A'$ is a generative model.  The steps are reversible so by Theorem~\ref{thm:app:LCeqPC},  (PC) $\Longleftrightarrow$ (LC) $\Longleftrightarrow$ $A'$ is a generative model. %
\end{IEEEproof}

\newcommand{\wtA}{\widetilde{A}}

We now can prove Theorem~\ref{thm:MGMandDIsame}.  
\begin{IEEEproof}
Let $A'$ denote the parent set function induced by the directed information graph $G$.  By construction, $G$ satisfies (PC) and so by Corollary~\ref{cor:app:PCisGenMod}, the parent set function $A'$  is a generative model. 

We now show that the induced $A'$ is minimal and unique.  Let $\wtA$ be a minimal generative model for $P_{\allX}$.  We first show by contradiction that for all $i \in [m]$, $A'(i) \subseteq \wtA(i)$.  Suppose not, so for some $i$, $A'(i) \backslash \wtA(i)$ is non-empty. Let $j \in A'(i) \backslash \wtA(i)$.  Since $j \notin \wtA(i)$ and $\wtA$ is a generative model, by Corollary~\ref{cor:app:PCisGenMod} (PC) holds, so $\I(\X_j \to \X_i \| \allX_{[m]\backslash\{j\}}) = 0$.  However, $j \in A'(i)$, equivalently the edge $\X_j \to \X_i$ present in $G$, means by Definition~\ref{def:dir_info_grph} that $\I(\X_j \to \X_i \| \allX_{[m]\backslash\{j\}}) > 0$, a contradiction. Since for all $i \in [m]$, $A'(i) \subseteq \wtA(i)$, and $\wtA$ is a minimal generative model by construction, $|\wtA(i)| \leq |A'(i)|$ for all $i$, then $A' = \wtA$.  As $\wtA$ was an arbitrary minimal generative model, $A'$ must be the unique, minimal generative model for $P_{\allX}$. %
\end{IEEEproof}

%%%%%%%%%%%%%%%%%%%%%%%%%%%%%%%%%%%%%%%%%%%%%%%%%%%%%%%%%%

\section{Proof of Lemma~\ref{lem:caus_data_proc}} \label{app:prf:caus_Mark_chain}

\begin{IEEEproof}  We first show the inequality \eqref{eq:lem:data_proc:gen} in Lemma~\ref{lem:caus_data_proc}.   Combining \eqref{eq:MGM:a1} in the definition of minimal generative models and directed information \eqref{eqn:defn:ccDirectedInformation} yields 
\beqa
0 \!\!&=&\!\! \I (\allX_{[m]   } \to \X_i  \|\allX_{A(i)}) \nonumber \\
&=&\!\!\!\! \I (\allX_{ B(i) } \!\to\! \X_i \|\allX_{A(i)}) + \I (\allX_{W(i)  } \!\to\! \X_i \|\allX_{B(i)}) %\nonumber \\% \label{eq:prfMrkChn:a2b} \\
%
%
%&& \hspace{0.6cm}
+ \I (\allX_{[m] \backslash \{ W(i) \cup B(i) \} } \!\to\! \X_i \| \allX_{W(i) \cup B(i) }) \label{eq:prfMrkChn:a1} %\\
\eeqa 
where \eqref{eq:prfMrkChn:a1} uses the chain rule with $A(i) \subseteq B(i)$.  Since directed information is non-negative, each term in \eqref{eq:prfMrkChn:a1} must be zero.
Using the chain rule again,
\beqa
\I (\allX_{W(i) \cup B(i) } \to \X_i ) %
&=& \I (\allX_{W(i)} \to \X_i ) + \I (\allX_{B(i) } \to \X_i \|\allX_{W(i)}) \label{eq:prfMrkChn:a3} \\ 
&=&  \I (\allX_{B(i)} \to \X_i ) + \I (\allX_{W(i) } \to \X_i \|\allX_{B(i)}) \label{eq:prfMrkChn:a5} \\ 
&=& \I (\allX_{B(i)} \to \X_i )  \label{eq:prfMrkChn:a6}  
\eeqa where \eqref{eq:prfMrkChn:a3} and \eqref{eq:prfMrkChn:a5} apply the chain rule in different ways and \eqref{eq:prfMrkChn:a6}  uses that $\I (\allX_{W(i)  } \!\to\! \X_i \|\allX_{B(i)})$ must be zero from \eqref{eq:prfMrkChn:a1}.  Consequently, \eqref{eq:prfMrkChn:a3} and \eqref{eq:prfMrkChn:a6} imply 
$\I( \allX_{W(i)} \to \bX_i  ) \leq  \I( \allX_{B(i)} \to \bX_i  ). $  This is \eqref{eq:lem:data_proc:gen} from Lemma~\ref{lem:caus_data_proc}.

We show that equality only occurs when $A(i) \subseteq W(i)$.  The proof is by contradiction.  Suppose equality holds but $A(i) \not \subseteq W(i)$.    From \eqref{eq:prfMrkChn:a3} and \eqref{eq:prfMrkChn:a6}, equality occurs when $ \I( \allX_{B(i)} \to \bX_i \| \allX_{W(i) } ) = 0 $.  Using the chain rule, %
\beqa
0 \!\!\!\!&=&\!\!\!\! \I( \allX_{B(i)} \to \bX_i \| \allX_{W(i)} ) \nonumber \\
&=&\!\!\!\! \I( \allX_{B(i)} \to \bX_i \| \allX_{W(i) } ) + \I (\allX_{[m] \backslash \{ W(i) \cup B(i) \} } \!\to\! \X_i \| \allX_{W(i) \cup B(i) }) \label{eq:prfMrkChn:c1}  \\
&=&\!\!\!\! \I (\allX_{[m] \backslash W(i)   } \to \X_i  \|\allX_{W(i)}) \label{eq:prfMrkChn:c2} 
\eeqa where \eqref{eq:prfMrkChn:c1} adds a zero term from \eqref{eq:prfMrkChn:a1} and \eqref{eq:prfMrkChn:c2} uses the chain rule in reverse.  By Corollary~\ref{cor:app:PCisGenMod}, \eqref{eq:prfMrkChn:c2} implies that $W(i)$ is a valid generative model parent set.  Since $A(i)$ is the minimal generative model parent set,  applying that (LC)$\Rightarrow$(PC) for both $W(i)$ and $A(i)$, then for all $j \in [m] \backslash (W(i) \cap A(i))$, $\I(\X_j \to \X_i \| \allX_{[m]\backslash\{j\}}) = 0$.  Corollary~\ref{cor:app:PCisGenMod} implies $W(i) \cap A(i)$ is also a generative model parent set.  However, $|W(i) \cap A(i)| < |A(i)|$ and $|A(i)|$ was minimal by construction, a contradiction. Thus equality only occurs when $A(i) \subseteq W(i).$ %
\end{IEEEproof}

\section{Proof of Theorem~\ref{thm:alg1}} \label{prf:thm:alg1}

\begin{IEEEproof}
We first show by recursion that for each $i\in [m]$, the parent set $A(i)$ returned by Algorithm~1 contains the true set $A^*(i)$.  Suppose line~6 has been called $r$ times, and let $A_r(i)$ denote the current parent set.  The inductive hypothesis is $A^*(i) \subseteq A_r(i)$.  The base case is $A_0(i) = [m]\backslash \{i\}$ and the hypothesis trivially holds.

Let $r\geq 0$ and suppose the inductive hypothesis is true.  If line~6 is called again for the $(r+1)$th time, then for the current $k$ and $B(i)$, with $\{k\} \cup B(i) = A_r(i)$, $\I(\X_k \to \X_i \| \allX_{B(i)}) = 0$.  By the chain rule, \beqa
0 &=& \I(\X_k \to \X_i \| \allX_{B(i)}) \nonumber \\
&=& \I(\X_{\{k\} \cup B(i)} \to \X_i ) - \I(\X_{ B(i)} \to \X_i ) \nonumber \\
\Longrightarrow \I(\X_{ B(i)} \to \X_i ) &=&  \I(\X_{\{k\} \cup B(i)} \to \X_i ). \label{prf:alg1:1} 
\eeqa  Since $\{k\} \cup B(i) = A_r(i)$ contains the full parent set $A^*(i)$ by the inductive hypothesis, by Lemma~\ref{lem:caus_data_proc} \eqref{prf:alg1:1} only holds if $B(i)$ also contains the full parent set $A^*(i)$.  This concludes the first part of the proof.

We prove that the inferred set $A(i)$ only contains the true $A^*(i)$.  For each $k \not \in A^*(i)$, when line~5 is evaluated for that $k$, the current $B(i)$ will contain $A^*(i)$.  By Lemma~\ref{lem:caus_data_proc} \eqref{prf:alg1:1} holds, so line~5 will evaluate as true and $k$ will be removed.  The {\bf for} loop in line~3 is over all  $\X_k$ with $k \in [m]\backslash \{i\}$, so all non-parents will be removed. \end{IEEEproof}

\section{Example of a Network Requiring High-Dimensional Statistics for Recovery} \label{ex:alg3:xor}

Algorithms~3~and~4 recover the graph using directed informations of up to $K+2$ and $K+1$ processes respectively, where $K$ is the size of the largest parent set.  The following example shows that in general, no algorithm can recover the graph only using directed informations involving $K$ processes or less.

\begin{example} \label{ex:noisyXOR}
Let $\bW$, $\bX$, $\bY$, and $\bZ$ be four processes, with $\bW$, $\bX$, and $\bY$ independent processes, each i.i.d. Bernoulli($\frac{1}{2}$).  Let $\oplus$ denote the exclusive-or, which sums its arguments modulo-2 (i.e. $0 \oplus 1 = 1$, $1 \oplus 1 = 0$).  Let $Z_t = W_{t-1} \oplus X_{t-1} \oplus Y_{t-1} +N_t$ for i.i.d. Gaussian noise $N_t$.  Note that $W_{t-1} \oplus X_{t-1} \oplus Y_{t-1}$ is a Bernoulli($\frac{1}{2}$) variable whether conditioned on zero, any one, or any two of its arguments.  Thus, $\I(\bW \to \Z) = \I(\X \to \Z) = \I(\Y \to \Z) = 0$ and $\I(\bW, \X \to \Z) = \I(\bW, \Y \to \Z) = \I(\X, \Y \to \Z) = 0,$  even though $\I(\bW, \X, \Y \to \Z) > 0$.  Thus, any algorithm only using directed informations of up to $K=3$ processes could not distinguish whether $\Z$ had three or zero parents. 
\end{example}

\section{Proof of Theorem~\ref{thm:alg3}} \label{prf:thm:alg3}

\begin{IEEEproof}
Let $A(i)$ and $A^*(i)$ denote the returned and true parent sets for $\X_i$, respectively.  We first prove by contradiction that for each process $\X_i$, no true parent is removed, so $A^*(i) \subseteq A(i)$.  Consider any $i \in [m]$ and assume that some parent $j \in A^*(i)$ is removed in line~9.  For the corresponding set $B$ such that line~8 evaluated as true, $B \subseteq [m] \backslash \{i\}$ necessarily.  But Assumption~2 then implies that if $ \I(\X_j \to \X_i \|   \allX_{B\backslash \{j\}} ) =0 $ then  $\I(\X_j \to \X_i \|   \allX_{[m]\backslash \{i,j\}} ) =0 $, which contradicts $j \in A^*(i)$.  Thus, $A^*(i) \subseteq A(i)$.

We next show that $A^*(i) = A(i)$.  From the above, $A^*(i) \subseteq A(i)$, so $|A^*(i)| \leq |A(i)|$ always holds.  While $K < |A^*(i)|$, $K+1 \leq |A^*(i)|\leq |A(i)|$, so line~4 will evaluate as true and lines~5--11 will execute.  We consider two cases.  First, if line~4 evaluates as false when $K=|A^*(i)|$, then $|A^*(i)|+1=K+1>|A(i)|\geq|A^*(i)|$, which implies that $|A(i)|=|A^*(i)|$.  With $A^*(i) \subseteq A(i)$, that implies $A^*(i) = A(i)$.

The second case is if line~4 evaluates as true when $K=|A^*(i)|$.  Then consider any non-parent $j \in A(i) \backslash A^*(i)$.  Since $K=|A^*(i)|$, $A^*(i) \in \mathcal{B}$, when line~7 sets $B=A^*(i)$, line~8 will find
\beqa
\I(\X_j \to \X_i \|   \allX_{B} ) &=& \I(\X_j \to \X_i \|   \allX_{A^*(i)} ) \nonumber \\
%&=& \I(\X_j \to \X_i \|   \allX_{A^*(i)} ) \nonumber \\
&=& \I(\allX_{A^*(i) \cup \{j\}} \to \X_i ) - \I(\allX_{A^*(i) } \to \X_i ) = 0, \label{eq:prf:alg3:1}
\eeqa
where \eqref{eq:prf:alg3:1} follows from the chain rule and Lemma~\ref{lem:caus_data_proc}.
\end{IEEEproof}

\section{Proof of Theorem~\ref{thm:genstructmain}} \label{prf:thm:alg4}

\begin{IEEEproof}
First consider the case when the bounds are tight, $K(i) = |A^*(i)|$. By Lemma~\ref{lem:caus_data_proc}, only the $B \in \mathcal{B}$ containing the parent set, $A^*(i)\in B$, will have maximal value, so $\mathcal{B} = \{ A^*(i) \}.$

Next consider the case when the bounds are loose, $K(i) > |A^*(i)|$. Then by Lemma~\ref{lem:caus_data_proc}, only the $B \in \mathcal{B}$ with $ A^*(i) \subseteq B$  will have maximal value, so $A^*(i) \subseteq \hspace{-0.2cm}  \bigcap \limits_{B \in \calB_{\mathrm{max}}} \hspace{-0.4cm} B.$  We next show that only $A^*(i)$ is in the intersection.

Let $B_j \in \mathcal{B}$ denote any set containing $A^*(i)$ and a non-parent $j \in [m] \backslash A^*(i) \backslash \{i\}$.  Since $|B_j| = K(i) < m-1$, there is at least one non-parent $j'$ not in $B$.  By construction (line~3), there is another set $B_{j'} \in \mathcal{B}$ that is the same except $j$ and $j'$ are swapped, $B_{j'} = \{j'\} \cup B_j \backslash \{j\}$.  By Lemma~\ref{lem:caus_data_proc} both sets have maximal influence, so $B_j, B_{j'} \in \calB_{\mathrm{max}}$ and thus neither $j$ nor $j'$ appear in the intersection, $j, j' \not \in \hspace{-0.2cm}  \bigcap \limits_{B \in \calB_{\mathrm{max}}} \hspace{-0.4cm} B$.  Therefore $A^*(i) = \hspace{-0.2cm} \bigcap \limits_{B \in \calB_{\mathrm{max}}} \hspace{-0.4cm} B.$  
\end{IEEEproof}

\section{Proof of Theorem~\ref{thm:emp_est_smpl}} \label{app:prf:emp_est_smpl}

\begin{proof} We  obtain concentrations on the empirical distribution from the Hoeffding and union bounds and then use an $L_1$ bound on entropy to translate concentrations on entropies to ones on the directed information estimates.  

Recall that $\epsilon$ measures the error in the empirical probability estimates, $\delta$ measures the error in the directed information estimates, and $B_{\delta}$ is the event that all of the directed information estimates have error at most $\delta$.  We require the following concentrations on the empirical probability distributions.  For every pair $(i,j)$, for every possible realization $\{x_{j}^{l}, x_{i}^{l+1} \} \in \calX^{2l+1} $, and, for a given $\epsilon>0$ which we will later fix as a function of $\delta$, \begin{equation}\begin{aligned} \label{eq:nonpar:prob_conc}
|\widehat{P}_{X_{j}^l, X_i^{l+1}}(x_{j}^l, x_i^{l+1}) - P_{X_{j}^l, X_i^{l+1}}(x_{j}^l, x_i^{l+1}) | &<&\!\!\!\! \epsilon\phantom{.} \\
|\widehat{P}_{X_{j}^l, X_i^{l}}(x_{j}^l, x_i^{l}) - P_{X_{j}^l, X_i^{l}}(x_{j}^l, x_i^{l}) | &<&\!\!\!\! \epsilon \phantom{.}\\
|\widehat{P}_{ X_i^{l+1}}( x_i^{l+1}) - P_{ X_i^{l+1}}( x_i^{l+1}) | &<&\!\!\!\! \epsilon\phantom{.} \\
|\widehat{P}_{ X_i^{l}}( x_i^{l}) - P_{ X_i^{l}}( x_i^{l}) | &<&\!\!\! \epsilon .
\end{aligned}\end{equation}  

From a Hoeffding inequality for Markov chains \cite{glynn2002hoeffding}, for any $(i,j)$ and realization $\{x_{j}^{l}, x_{i}^{l+1}\} \in \calX^{(2l+1)}$,  
\beqa
\prob{ \abs{\widehat{P}_{X_{j}^l, X_i^{l+1}}(x_{j}^l, x_i^{l+1}) - P_{X_{j}^l, X_i^{l+1}}(x_{j}^l, x_i^{l+1}) } \geq  \epsilon}   \leq 2 \exp\left( - \frac{(n \epsilon - 2 d/\lambda)^2}{2 n d^2 / \lambda^2} \right). \label{eq:hoeff}
\eeqa

Applying the union bound to \eqref{eq:hoeff}, the four inequalities in \eqref{eq:nonpar:prob_conc} hold for each of the $|\calX|^{2l+1} $ realizations for each ordered pair of processes $\{ (\X_i, \X_j) \}_{i,j\in [m]}$ with probability $\rho$, given in \eqref{eq:def:rho}.

We next find the value of $\epsilon$ that corresponds to the event $B_{\delta}$.  For simplicity, denote $\{ X_{j}^{l}, X_{i}^{l+1}\}$ by $\buZ$.  We want a concentration on $|\widehat{H}(\buZ) - H(\buZ)|$. Using \eqref{eq:nonpar:prob_conc}, the $L_1$ norm evaluates as 
\beqa
  \| \widehat{P}_{\buZ} - P_{\buZ} \|_{1}  \hspace{-0.3cm} &:=& \hspace{-0.3cm} \sum_{\buz %\in \calX^{2l+1} 
	} \hspace{-0.0cm}  | \widehat{P}_{\buZ}(\buz) - P_{\buZ}(\buz) | \leq |\calX|^{2l+1} \epsilon, \label{eq:L1probdistdef}
\eeqa 

Using an $L_1$ bound on entropy, if $\| \widehat{P}_{\buZ} - P_{\buZ} \|_{1} \leq \frac{1}{2}$, then
\beqa
|\widehat{H}(\buZ) - H(\buZ)|  \leq - \| \widehat{P}_{\buZ} - P_{\buZ} \|_{1}
  \log \frac{\| \widehat{P}_{\buZ} - P_{\buZ} \|_{1} }{|\calX|^{2l+1}}. \label{eq:L1bndentr}
\eeqa

The bound is of the form $- b \log \frac{b}{c}$, which is concave in $b$ and maximized at $b = c/e$.  With $\epsilon \leq 1/e$, the upper bound in \eqref{eq:L1probdistdef}, $|\calX|^{2l+1} \epsilon$, is in the interval $(0, |\calX|^{2l+1} / e]$ where the bound \eqref{eq:L1bndentr} is increasing.  Thus, \eqref{eq:L1bndentr} can be bounded using \eqref{eq:L1probdistdef} \vspace{-.5cm}
\beqa
\hspace{-0.0cm} |\widehat{H}(\buZ) - H(\buZ)| \!\!\!&\leq&\!\!\!- |\calX|^{2l+1} \epsilon  \log \frac{|\calX|^{2l+1} \epsilon }{|\calX|^{2l+1}} \nonumber \\
&=&\!\!\! - |\calX|^{2l+1} \epsilon  \log \epsilon. \label{eq:prf:nonpar:f2}
\eeqa

Note that the directed information \eqref{eq:didef:3} decomposes into a linear combination of entropies,
\beqa && \hspace{-1.5cm}\I(X_{i, l+1} ; X_{j}^l | X_{i}^{l} ) =  H (X_i^{l+1} ) - H (X_i^l) - H (X_{i}^{l+1} , X_{j}^{ l})  + H (X_i^l, X_j^l). \label{eq:DIexpandentr}
\eeqa  Applying the triangle inequality to \eqref{eq:DIexpandentr} with \eqref{eq:prf:nonpar:f2} gives that for all $m(m-1)$ ordered pairs $(i, j)$,
$
 \!\! |\widehat{\I}(\X_{j} \to \X_{i}) - \I(\X_{j} \to \X_{i})|  \leq - 4 |\calX|^{2l+1} \epsilon  \log \epsilon.
$

Setting $\delta = - 4 |\calX|^{2l+1} \epsilon  \log \epsilon$ would conclude the proof.  However, to obtain an analytic expression for how $\epsilon$ depends on $\delta$, we bound $\epsilon  \log \epsilon$ with a polynomial expression.  The function $- \epsilon \log \epsilon$ has a maximum value of $1/e$ on the interval $\epsilon \in (0,1)$  attained at $\epsilon =  1/e$.  For $0 < a < 1$,
$-\epsilon \log \epsilon = \frac{1}{a}  (- \epsilon^a \log \epsilon^a) \epsilon^{1-a} \leq \frac{1}{ae} \epsilon^{1-a}. $  For large $\epsilon$, the bound with larger $a$ is tighter; for small $\epsilon$, the bound with small $a$ is tighter.  For all $0<a<1$ and all $(i,j)$,
$|\widehat{\I}(\X_{j} \to \X_{i}) - \I(\X_{j} \to \X_{i})|  \leq\frac{4 |\calX|^{2l+1}}{ae} \epsilon^{1-a}.
$  Setting $\epsilon=\left(\frac{ae\delta}{4 |\calX|^{2l+1}} \right)^{\frac{1}{1-a}} $ finishes the proof that $\prob{ B_{\delta} } \geq 1 - \rho$.

Note that for a fixed probability of error $\rho$ \eqref{eq:def:rho}, fixed $m$, and sufficiently large $n\epsilon$, that as $n$ increases, $\epsilon$ decays as $n^{-1/2}$ which implies that $\delta = \mathcal{O}(n^{-1/2 + \epsilon'})$ for all $\epsilon'>0$.  Alternatively, if $m$ is increasing, to maintain a fixed probability of error $\rho$ with a fixed $\delta$, $n$ needs to increase as $\log m$. \end{proof}

\section{Proof of Lemma~\ref{lem:asym_norm_theta}} \label{app:prf:lem:asym_norm_theta}

The main result follows from \cite{ling2010general}.  Several conditions are first checked here.  Define the Fisher information matrix \beqas G_t(\utheta ') &=& \left[  \frac{\partial L_t(\utheta) }{\partial \theta_{q_1} } \frac{\partial L_t(\utheta) }{\partial \theta_{q_2} }  \bigg|_{\utheta=\utheta'} \right]_{1 \leq q_1,q_2 \leq Q} .
\eeqas

\begin{lemma} \label{lem:est:par:HisI}
Under Assumption~\ref{asmp:param}, for a finite alphabet $\calX$, $\E[G_t(\utheta )] = \E[A_t(\utheta )]$.
\end{lemma}
\begin{proof} Consider any $1\leq q_i, q_j \leq Q$.  We show the $(q_i, q_j)$th elements %of each matrix 
are identical.  First consider $\E[G_t(\utheta' )_{(q_i,q_j)}]$.
\beqa
\E[G_t(\utheta ')_{(q_i,q_j)}] \!\!\!\!&=&\!\!\!\! \E\left[  \frac{\partial L_t(\utheta) }{\partial \theta_{q_1} } \frac{\partial L_t(\utheta) }{\partial \theta_{q_2} }  \bigg|_{\utheta=\utheta'} \right]  \nonumber \\
&=&\!\!\!\! \E\left[  \frac{\partial \log P_{\allX_t | \allX^{t-1}_{t-l};\utheta}(\allX_t | \allX^{t-1}_{t-l}) }{\partial \theta_{q_1} } \frac{\partial \log P_{\allX_t | \allX^{t-1}_{t-l};\utheta}(\allX_t | \allX^{t-1}_{t-l}) }{\partial \theta_{q_2} } \ \Bigg|_{\utheta=\utheta'} \right] \nonumber \\
&=&\!\!\!\! \E\left[  P_{\allX_t | \allX^{t-1}_{t-l};\utheta}(\allX_t | \allX^{t-1}_{t-l})^{-2}   \frac{\partial  P_{\allX_t | \allX^{t-1}_{t-l};\utheta}(\allX_t | \allX^{t-1}_{t-l}) }{\partial \theta_{q_1} } \frac{\partial  P_{\allX_t | \allX^{t-1}_{t-l};\utheta}(\allX_t | \allX^{t-1}_{t-l}) }{\partial \theta_{q_2}}  \Bigg|_{\utheta=\utheta'}  \right] \label{eq:prf:HisI:a3}
\eeqa where \eqref{eq:prf:HisI:a3} differentiates both logarithms and collects common terms. %
Now consider $\E[A_t(\utheta' )_{(q_i,q_j)}]$,
\beqa
\E[A_t(\utheta ')_{(q_i,q_j)}] \!\!\!\! &=&\!\!\!\! \E \left[  - \frac{\partial^2 L_t(\utheta) }{\partial \theta_{q_1} \partial \theta_{q_2} } \Bigg|_{\utheta=\utheta'} \right] \nonumber \\
&=&\!\!\!\! \E \left[  - (-1) P_{\allX_t | \allX^{t-1}_{t-l};\utheta}(\allX_t | \allX^{t-1}_{t-l})^{-2}   \frac{\partial P_{\allX_t | \allX^{t-1}_{t-l};\utheta}(\allX_t | \allX^{t-1}_{t-l})  }{\partial \theta_{q_1}  }    \frac{\partial P_{\allX_t | \allX^{t-1}_{t-l};\utheta}(\allX_t | \allX^{t-1}_{t-l}) }{ \partial \theta_{q_2} }    \right. \nonumber \\
&& \left. - P_{\allX_t | \allX^{t-1}_{t-l};\utheta}(\allX_t | \allX^{t-1}_{t-l})^{-1} \frac{\partial^2 P_{\allX_t | \allX^{t-1}_{t-l};\utheta}(\allX_t | \allX^{t-1}_{t-l}) }{ \partial \theta_{q_1}\partial \theta_{q_2} }   \Bigg|_{\utheta=\utheta'}  \right]  \label{eq:prfGisA:Aval2} \\
&=&\!\!\!\! \E[G_t(\utheta ')_{(q_i,q_j)}] - \E \left[P_{\allX_t | \allX^{t-1}_{t-l};\utheta}(\allX_t | \allX^{t-1}_{t-l})^{-1} \frac{\partial^2 P_{\allX_t | \allX^{t-1}_{t-l};\utheta}(\allX_t | \allX^{t-1}_{t-l}) }{ \partial \theta_{q_1}\partial \theta_{q_2} }   \Bigg|_{\utheta=\utheta'}  \right]  \label{eq:prf:HisI:5}\vspace{-1cm}
\eeqa
where \eqref{eq:prfGisA:Aval2} differentiates and \eqref{eq:prf:HisI:5} uses \eqref{eq:prf:HisI:a3}.  The second term in \eqref{eq:prf:HisI:5} evaluates as 

\beqa
&& \E \left[P_{\allX_t | \allX^{t-1}_{t-l};\utheta}(\allX_t | \allX^{t-1}_{t-l})^{-1} \frac{\partial^2 P_{\allX_t | \allX^{t-1}_{t-l};\utheta}(\allX_t | \allX^{t-1}_{t-l}) }{ \partial \theta_{q_1}\partial \theta_{q_2} }   \Bigg|_{\utheta=\utheta'}  \right] \nonumber \\
&& \hspace{0.5cm} =  \sum_{\allx^t }  \prod_{t'=1}^t P_{\allX_{t'} | \allX^{t'-1}_{t'-l};\utheta}(\allx_{t'} | \allx^{t'-1}_{t'-l})  
\left[ P_{\allX_t | \allX^{t-1}_{t-l};\utheta}(\allx_t | \allx^{t-1}_{t-l})^{-1} 
                  \frac{\partial^2 P_{\allX_t | \allX^{t-1}_{t-l};\utheta}(\allx_t | \allx^{t-1}_{t-l}) }{ \partial \theta_{q_1}\partial \theta_{q_2} }  
  \right] \Bigg|_{\utheta=\utheta'} 
\label{eq:prf:HisI:b1} \\
&& \hspace{0.5cm} =  \sum_{\allx^{t-1} } \prod_{t'=1}^{t-1} P_{\allX_{t'} | \allX^{t'-1}_{t'-l};\utheta}(\allx_{t'} | \allx^{t'-1}_{t'-l})   \left[  \sum_{\allx_t } %
\frac{\partial^2 P_{\allX_t | \allX^{t-1}_{t-l};\utheta}(\allx_t | \allx^{t-1}_{t-l}) }{ \partial \theta_{q_1}\partial \theta_{q_2} }   				 \right] \Bigg|_{\utheta=\utheta'}  \label{eq:prf:HisI:b3} \\
&& \hspace{0.5cm} =  \sum_{\allx^{t-1} } %
\prod_{t'=1}^{t-1} P_{\allX_{t'} | \allX^{t'-1}_{t'-l};\utheta}(\allx_{t'} | \allx^{t'-1}_{t'-l})   \left[      
				 \frac{\partial^2  }{ \partial \theta_{q_1}\partial \theta_{q_2} }  \sum_{\allx_t } %\sum_{\allx_t \in \calX}  
				P_{\allX_t | \allX^{t-1}_{t-l};\utheta}(\allx_t | \allx^{t-1}_{t-l})
				 \right] \Bigg|_{\utheta=\utheta'} \label{eq:prf:HisI:b4} \\
&& \hspace{0.5cm} =  \sum_{\allx^{t-1}} %
\prod_{t'=1}^{t-1} P_{\allX_{t'} | \allX^{t'-1}_{t'-l};\utheta}(\allx_{t'} | \allx^{t'-1}_{t'-l})   \left[      
				 \frac{\partial^2  }{ \partial \theta_{q_1}\partial \theta_{q_2} }  1
				  \right] \Bigg|_{\utheta=\utheta'}  = 0 \label{eq:prf:HisI:b5} 
\eeqa
where \eqref{eq:prf:HisI:b1} uses the chain rule for $P_{\allX^t}$, \eqref{eq:prf:HisI:b3} cancels the $P_{\allX_t | \allX^{t-1}_{t-l};\utheta}(\allx_t | \allx^{t-1}_{t-l})$ terms and moves the summation inside, and \eqref{eq:prf:HisI:b4} uses linearity of differentiation.  Plugging \eqref{eq:prf:HisI:b5} back into \eqref{eq:prf:HisI:5} finishes the proof.
\end{proof}
\begin{remark} For non-finite alphabets, and an extra condition is necessary for Lemma~\ref{lem:est:par:HisI}, specifically for the analog of \eqref{eq:prf:HisI:b4}. Lemma~\ref{lem:est:par:HisI} is not necessary for Lemma~\ref{lem:asym_norm_theta}, but holds in this setting and simplifies the presentation.
\end{remark}

In the general case,
$\Sigma = \left[ \E[ A_t(\utheta^*)] \right]^{-1} \E\left[ G_t(\utheta^*) \right] \left[ \E[ A_t(\utheta^*)]\right]^{-1}. $
By Lemma~\ref{lem:est:par:HisI}, $\Sigma$ simplifies to \eqref{eq:def:param_cov_mat}.

In addition to Assumptions~\ref{def:spat_cond_ind},~\ref{assump:basic},~and~\ref{asmp:param}, the following are necessary to apply \cite{ling2010general} for asymptotic normality:  (i) $\E[ \sup_{\theta \in \Theta} [L_t(\utheta) ]] < \infty$ and   (ii) the vector $ [\frac{\partial L_t(\utheta) }{\partial \theta_{q} }\big|_{\utheta=\utheta^*}]_{1\leq q\leq Q}$ is a martingale difference in terms of $\allX^{t-1}$. 

\begin{lemma} For $P_{\allX}$ with finite alphabet $\calX$, under Assumptions~\ref{def:spat_cond_ind}~and~\ref{asmp:param}, (i) and (ii) hold.
\end{lemma}
\begin{proof} (i) Since $\calX$ is discrete, $P_{\allX_t | \allX^{t-1}_{t-l};\utheta}(\allx_t | \allx^{t-1}_{t-l})\leq1$ for all $\allx^t \in \calX^{mt}$ and $\theta\in\Theta$, so $L_t(\utheta) \leq 0$ and therefore $\E[ \sup_{\theta \in \Theta} [L_t(\utheta) ]] < \infty$.  

(ii) The vector $ [\frac{\partial L_t(\utheta) }{\partial \theta_{q} }\big|_{\utheta=\utheta^*}]_{1\leq q\leq Q}$ forms a martingale difference sequence if for all $1\leq q \leq Q$, (a) $ \E [\frac{\partial L_t(\utheta) }{\partial \theta_{q} } ]\big|_{\utheta=\utheta^*} < \infty $ and (b) $ \E [\frac{\partial L_t(\utheta) }{\partial \theta_{q} } | \allX_{t-1}]\big|_{\utheta=\utheta^*} = 0$, a.s. 

First consider (a),   \beqa
\E [\frac{\partial L_t(\utheta) }{\partial \theta_{q} } ]\big|_{\utheta=\utheta^*} &=& \E [\frac{\partial }{\partial \theta_{q} } \log P_{\allX_t | \allX^{t-1}_{t-l};\utheta}(\allX_t | \allX^{t-1}_{t-l}) ]\big|_{\utheta=\utheta^*} \\
&=& \E [ P_{\allX_t | \allX^{t-1}_{t-l};\utheta}(\allX_t | \allX^{t-1}_{t-l})^{-1} \frac{\partial }{\partial \theta_{q} } P_{\allX_t | \allX^{t-1}_{t-l};\utheta}(\allX_t | \allX^{t-1}_{t-l}) ]\big|_{\utheta=\utheta^*} \label{eq:prf:mrtgle:a2} \\
&=& 0 \label{eq:prf:mrtgle:a3}
\eeqa where \eqref{eq:prf:mrtgle:a2} applies the derivative and \eqref{eq:prf:mrtgle:a3} follows from \eqref{eq:prf:HisI:b1}--\eqref{eq:prf:HisI:b5} for the proof of Lemma~\ref{lem:est:par:HisI}.  For \eqref{eq:prf:HisI:b1}--\eqref{eq:prf:HisI:b5} the derivative is second order but the result holds for first order too.  Part (b) follows by the same proof.  
\end{proof}

\begin{remark}
In \cite{ling2010general}, the conditions for $ G_t$ and $A_t$ are:  (i) $\E[ G_t(\utheta^*)]$ is finite and positive definite, (ii) $\E[ A_t(\utheta^*)]$ is positive definite, and (iii) $\E [ \sup_{\utheta: \|\utheta - \utheta^* \|_2 < \eta}  \|A_t(\utheta)\|_2 ] < \infty$ for some $\eta>0$.  To simplify the presentation, Assumption~\ref{asmp:param} that $\E[ A_t(\utheta^*)]$ is also finite was included.  That ensured (iii) was satisfied and, by Lemma~\ref{lem:est:par:HisI}, (i).
\end{remark}

\section{Proof of Theorem~\ref{thm:par_est_smpl}} \label{app:prf:par_est_smpl}

\begin{proof}    We will upper and lower bound $\P(B_{\delta})$.  If the Jacobian matrix with $(r,q)$-th entry $\frac{\partial g_r}{\partial \theta_q}\big|_{\utheta = \utheta^* }$ is singular, take any maximal sized subset of $\{g_r(\utheta)\}_{r=1}^R$ that are linearly independent.  The convergence rate will still hold.  

Let $1_R$ denote a column vector of ones and $\I_R$ the $R$-dimensional identity matrix.  Let $Q$ and $\Lambda$ denote the orthonormal eigenvector and (diagonal) eigenvalue matrices of $\Sigma'$ respectively, so $\Sigma' Q = Q \Lambda$.  Let $A :=\Sigma'^{-1}$ denote the inverse so $A^{\frac{1}{2}} = Q^{-1} (\Lambda^{-1})^{\frac{1}{2}} Q$, since $A^{\frac{1}{2}} A^{\frac{1}{2}}  = A$.  Let $\vec{g}$ denote the column vector of directed information estimate errors $ \vec{g} := \left[(g_1(\wuthetan) - g_1(\utheta^*)), \dots,(g_{R}(\wuthetan) - g_{R}(\utheta^*))  \right]^\top.$

With this notation and letting ``$\leq$'' in the following to denote element-wise comparison for vectors, 
\beqa 
\P(B_{\delta})  &=& \P( - \delta 1_R \leq \vec{g} \leq \delta 1_R  ) \nonumber \\
&=& \P( - \delta \sqrt{n} A^{\frac{1}{2}} 1_R \leq  \sqrt{n} A^{\frac{1}{2}} \vec{g} \leq \delta \sqrt{n} A^{\frac{1}{2}} 1_R  ), \label{eq:grphcomp:1b} 
\eeqa where \eqref{eq:grphcomp:1b} multiplies through by $ \sqrt{n} A^{\frac{1}{2}}$.  By Theorem~5.4.2 of \cite{lehmann1999elements}, $\sqrt{n}A^{\frac{1}{2}} \vec{g} \sim \mathcal{N}(0,\I_R).$  This is analogous to normalizing one-dimensional Gaussian variables.  The interval limits in \eqref{eq:grphcomp:1b} imply the rate is $\delta = \mathcal{O}(n^{-1/2})$.

Next let $\lambda_\mathrm{min}$ denote the minimum eigenvalue of $\Sigma'$ (and thus in $\Lambda$).   Let $\Sigma_\mathrm{min} := \lambda_\mathrm{min} \I_R$ be a diagonal covariance matrix.  Similar to the above, let $A_\mathrm{min} := \Sigma_\mathrm{min}^{-1}$ so $A_\mathrm{min}^{\frac{1}{2}} = \frac{1}{\sqrt{\lambda_\mathrm{min}}} \I_R$.  Let $\vec{h}$ be an $R$-dimensional multivariate normal vector such that $ \sqrt{n} \vec{h} \sim \mathcal{N}(0,\Sigma_\mathrm{min})$.  Then 
\beqa
\P( - \delta 1_R \leq \vec{h} \leq \delta 1_R  ) 
&=& \P( - \delta \sqrt{n}A_\mathrm{min}^{\frac{1}{2}} 1_R \leq \sqrt{n} A_\mathrm{min}^{\frac{1}{2}} \vec{h} \leq \delta \sqrt{n} A_\mathrm{min}^{\frac{1}{2}} 1_R  ) \label{eq:grphcomp:3a} \\
&=& \P( - \frac{\sqrt{n}\delta 1_R}{\sqrt{ \lambda_\mathrm{min}}} \leq \frac{\sqrt{n} h_1 }{\sqrt{ \lambda_\mathrm{min}}}  \leq \frac{\sqrt{n}\delta 1_R }{\sqrt{ \lambda_\mathrm{min}}}  )^R \label{eq:grphcomp:3b} \\
&=& \left[ \mathrm{erf}\left( \delta\frac{\sqrt{n}}{\sqrt{2 \lambda_\mathrm{min}}} \right) \right]^{R}, \label{eq:grphcomp:3c} 
\eeqa where \eqref{eq:grphcomp:3a} multiplies by normalization factors, \eqref{eq:grphcomp:3b} uses that the normalized elements of $\vec{h}$ are i.i.d., and \eqref{eq:grphcomp:3c} uses the ``error'' function $\mathrm{erf}(x) = \frac{2}{\sqrt{\pi}} \int_0^x e^{-t^2} \, \mathrm{d}t$.

Now compare the volumes in \eqref{eq:grphcomp:1b} and \eqref{eq:grphcomp:3a}.  The respective probabilities integrate the likelihood of standard multivariate normal random vectors over the (rectangular) volume.  Note that $A^{\frac{1}{2}} 1_R$ specifies the corner in the positive orthant for the volume in \eqref{eq:grphcomp:1b}.  
\beqa 
A^{\frac{1}{2}} 1_R &=&  Q^{-1} (\Lambda^{-1})^{\frac{1}{2}} Q 1_R \label{eq:grphcomp:5a} \\
&\leq& \frac{1}{\sqrt{\lambda_\mathrm{min}}} 1_R \label{eq:grphcomp:5b} \\
&=& \frac{1}{\sqrt{\lambda_\mathrm{min}}} \I_R 1_R= A_\mathrm{min}^{\frac{1}{2}}  1_R \label{eq:grphcomp:5c}
\eeqa where \eqref{eq:grphcomp:5a} is by construction, \eqref{eq:grphcomp:5b} uses that $Q$ and $Q^{-1}$ are orthonormal and that the largest eigenvalue of $A^{\frac{1}{2}}$ is $1/\sqrt{\lambda_\mathrm{min}}$, and \eqref{eq:grphcomp:5c} is by construction.  Thus the volume in \eqref{eq:grphcomp:3a} contains that in \eqref{eq:grphcomp:1b}, so
\beqa
\P(B_{\delta})  &=& \P( - \delta \sqrt{n} A^{\frac{1}{2}} 1_R \leq  \sqrt{n} A^{\frac{1}{2}} \vec{g} \leq \delta \sqrt{n} A^{\frac{1}{2}} 1_R  ) \nonumber \\
&\leq& \P( - \delta \sqrt{n}A_\mathrm{min}^{\frac{1}{2}} 1_R \leq \sqrt{n} A_\mathrm{min}^{\frac{1}{2}} \vec{h} \leq \delta \sqrt{n} A_\mathrm{min}^{\frac{1}{2}} 1_R  ) \label{eq:grphcomp:4b}.
\eeqa

Combining \eqref{eq:grphcomp:3c} and \eqref{eq:grphcomp:4b}, and using the first two terms of the asymptotic expansion of $\mathrm{erf}(x)$, %
\beqa
 \P(B_{\delta})  \leq \left[ \mathrm{erf}\left( \delta\frac{\sqrt{n}}{\sqrt{2 \lambda_\mathrm{min}}} \right) \right]^{R} \!\!\!\!
&\approx& \!\!\!\! \left[1 - \frac{c_1}{\sqrt{n}} e^{-c_2 n}   \right]^{R} \nonumber \\
\!\!\!\! &\approx&\!\!\!\!  1 - \frac{m^2 c_1}{\sqrt{n}} e^{-c_2 n} \label{eq:grphcomp:2}\\
\!\!\!\! &=&\!\!\!\!  1 - c_1e^{ 2 \log(m) -c_2 n - \frac{1}{2} \log n},\label{eq:grphcomp:3}
\eeqa for appropriate constants $c_1$ and $c_2$.  Eq.~\eqref{eq:grphcomp:2} uses the first two terms in the binomial expansion and that $R=m(m-1)$, and \eqref{eq:grphcomp:3} moves the coefficients to the exponent.  

Repeat the above steps using the maximum eigenvalue of $\Sigma'$, $\lambda_\mathrm{max}$, with appropriate $A_\mathrm{max}$ to lower bound $\P(B_{\delta})$.  Eq.~\eqref{eq:grphcomp:3} will have the same form.  These bounds together  imply that for fixed $\delta$, $n$ must grow as $\log(m)$.
\end{proof}

\section{Proof of Theorem~\ref{thm:robustbnddegree}} \label{app:prf:robustbnddegree}

\begin{proof} \label{prf:robustbnddegree}
We first show that the parent sets of $\Phatrob$ can be identified independently.
\beqa
 \min_{\Phat_{\allX} \in \calPhat_K} \hspace{0.1cm} \max_{s \in \calS}  \hspace{0.1cm}  \left[ W(\Phat^*_{\allX}(s),s) - W(\Phat_{\allX},s) \right] %\nonumber \\
&=& \min_{\Phat_{\allX} \in \calPhat_K} \hspace{0.1cm} \max_{s \in \calS} \left[ \sum_{i =1}^m \widehat{\I}_s(\allX_{\Ahat^*(i)} \to \X_{i}) - \widehat{\I}_s(\allX_{\Ahat(i)} \to \X_{i})  \right] \label{eq:prf:rob:1} \\
&\leq& \min_{\Phat_{\allX} \in \calPhat_K} \hspace{0.1cm}\sum_{i =1}^m  \max_{s \in \calS} \left[  \widehat{\I}_s(\allX_{\Ahat^*(i)} \to \X_{i}) -  \widehat{\I}_s(\allX_{\Ahat(i)} \to \X_{i})  \right] \label{eq:prf:rob:2} \\
&=& \sum_{i =1}^m  \min_{\Ahat(i) } \hspace{0.1cm} \max_{s \in \calS} \left[  \widehat{\I}_s(\allX_{\Ahat^*(i)} \to \X_{i}) -  \widehat{\I}_s(\allX_{\Ahat(i)} \to \X_{i})  \right] \label{eq:prf:rob:3a}
\eeqa
where $\{\Ahat^*(i)\}_{i=1}^m$ in \eqref{eq:prf:rob:1} are the parent sets in $\Phat^*_{\allX}(s)$ for the maximizing $s\in \calS$, \eqref{eq:prf:rob:2} brings the $\max$ inside, and \eqref{eq:prf:rob:3a} uses Theorem~\ref{thm:apx:gen_K_apx} that for any particular scenario $s \in \calS$, parent sets can be found independently.

If \eqref{eq:prf:rob:2} holds with equality, then the parent sets of $\Phatrob$ can be identified independently.  The first ${m-1 \choose K}$ coordinates of $\calS$ correspond to estimates $\{ \widehat{\I}_s(\allX_{\Ahat(1)} \to \X_{1}) \}$ for the ${m-1 \choose K}$ choices of $\Ahat(1)$.  The next ${m-1 \choose K}$ coordinates correspond to estimates $\{ \widehat{\I}_s(\allX_{\Ahat(2)} \to \X_{2}) \}$, etc.  Thus, for a given $i\in[m]$, the maximization in \eqref{eq:prf:rob:2} is only over the $i$th set of  ${m-1 \choose K}$ coordinates.  Since $\calS$ is rectangular, the values of the other coordinates are irrelevant.  Since each of the $m$ terms in the sum in \eqref{eq:prf:rob:2} are optimizing over disjoint sets of coordinates, \eqref{eq:prf:rob:2} holds with equality.

We next show that Algorithm~5 returns the individually most robust parent sets. Consider identifying robust parents for $\X_i$.  Using the notation $B_j$ in Algorithm~5, the worst case regret for parent set $B_j$ is
\beqa
\!\!\!\!\!\!\!\!\!\!\!R(B_j) \!\!\!\!&:=&\!\!\!\! \max_{s\in \calS}  \max_{j' \neq j} \left\{\!0, \widehat{\I}_s(\allX_{B_{j'}} \!\!\! \to\! \X_{i}) - \widehat{\I}_s(\allX_{B_j} \!\!\! \to\! \X_{i}) \! \right\}  \label{eq:prf:rob:5}\\
\!\!\!\!\!\!\!\!\!\!\!&\leq&\!\!\!\! \max \left\{\! 0, \left[ \max_{s\in \calS} \max_{j\neq j'} \widehat{\I}_s(\allX_{B_{j'}} \to \X_{i}) \right] -\left[ \min_{s\in \calS} \widehat{\I}_s(\allX_{B_j} \to \X_{i})\right] \!\right\} \label{eq:prf:rob:3} \\
\!\!\!\!\!\!\!\!\!\!\!&=&\!\!\!\!  \max\{0, \max_{j'\neq j} H(j') - L(B_j)\}\label{eq:prf:rob:4}.
\eeqa  The zero in \eqref{eq:prf:rob:5} is for the case that there is a $j$ such that $L(B_j) > H(j')$ for all $j' \neq j$.  Eq. \eqref{eq:prf:rob:3} applies the $\max$ to individual terms  and \eqref{eq:prf:rob:4} follows from lines~6~and~8 in Algorithm~5. Since $\calS$ is rectangular, for any set of values $\{\widehat{\I}_s(\allX_{B_j} \to \X_{i}) \in \calIhat(\allX_{B_j} \to \X_{i})\}_{j = 1}^{{m-1 \choose K}}$, there exist $s\in \calS$ with those values.  Thus, \eqref{eq:prf:rob:3} holds with equality.

Note that by lines~9~and~11 in Algorithm~5, 
\begin{equation}\label{eq:prf:rbst:a1}
R(B_j) = 
\begin{cases}
H(B_{j_1}) - L(B_j) & \text{if } j\neq j_1, \\
\max \{0, H(B_{j_3}) - L(B_{j_1})\} & \text{if } j = j_1.\\
\end{cases} 
\end{equation}
If $j_1 = j_2$, then for all $j \neq j_1$, $H(B_{j_1}) > H(B_j)$ and $L(B_{j_1}) \geq L(B_j)$, so by \eqref{eq:prf:rbst:a1},
\beqas
R(B_{j_1}) &=& H(B_{j_3}) - L(B_{j_1}) \\
%&\leq& H(B_{j_3}) - L(B_j) \\
&\leq& H(B_{j_1}) - L(B_j) %\\
%&=& R(B_j).
=R(B_j).
\eeqas Thus, if $j_1=j_2$, then $B_{j_1}$ is the most robust parent set. Next consider the case that $j_1 \neq j_2$.  By \eqref{eq:prf:rbst:a1},
\beqas
\min_{j \neq j_1} R(B_j) &=& \min_{j \neq j_1} \left[ H(B_{j_1}) - L(B_j) \right] \\
&=&  H(B_{j_1}) - \max_{j \neq j_1}L(B_j) \\
&=&  H(B_{j_1}) - L(B_{j_2}) %\\
%&=& R(B_{j_2}).
= R(B_{j_2}).
\eeqas  Thus, if $j_1 \neq j_2$, then either $B_{j_1}$ or $B_{j_2}$ would be most robust.
The parent set $B_{j_1}$ is selected if $j_1 = j_2$ or if
\beqa 
R(B_{j_1}) &\leq& R(B_{j_2})  \nonumber \\
 H(B_{j_3}) - L(B_{j_1}) &\leq& H(B_{j_1}) - L(B_{j_2}) \label{eq:prf:robind:8} \\
  H(B_{j_3}) + L(B_{j_2}) &\leq& H(B_{j_1}) + L(B_{j_1})  \label{eq:prf:robind:7}\\
    \frac{1}{2} (H(B_{j_3}) + L(B_{j_2})) &\leq& M(B_{j_1}). \label{eq:prf:robind:9}
\eeqa Eq.~\eqref{eq:prf:robind:8} uses \eqref{eq:prf:rbst:a1}, \eqref{eq:prf:robind:7} adds $L(B_{j_1})+ L(B_{j_2})$ to both sides, and \eqref{eq:prf:robind:9} uses  $M(B_j) = \frac{1}{2}(H(B_j) + L(B_j))$. This result shows that for each node $\X_i$, Algorithm~5 will return the individually most robust parents and thus $\Phatrob$.  
 \end{proof}

% use section* for acknowledgment
\section*{Acknowledgment}
The authors would like to thank a reviewer for bringing to their attention an early draft of \cite{eichler2012graphical}.

\bibliographystyle{IEEEtran}
\bibliography{DI_journalrefs}
\end{document}